\newcommand{\onlyFull}[1]{#1}
\newcommand{\onlyConference}[1]{}

\documentclass[numberwithinsect,a4paper,UKenglish,cleveref, autoref, thm-restate]{lipics-v2021}

\usepackage{hyperref} 
\usepackage{amsmath, amssymb, amscd, amsthm, amsfonts}
\newtheorem*{lemma*}{Lemma} %

\usepackage{algorithm,algorithmicx,algpseudocode}
\usepackage{tikzsymbols}
\usepackage{pifont}
\usepackage[percent]{overpic}
\usepackage{lineno}

\usepackage[noadjust,compress]{cite}
\usepackage{thm-restate}

\usepackage{mathtools}

\usepackage{qtree}
\usepackage[linguistics]{forest}

\usepackage{xcolor}

\usepackage{etoolbox}
\usepackage{color,colortbl}
 \usepackage{array}

\usepackage{bbm}
\usepackage{bm}
\usepackage{comment}
\usepackage{enumerate}

\usepackage{xspace}
\usepackage{rotating}
\usepackage{xifthen}
\usepackage{url}
\usepackage{csquotes}
\usepackage{graphicx}
\usepackage{wrapfig}
\usepackage{multirow}
\usepackage[binary-units=true]{siunitx}
\usepackage{stmaryrd}

\usepackage{float}
\usepackage[utf8]{inputenc} 
\usepackage{dsfont}

\usepackage[normalem]{ulem}
\usepackage{pgf}
\usepackage{tikz}
\usetikzlibrary{trees,decorations,arrows,automata,shadows,positioning,plotmarks,backgrounds,shapes}
\usetikzlibrary{calc,matrix,fit,petri,decorations.markings,decorations.pathmorphing,patterns,intersections,decorations.text}
\usetikzlibrary{arrows.meta, positioning}
\usepackage{pgfplots}
\usepackage{pgfplotstable}

\tikzstyle{mystate}=[state,inner sep=3pt,minimum size=20pt,line width=0.2mm]
\tikzstyle{fstate}=[state,accepting,inner sep=2pt,minimum size=3pt]
\tikzstyle{istate}=[state,initial,inner sep=2pt,minimum size=3pt]
\tikzstyle{mysquare}=[inner sep=3pt,minimum size=15pt,line width=0.2mm]
\tikzstyle{fmysquare}=[inner sep=3pt,minimum size=15pt,line width=0.5mm,accepting]

\usepackage{longtable}
\usepackage{caption}
\usepackage[position=b]{subcaption}

\usepackage{tabularx}
\usepackage{booktabs}
\usepackage{complexity}

\usepackage{etoc}
\etocsettocdepth{3}

\usepackage{xfrac}
\usepackage{faktor}
\usepackage{textcomp}

\usepackage{newfile}
\usepackage{totcount}

\regtotcounter{@todonotes@numberoftodonotes}

\providecommand{\EXPTIME}{\mathsf{EXP}}
\providecommand{\twoEXPTIME}{\mathsf{2EXP}}

\providecommand{\coNP}{\mathsf{coNP}}
\providecommand{\PTIME}{\mathsf{P}}

\crefname{theorem}{Thm.}{Thms.}
\Crefname{theorem}{Theorem}{Theorems}

\crefname{figure}{Fig.}{Figs.}
\Crefname{figure}{Figure}{Figures}

\crefname{observation}{Obs.}{Obs.}
\Crefname{observation}{Observation}{Observations}

\newclass{\logspace}{logspace}

\newcommand{\myparagraph}[1]{\vspace{-0.3cm}\subparagraph*{#1}}

\newcommand{\Obj}{\textsf{Obj}}
\newcommand{\RIO}{\textsf{RIO}}
\newcommand{\Minsky}{\mathcal{S}}

\newcommand{\calP}{\mathcal{P}}
\newcommand{\bzero}{\bm{0}}
\newcommand{\bx}{\bm{x}}
\newcommand{\bu}{\bm{u}}

\renewcommand{\theenumi}{\alph{enumi}}

\newcommand{\N}{\mathbb{N}}
\newcommand{\Mon}{\mathbb{M}}

\newcommand{\Z}{\mathbb{Z}}

\newcommand{\Vdim}{\mathsf{Vdim}}

\mathchardef\mhyphen="2D
\newcommand{\Gr}{\mathcal{G}}
\newcommand{\gr}{\mathtt{G}}
\newcommand{\VASS}{\mathsf{VASS}}
\newcommand{\ZVASS}{\Z\mhyphen\mathsf{VASS}}
\newcommand{\Group}{\mathsf{Grp}}
\newcommand{\Pushdown}{\mathsf{PD}}
\newcommand{\FICEG}{\mathsf{FV}}
\newcommand{\UICEG}{\mathsf{UV}}

\newcommand{\RInv}[1]{\mathcal{R}(#1)}

\newcommand{\init}{\mathit{init}}
\newcommand{\trap}{\mathit{trap}}

\newcommand{\Conf}{\mathsf{Conf}}

\usepackage[colorinlistoftodos,prependcaption,textsize=tiny]{todonotes}

\newcommand{\Pthree}{\mathsf{P3}}

\newcommand{\pp}{\mathtt{PuPo}}
\newcommand{\expp}{\mathtt{guess}}

\makeatletter
\newcommand*{\encircled}[1]{\relax\ifmmode\mathpalette\@encircled@math{#1}\else\@encircled{#1}\fi}
\newcommand*{\@encircled@math}[2]{\@encircled{$\m@th#1#2$}}
\newcommand*{\@encircled}[1]{%
  \tikz[baseline,anchor=base]{\node[draw,circle,outer sep=0pt,inner sep=.2ex] {#1};}}
\makeatother

\newtheorem{maintheorem}{Main Theorem}[section]

\crefname{maintheorem}{Main Theorem}{Main Theorems}

\newcommand{\nrgs}{viability games}

\newcommand{\nrg}{viability game}
\newcommand{\Nrgs}{Viability games}

\newcommand{\pdtwoz}[1]{%
\begin{tikzpicture}[every circle/.style={}, scale=#1]
\fill (0,0) circle (2pt) node (a) {}    (1,0) circle (2pt) node (b)  {}   (2,0) circle (2pt) node (c) {};
\draw (b.center) -- (c.center);
\draw (b.center) ++(90:3pt) circle (3pt);
\draw (c.center) ++(90:3pt) circle (3pt);
\node[below=0cm of a] {$a$};
\node[below=0cm of b] {$b_0$};
\node[below=0cm of c] {$b_1$};
\end{tikzpicture}
}
\newcommand{\twoz}[1]{%
\begin{tikzpicture}[every circle/.style={}, scale=#1]
\fill (0,0) circle (2pt) node (a) {}    (1,0) circle (2pt) node (b)  {};
\draw (a.center) -- (b.center);
\draw (a.center) ++(90:3pt) circle (3pt);
\draw (b.center) ++(90:3pt) circle (3pt);
\end{tikzpicture}
}
\newcommand{\twob}[1]{%
\begin{tikzpicture}[every circle/.style={}, scale=#1]
\fill (0,0) circle (2pt) node (a) {}    (0.75,0) circle (2pt) node (b)  {};
\draw (a.center) -- (b.center);
\end{tikzpicture}
}

\newcommand{\twocounters}[1]{%
\begin{tikzpicture}[every circle/.style={}, scale=#1]
\fill (0,0) circle (2pt) node (a) {}    (0.75,0) circle (2pt) node (b)  {};
\draw (a.center) -- (b.center);
\draw[densely dotted] (a.center) ++(90:3pt) circle (3pt);
\draw[densely dotted] (b.center) ++(90:3pt) circle (3pt);
\end{tikzpicture}
}

\newcommand{\btimesz}[1]{%
\begin{tikzpicture}[every circle/.style={}, scale=#1]
\fill (0,0) circle (2pt) node (a) {}    (0.75,0) circle (2pt) node (b)  {};
\draw (a.center) -- (b.center);
\draw (b.center) ++(90:3pt) circle (3pt);
\end{tikzpicture}
}

\title{Infinite-state games with energy objectives beyond counters}%

\authorrunning{I. Sa\u{g}lam, G. Zetzsche} %

\author{Irmak Sa\u{g}lam}{Max Planck Institute for Software Systems (MPI-SWS), Germany}{isaglam@mpi-sws.org}{https://orcid.org/0000-0002-4757-1631}{}

\author{Georg Zetzsche}{Max Planck Institute for Software Systems (MPI-SWS), Germany}{georg@mpi-sws.org}{https://orcid.org/0000-0002-6421-4388}{}

\Copyright{Irmak Sa\u{g}lam and Georg Zetzsche}

\funding{\flag[3cm]{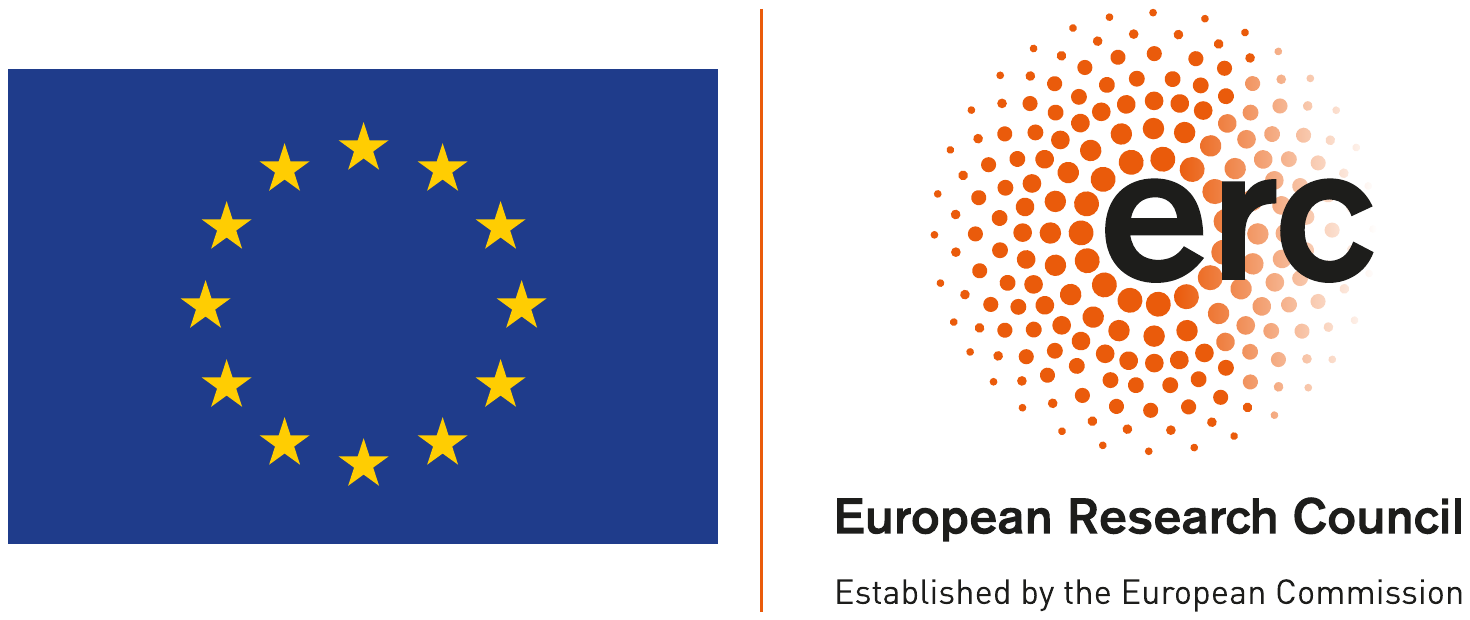}Funded by the European Union (ERC, FINABIS, 101077902). Views and opinions expressed are however those of the author(s) only and do not necessarily reflect those of the European Union or the European Research Council Executive Agency. Neither the European Union nor the granting authority can be held responsible for them.}

\ccsdesc[500]{Theory of computation~Problems, reductions and completeness}
\ccsdesc[300]{Theory of computation~Concurrency}
\ccsdesc[500]{Theory of computation~Formal languages and automata theory}

\keywords{Games on Graphs, Decidability, Complexity, Energy Games, Vector addition systems, Pushdown, Groups, Valence Systems} %
\onlyConference{\relatedversiondetails[cite=full]{Full Version}{http://arxiv.org/abs/2605.05935}}

\nolinenumbers %

\onlyFull{\hideLIPIcs}  %

\onlyConference{
\EventEditors{Sayan Bhattacharya, Danupon Nanongkai, Michael Benedikt, and Gabriele Puppis}
\EventNoEds{4}
\EventLongTitle{53rd International Colloquium on Automata, Languages, and Programming (ICALP 2026)}
\EventShortTitle{ICALP 2026}
\EventAcronym{ICALP}
\EventYear{2026}
\EventDate{July 7--10, 2026}
\EventLocation{Royal Holloway, University of London, Egham, United Kingdom}
\EventLogo{}
\SeriesVolume{374}
\ArticleNo{191}}

\begin{document}
\thispagestyle{empty} %

\maketitle

\begin{abstract}
	In the theory of games on infinite-state arenas, there is a stark contrast
between (i)~recursion-based models such as pushdown systems and extensions on
one hand, and (ii)~counter-based models like vector addition systems with
states (VASS) on the other. For pushdown systems and extensions, there is a
rich variety of decidable and well-understood games, whereas on VASS arenas,
even extremely simple games are undecidable. Here, a VASS is an automaton with
counters that can be incremented and decremented, but not tested for zero.
Crucially, the counters can only assume non-negative values.

However, certain VASS games become decidable when using energy semantics:  An
energy game is played on a system with counters, but the arena includes
configurations with negative counters. The requirement that the counters stay
non-negative is, instead, part of the winning condition of the existential
player.  

We study an analogue of energy semantics---legality of instructions as part
of the winning condition rather than arena---on a broad class of infinite-state
systems, where we call them viability games. Specifically, we study viability
games in the framework of valence systems over graph monoids, where
(undirected, loops allowed) graphs specify various infinite-state systems, such
as pushdowns, VASS counters, integer counters, and combinations thereof. 

In our main results, we provide a complete description of the decidability and
complexity landscape of viability games across valence systems over graph
monoids.  Our results reveal encouraging decidability properties.  For example,
in certain combinations of pushdowns and counters, viability games are
decidable, despite non-termination games being undecidable there.  Moreover,
viability games are even decidable for certain systems where (single-player)
control-state reachability is undecidable.

\end{abstract}

\newpage
\clearpage 
\pagenumbering{arabic} 
\section{Introduction}\label{introduction}
Games on finite-state arenas are a cornerstone of formal methods. This is due
to (i)~their applications to reactive synthesis, where the task is to
automatically construct a finite-state system to specification, but also (ii)~a
wealth of algorithmic methods for analyzing games. 

The situation is much more dire for infinite-state arenas. This is unfortunate, because software systems are typically represented using infinitely many states. Indeed, games on infinite-state arenas are usually only decidable for specialized classes of infinite-state systems: While for pushdown systems (and recursion-based extensions like higher-order pushdown automata or higher-order recursion schemes), a rich variety of games is decidable~\cite[Ch.~12]{GamesOnGraphs}, using counter-based systems like vector addition systems as arenas will almost always lead to undecidability~\cite[Ch.~13]{GamesOnGraphs}.

\myparagraph{Vector addition systems}  For example,
\emph{($d$-dimensional) vector addition systems ($d$-VASS)} are systems with
access to $d$ counters---ranging over the natural numbers---which can be incremented and decremented, but not
tested for zero. VASS are a pillar of verification and
infinite-state research, because of their ability to model concurrent systems.

Unfortunately, with VASS configuration graphs as arenas, almost all games are
undecidable.  For example, games where the objective is to (i)~reach a
particular control-state (or configuration), or even just to (ii)~sustain an
infinite run, are both undecidable, already in dimension
two~\cite[Thm.~13.1]{GamesOnGraphs}. In fact, it is a well-known phenomenon
that branching time logics are usually undecidable for VASS (e.g.\ see
\cite[Section 6]{esparza1994decidability} for a severely restricted branching
logic; and see~\cite[Section 5]{esparza2024decidability} for more results),
implying undecidability for many other kinds of games on VASS.

\myparagraph{Energy games} However, there is a ``VASS-like'' game that is
decidable: Energy games. Here, the configurations consist of a control-state
and counters that can be incremented and decremented.  There are two players,
the existential player ($\exists$), and the universal player ($\forall$). Here,
$\exists$ has the objective of just playing infinitely long, such that the
counter values stay non-negative. The difference to other games on VASS is,
however, that \emph{in the arena}, the configurations are over the
\emph{integers}. The fact that the counter values stay non-negative is
\emph{part of $\exists$'s winning condition}. The game can be viewed as
``VASS-like'', since $\exists$ has to construct a VASS run. Energy games have
received a significant amount of attention, with well-understood decidability
and
complexity~\cite{AbdullaAHMKT14,JurdzinskiLS15,GamesOnGraphs,DBLP:conf/icalp/BrazdilJK10,DBLP:journals/fuin/Chaloupka13,Secondary-ChatterjeeRR12,DBLP:journals/entcs/RaskinSB05,DBLP:conf/mfcs/CourtoisS14,DBLP:conf/lics/ColcombetJLS17,ChatterjeeDHR10}.
Thus, Energy semantics---making non-negativity part of the winning condition
rather than the arena---achieves decidability.

\myparagraph{In search of decidable infinite-state games} In the quest for
decidable games for infinite-state systems, the remarkable case of energy games
thus raises the question: \emph{Does energy semantics also lead to decidability
in broader classes of infinite-state systems?}

\myparagraph{\Nrgs} To this end, we study Energy-like semantics on
infinite-state systems other than VASS, and call them \emph{\nrgs}.  Here, we
take a class of infinite-state systems with instructions like \emph{push, pop,
increment, decrement} and allow these instructions to be applied anytime.
However, the winning condition of the existential player---whose goal it is to
sustain an infinite run---includes that this run is confined to
\emph{valid}/\emph{viable} configurations. Here, being valid/viable means that
the configuration existed in the original system: This is precisely how energy
games are related to games on VASS arenas. The term ``viability'' expresses
that the existential player has to maintain a more general viability condition
rather than specifically keeping energy levels non-negative.

\myparagraph{Valence systems over graph monoids} We formalize \nrgs\ in the
general framework of \emph{valence systems over graph
monoids}~\cite{Zetzsche2016c}. These systems consist of a finite-state
automaton with access to an infinite-state storage mechanism, which is decribed
by a finite undirected graph $\Gamma$. For example, if $\Gamma$ consists of
isolated vertices without self-loops, then valence systems over $\Gamma$ are
pushdown automata. If $\Gamma$ is a clique of $d$ vertices (without
self-loops), then this realizes a $d$-dimensional VASS.  Adding self-loops
would yield a $d$-dimensional integer VASS. The framework can also realize
combinations such as (i)~pushdown
VASS~\cite{DBLP:journals/corr/abs-2504-05015,DBLP:conf/icalp/LerouxST15}, which
feature both a pushdown and VASS counters or (ii)~pushdowns of
counters~\cite{DBLP:conf/icalp/Zetzsche13,DBLP:journals/iandc/Zetzsche21,DBLP:conf/stacs/Zetzsche15},
where each stack entry contains a valuation of a set of counters.

Valence systems over graph monoids have been studied with respect to various
algorithmic problems, such as reachability
problems~\cite{DBLP:journals/iandc/Zetzsche21,DBLP:conf/lics/GanardiMZ22,DBLP:conf/rp/Zetzsche21},
decidable
underapproximations~\cite{DBLP:conf/concur/MeyerMZ18,DBLP:conf/concur/ShettyKZ21},
finite-state abstractions~\cite{DBLP:conf/lics/AnandSSZ24,DBLP:conf/stacs/Zetzsche15,DBLP:conf/mfcs/BuckheisterZ13}, logics on
configuration graphs~\cite{DBLP:conf/lics/DOsualdoMZ16} and even games played
on configuration graphs, in Muskalla's dissertation~\cite[Ch.~19]{Muskalla23}.
Unfortunately, the results on games echo the abovementioned situation on
pushdowns and VASS: Essentially, games with a control-state or configuration
reachability objective are decidable for valence systems over $\Gamma$
precisely when $\Gamma$ realizes a pushdown~\cite[Thms.
19.2.1\&19.2.10]{Muskalla23}.

\myparagraph{Main results} Our main results provide a full
description of the decidability and complexity landscape of \nrgs\ across
valence systems over graph monoids. Specifically, we study the problem of
determining the winner of a \nrg\ in two variants, inherited from
energy games: In the \emph{fixed initial credit} setting, the initial
storage content is given; in the \emph{unknown initial credit} setting, the task
is to decide whether there exists a storage content (i.e.\ counter valuation,
stack content, etc.) from which the existential player wins.

Moreover, we consider these problems in the general setting where graph
$\Gamma$ is drawn from a class $\Gr$ of graphs (that is closed under taking induced
subgraphs).  This captures settings where aspects of the storage
mechanism are part of the input. For example, if $\Gr$ is the class of unlooped
cliques of size $\le d$, then \nrgs\ over $\Gr$ correspond to $d$-dimensional
energy games. If $\Gr$ contains all unlooped cliques, then \nrgs\ over $\Gr$
are energy games over arbitrary dimension (and the dimension is part of the input).

Our results show that for every choice of $\Gr$, the above problems are either
(i)~$\PTIME$-complete, (ii)~$\coNP$-complete, (iii)~$\EXPTIME$-complete,
(iv)~$\twoEXPTIME$-complete, or (v)~undecidable.

\myparagraph{Viability semantics achieves decidability} 
Viability games are indeed decidable for a relatively broad class of systems.
Among valence systems, viability games are decidable essentially for two kinds
of systems: VASS (i.e.~energy games) and what we denote as $\Pushdown(\Group)$.
The latter are an extension of pushdown systems where each stack entry carries
an element from some infinite group $G$.  With the instructions, one can push and
pop letters, but also multiply elements of $G$ onto the
top-most group element on the stack.  Here, pushing creates a fresh entry
initialized to $1\in G$. Popping is only possible if the top-most group element
equals the neutral $1\in G$. Thus, the invalid instructions (to be
avoided by the existential player) are pops with (i)~a mismatching stack
letter or (ii)~the top-most group element being $\ne 1$.

As an example, if $G$ is the group
$\Z^d$ for some $d\ge 1$, then we call these systems $\Pushdown(\ZVASS)$,
because one has a stack whose entries carry integer vectors that can be
incremented and decremented, as in $\Z$-VASS~\cite{DBLP:conf/rp/HaaseH14}. One
can thus regard this as \emph{stacks of $\Z$-VASS configurations}.

Decidability of viability games for $\Pushdown(\Group)$ is noteworthy for two
reasons. First, in the classic setting (i.e.\ arenas of valid configurations)
even the special case of $\Pushdown(\ZVASS)$ is as unfavorable to games
as VASS: even non-termination games are undecidable
\onlyFull{(see~\cref{app:infinite-runs-pdtwoz})}
\onlyConference{(see the full version~\cite{full})}%
, already for $d=2$.  Second, in
$\Pushdown(\Group)$ overall (i.e.~beyond the groups $\Z^d$), even ordinary
(single-player) control-state reachability is undecidable. Thus, even
though it is undecidable whether a valid pop will ever be executed, we can
still decide viability games over $\Pushdown(\Group)$.

\myparagraph{Related work} Energy games have received a significant amount of
attention in recent decades~\cite{Secondary-ChatterjeeRR12,AbdullaAHMKT14,JurdzinskiLS15,GamesOnGraphs,DBLP:conf/icalp/BrazdilJK10,DBLP:journals/fuin/Chaloupka13,DBLP:journals/entcs/RaskinSB05,DBLP:conf/mfcs/CourtoisS14,DBLP:conf/lics/ColcombetJLS17,ChatterjeeDHR10}.  They are equivalent to Z-reachability
games~\cite{DBLP:conf/icalp/BrazdilJK10,DBLP:journals/fuin/Chaloupka13}, B-VASS
games~\cite{DBLP:journals/entcs/RaskinSB05} and can also be viewed as
alternating VASS~\cite{DBLP:conf/mfcs/CourtoisS14}.  As observed by Abdulla, Mayr, Sangnier,
and Sprosten~\cite{DBLP:conf/concur/AbdullaMSS13}, instead of viewing energy
games as VASS games with a modified semantics, one can equivalently view them
as VASS games with the restriction of being \emph{one-sided}, meaning the
universal player cannot modify the counters. In the same way, viability games
correspond to one-sided games over valence system arenas.

Another combination of energy games with infinite-state systems beyond counters
has been studied in \cite{AbdullaAHMKT14}. Their approach is to add counters to
existing infinite-state systems (i.e.\ pushdown and one-counter systems) and
apply energy semantics only to the extra counters:  They show that (i)~pushdown
games with one energy dimension and (ii)~one-counter games with two extra
energy dimensions are undecidable, whereas one-counter games with one energy
dimension are decidable. In our work, in contrast, the energy/viability semantics
applies to the \emph{entire} system.  However, we observe that their
undecidable ``pushdown energy games'' (i.e.\ pushdown systems with extra
counters) reduce to pushdown VASS games under viability semantics \onlyFull{(see
\cref{app:undec-iii-iv}).}\onlyConference{(see the full version~\cite{full}).}

\myparagraph{Outline} In \cref{preview}, we provide a self-contained
description of the two types of infinite-state systems for which our results
yield decidability. After this, we start with preliminaries in
\cref{preliminaries} and define \nrgs\ in \cref{games}. In \cref{main-results},
we state our two main results (\cref{thm:FIC-EG,thm:UIC-EG}) and discuss
implications. Then \cref{proof-overview} gives an overview of the proof and of
novel ingredients.  In
\cref{sec:FICEG,sec:UICEG,sec:undecidability-results,sec:completeness} we then
sketch our proofs. \onlyFull{Remaining proof details are in the
appendix.}\onlyConference{Remaining proof details can be found in the full
version~\cite{full}.}

\section{The two decidable cases}\label{preview}
Within the framework of viability games over valence systems, our main results
identify two types of systems where these games are decidable. Since the entire
framework is somewhat involved, we are using this section to describe these
two system types independently of the framework.

\myparagraph{Model I: Vector addition systems with states} The first type of system is
that of VASS. A \emph{($d$-dimensional) vector addition system with states
(VASS)} is a pair $(Q,\delta,q_\init)$ consisting of a finite set $Q$ of
\emph{states} and a finite set $\delta\subseteq Q\times\Z^d\times Q$ of
\emph{transitions}, and a \emph{final state} $q_\init\in Q$. A
\emph{(valid) configuration}\footnote{The reason we call these \emph{valid} configurations is that in valence systems that correspond to VASS, we will have a more liberal notion of configurations (i.e.\ there are things called configurations that are not valid VASS configurations. However, to keep terminology simple in this section, we also just use the term configuration.} is a pair in $Q\times\N^d$.  A transition
$t=(p,\bu,q)\in\delta$ is \emph{valid} in a configuration $(p,\bx)$ if
$\bu+\bx\in\N^d$, in which case the resulting configuration is $(q,\bx+\bu)$,
and we write $(p,\bx)\xrightarrow{t}(q,\bx+\bu)$. A sequence $w=t_1\cdots t_m$
of transitions is \emph{valid} in a configuration $(p,\bx)$ if there are
configurations $(q_0,\bx_0),\ldots,(q_m,\bx_m)$ such that $(p,\bx)=(q_0,\bx_0)$
and $(q_i,\bx_i)\xrightarrow{t_i}(q_{i+1},\bx_{i+1})$. In this case, we write
$(p,\bx)\xrightarrow{w} (q_m,\bx_m)$.

\myparagraph{Finitely generated groups} The second type of system is an extension
of pushdown systems, where in between letters, one can store elements of a
group. In the framework of valence systems, this group is always a \emph{graph
group} (also called right-angled Artin group), but in fact, our decidability
proof applies to any finitely generated group with a decidable word problem.
(Moreover, the $\EXPTIME$ complexity bound holds as soon as the word problem of
the group is solvable in $\EXPTIME$). Since this more general setting is also
easier to describe, we choose it here. 

We begin with some terminology.  A \emph{group} is a set $G$ together with an
associative binary operation $\cdot\colon G\times G\to G$ such that (i)~there
is a neutral element $1\in G$, i.e.~$g\cdot 1=1\cdot g=g$, and (ii)~every
element $g\in G$ has an inverse $g^{-1}\in G$ such that $g\cdot
g^{-1}=g^{-1}\cdot g=1$.  For a subset $S\subseteq G$ and a sequence $w\in S^*$
over elements of $S$, we write $[w]\in G$ for the group element obtained by
multiplying the elements in $w$.  A \emph{generating set} of $G$ is a subset
$S\subseteq G$ such that every element of $G$ can be written as a product of
elements in $S$ or in $S^{-1}\colon=\{s^{-1}\mid s\in S\}$.  The group $G$ is
\emph{finitely generated} if it has a finite generating set.  If $S$ is a
finite generating set of $G$, then the \emph{word problem} is the following
decision problem: Given a string $w\in(S\cup S^{-1})^*$, decide whether
$[w]=1$. It is well-known that decidability (or complexity, up to logspace
reductions) of the word problem does not depend on the chosen generating
set~\cite[p.~88]{LyndonSchupp1977}. 

\myparagraph{Model II: Group pushdown systems} From now on, we fix a finitely generated
group $G$, with a finite generating set $W$. A \emph{$G$-pushdown system} is a
tuple $\calP=(Q,\Sigma,\delta,q_\init)$, where $Q$ is a finite set of
\emph{states}, $\Sigma$ is its \emph{stack alphabet}, \[ \delta\subseteq
Q\times (\Sigma\cup\bar{\Sigma}\cup G)\times Q \] is its finite set
of \emph{transitions}, and $q_\init\in Q$ is its \emph{initial state}.  Here,
edges labeled $\Sigma$ are \emph{push-transitions}, i.e.~an edge $(p,a,q)$ with
$a\in\Sigma$ pushes an $a$ on the stack. Edges labeled
$\bar{\Sigma}=\{\bar{a}\mid a\in\Sigma\}$ are \emph{pop-transitions}, i.e.\ an
edge $(p,\bar{a},q)$ will pop an $a$ from the stack. 

Formally, the elements of $G$ on edges of $\calP$ are specified by words over
$W\cup W^{-1}$, which is possible since $W$ is a generating set of $G$.

A \emph{(valid) configuration} of $\calP$ is a pair $(p,u)$, where $p\in Q$ and $u$ is
a sequence $g_0 a_1 g_1 \cdots a_m g_m$, where $g_0,\ldots,g_m\in G$ are group
elements, and $a_1,\ldots,a_m\in\Sigma$ are stack letters. In such a
configuration, a transition $t=(p,x,q)$ is \emph{valid} if
\begin{enumerate}
\item $x\in\Sigma$. In this case, $t$ is called a \emph{push transition} and
the resulting configuration is $(q,g_0a_1g_1\cdots a_mg_m x1)$.
\item $x\in\bar{\Sigma}$, say $x=\bar{a}$, and $a_m=a$ and $g_m=1$. In this
case, $t$ is called a \emph{pop transition}, and the resulting configuration is
$(q,g_0a_1g_1\cdots a_{m-1}g_{m-1})$.
\item $x\in G$. In this case, $t$ is called a \emph{group
transition} and the resulting configuration is $(q,g_0a_1g_1\cdots a_m
(g_mx))$. In other words, $x$ is multiplied onto the top-most group element.
\end{enumerate}
In these cases, we write $(p,u)\xrightarrow{t}(q,v)$, where $(q,v)$ is the
resulting configuration. As above, we define $(p,u)\xrightarrow{w}(q,v)$ to
mean that $(q,v)$ can be reached via a sequence $w$ of valid transitions. In this case, the sequence $w$ is called \emph{valid} in $(p,u)$.

\myparagraph{Viability games} Let us now describe viability games on VASS and
group pushdown systems. Such a game is played by two players, the existential
($\exists$) and the universal player ($\forall$), and is specified by a VASS
(resp.\ a group pushdown system), so that the set $Q$ of states is partitioned
as $Q=Q_\exists \cup Q_\forall$, where $Q_\exists$ belongs to the existential
player, and $Q_\forall$ belongs to the universal player. During a play, the two
players choose transitions. Here, the chosen transition must always respect the
current state. Moreover, the current state determines which player can choose
the next transition.

The key idea is that the \emph{winning condition} of the existential player
requires (i)~the play to be infinite and (ii)~all prefixes of the transition
sequence to be valid in the initial configuration. In the \emph{fixed initial
credit} setting, the initial configuration is the empty configuration:
$(q_\init,\bzero)$ for VASS and $(q_\init,\varepsilon)$ for group pushdown
systems (here, $\varepsilon$ is the empty sequence). In the \emph{unknown
initial credit} setting, we are asked whether there exists an initial
configuration from which the existential player wins.

\myparagraph{Viability games on VASS and energy games} Crucially, observe that
viability games on ($d$-dimensional) VASS are equivalent to ($d$-dimensional)
energy games.

\myparagraph{Example: Viability game on a $\Z \times \Z$-pushdown system}
Consider a $G$-pushdown system $\calP = (Q, \Sigma, \delta, q_\init)$, where
$Q = \{q_0, \ldots, q_5\}$, $\Sigma = \{a\}$, $q_{\init} = q_0$ and $G = \Z \times \Z$ is the
group with componentwise addition and neutral element $(0,0)$, generated by
$W = \{(1,0),(-1,0),(0,1),(0,-1)\}$. The game is depicted in
\cref{fig:example-pushdown}, where circle states belong to the existential
player ($Q_\exists$) and square states to the universal player ($Q_\forall$).

In the fixed initial credit setting, the existential player does not win this game, 
as she cannot win from
the initial configuration $(q_\init, \varepsilon)$. In the unknown initial
credit setting, however, she does win. Specifically, she wins from any initial
configuration of the form $(q_\init, u \cdot a (0,-1) )$, where $u = g_0 a g_1 \cdots a
g_m$ for $g_0, \ldots, g_m \in \Z \times \Z$, i.e.\ any configuration with at least one $a$ and the topmost
group element $(0,-1)$. To see why the top of the stack must be $a (0,-1)$:
in order to sustain an infinite play, the existential player must eventually
reach $q_2$, which requires applying the group transition $(0,1)$ followed by
the pop transition $\bar{a}$. For the pop transition to be valid, the topmost
stack entry must be $(0,-1)$, followed by the letter $a$, so that after the group
transition $(0,1)$ the topmost group element becomes $(0,0) = 1$ (the neutral element of $G$), as required
for a valid pop, so the $a$ can be popped.

Once the existential player reaches $q_2$ with stack content $u$, she wins by
the following strategy. From $q_2$, the universal player may take the self-loop
any number of times $m \in \N$ (each time pushing an $a$ with group element
$(0,0)$, leaving the stack as $u a^m$) before moving to $q_3$. From $q_3$, the
existential player moves to either $q_4$ or $q_5$, depending on which transition the universal player took to $q_3$, with the stack content $u a^m$. In all cases, the existential
player can navigate back to $q_2$ with stack content $u$, by popping the $m$
copies of $a$ using the self-loop on $q_5$. Since the same strategy applies
from $q_2$ with any stack content, this produces an infinite sequence of valid
transitions, and the existential player wins.

It follows from our results (see \cref{main-results}) that viability games are
decidable (and, even in $\EXPTIME$) for $\Z \times \Z$-pushdown systems---in both the fixed
and unknown initial credit settings (in fact, the decidability proof even applies to any group with a
decidable  word problem). However, perhaps surprisingly, if we replace
one of the $\Z$-counters with a VASS counter (i.e.\ one that must be non-negative in valid configurations), viability games become
undecidable.

\begin{figure}[t]
  \centering
  \tikzset{every state/.style={minimum size=50pt}}
  \tikzset{every loop/.style={looseness=2}}
  \begin{tikzpicture}[
    ->, >=stealth', shorten >=1pt,
    auto,
    node distance=1.5cm,
    semithick
  ]  
  \node[circle, draw] (s) {$q_0$}; %
  \node[circle, draw] (p1) [right= 1cm of s] {$q_1$};
   \node[rectangle, draw, minimum width=7mm, minimum height=7mm] (p2) [right= 1cm of p1] {$q_2$};
  \node[circle, draw] (p3) [right= 1cm of p2] {$q_3$};
  \node[circle, draw] (p4) [above= 1cm of p3] {$q_4$}; 
  \node[circle, draw] (p5) [right= 1cm of p3] {$q_5$}; 
  \draw (s.west) ++(-0.5cm,0) -- (s);

  \draw (s)  -- node[above] {\small $(0,1)$} (p1);
  \draw (p1) -- node[above] {$\bar{a}$} (p2);
  \draw (p3) -- node[left, xshift=2pt] {\small $(-1,0)$} (p4);
  \draw (p3) -- node[below] {} (p5);
  \draw (p4) -- node[above, xshift=2pt] {} (p5);

  \draw[->] (p2) edge [loop above] node {$a$} ();
  \draw[->] (p5) edge [loop above] node {$\bar{a}$} ();

 \draw[->] (p2) to[bend right=15] node[below] {\small $(1,0)$} (p3);
  \draw[->] (p2) to[bend left=15] node[above] {} (p3);

    \draw[->] (p5.south) to[bend left=35] node[below] {} (p2.south);

  \end{tikzpicture}
\caption{A viability game on the $G$-pushdown system $\calP$, where $G = \Z
\times \Z$. Circle states belong to the existential player and square states to
the universal player. Push transitions are labeled with stack letters
($a$), pop transitions with their barred counterparts ($\bar{a}$), and group
transitions with elements of $G$. Unlabeled edges carry the neutral element
$(0,0) \in \Z \times \Z$, i.e.\ they leave the top group element unchanged.}  \label{fig:example-pushdown}
\end{figure}

\section{Preliminaries}\label{preliminaries}
\myparagraph{Graphs} In the general framework of valence systems over graph
monoids, the storage mechanism of an infinite-state system is specified by an
undirected graph that may have self-loops.  Formally, a \emph{graph} is a pair
$\Gamma=(V,E)$, where $V$ is its set of \emph{vertices} and
$E\subseteq \{S\subseteq V \mid |S|\le 2\}$ is its set of \emph{edges}. Nodes
$a,b\in V$ are \emph{adjacent} if $\{a,b\}\in E$. Here, a singleton $\{a\}$ in
$E$ is a \emph{self-loop} (on $a$), so we say that $a$ is \emph{looped}.  A \emph{clique} is a set of pairwise adjacent vertices. In an \emph{independent set}, all vertices are pairwise non-adjacent. An
\emph{induced subgraph} of $\Gamma$ is a graph $(V',E')$, where $V'\subseteq V$
and also $E'=\{S\in E \mid S\subseteq V'\}$, i.e.\ among its vertices, $\Gamma'$
has precisely those edges that appear in $\Gamma$.  If there is no danger of
confusion, we also call a graph $\Delta$ induced subgraph of $\Gamma$ if
$\Delta$ is isomorphic to an induced subgraph of $\Gamma$.

\myparagraph{The congruence relation} To define the storage mechanism
described by a graph $\Gamma$, we need to introduce the congruence
$\equiv_\Gamma$.  To $\Gamma=(V,E)$, we associate the alphabet $\Sigma=V\cup
\bar{V}$, where we regard $V$ as an alphabet and $\bar{V}=\{\bar{a}\mid a\in
V\}$ as a decorated version of $V$.  We also define the involution
$\bar{\cdot}\colon\Sigma^*\to\Sigma^*$, where $a\mapsto \bar{a}$ and
$\bar{a}\mapsto a$ and $w=a_1\cdots a_n\mapsto \bar{a}_n\cdots \bar{a}_1$.
On $\Sigma^*$, the set of words over $\Sigma$, we consider the smallest congruence $\equiv_\Gamma$ that satisfies:
\begin{align*}
	&a\bar{a}\equiv_\Gamma\varepsilon && \text{for each $a\in V$} \\
	&xy\equiv_\Gamma yx && \text{for each $\{a,b\}\in E$, $x\in\{a,\bar{a}\}$, $y\in\{b,\bar{b}\}$}.
\end{align*}
Here, $a=b$ is allowed, meaning if $a$ is looped, then $\bar{a}a\equiv_\Gamma
a\bar{a}\equiv_\Gamma\varepsilon$.
The congruence gives rise to a \emph{monoid}, which is a set $M$ together with an associative operation and a neutral element $1\in M$.  Specifially, we have the monoid $\Mon_\Gamma := \Sigma^*/\equiv_\Gamma$. For a word $w\in\Sigma^*$, we use $[w]$ to denote the congruence class of $w$. 

\myparagraph{Monoids as storage mechanisms} Traditionally, the monoid
$\Mon_\Gamma$ is used to define ``valence automata (over
$\Mon_\Gamma$)'' that accept languages. Here, each edge carries an input word
and a word over $\Sigma$, where words $w\in\Sigma^*$ are viewed as (sequences
of) storage instructions. Then a run with instructions $w_1,\ldots,w_n$ on its
edges is accepting if $[w_1\cdots w_n]=1$. Intuitively, this is because such a
sequence $w_1\cdots w_n$ has neutral effect: A pushdown automaton accepts if
the instruction sequence brings the empty stack back to the empty stack (or
equivalently, brings any possible stack back to itself); a VASS
accepts if the instruction sequence brings the empty counters back to the empty
counters (equivalently, brings any counters valuation back to itself).
Note that this way, elements $[w]$ can be viewed both as configurations (i.e. the one reached by applying $w$) and as actions: applying $[w]$ to $[u]$ yields $[w][u]=[wu]$.

\myparagraph{Examples} With suitable $\Gamma$, one can now realize classical
storage mechanisms like pushdowns or counters. For example, if $\Gamma$ has no
edges (not even self-loops), valence automata over
$\Gamma$ are essentially pushdown automata with stack alphabet $V$. A word $w\in\Sigma^*=(V\cup\bar{V})^*$ has
$[w]=1$ if and only if $w$ can be brought to the empty word by deleting factors
$a\bar{a}$. Viewing $V$ as push instructions and $\bar{V}$ as pop
instructions, this is equivalent to $w$ being a sequence of instructions that
bring the empty stack back to the empty stack. %

Let $\Gamma$ be an \emph{unlooped clique}, i.e.\ a clique without self-loops. Then
$w\in\Sigma^*=(V\cup \bar{V})^*$ has $[w]=1$ if and only if, for each $a\in V$,
every prefix $u$ of the projection of $w$ to just $\{a,\bar{a}\}$, has the
property $|u|_a\ge|u|_{\bar{a}}$, and also $|w|_a=|w|_{\bar{a}}$. Thus,
$[w]=1$ if and only if $w$ is a sequence of increment ($a$) and decrement
($\bar{a}$) operations that bring all counters from zero to zero, 
while maintaining non-negativity. Thus valence automata over $\Gamma$ are $d$-VASS
for $d=|V|$.

If $\Gamma$ is a \emph{looped clique}, i.e.\ a clique of looped vertices, then
$w\in\Sigma^*$ has $[w]=1$ if and only if  $|w|_a=|w|_{\bar{a}}$ for each $a\in
V$, and $\Mon_\Gamma\cong\Z^d$ for $d=|V|$. Hence valence automata over looped
cliques are \emph{$\Z$-VASS}~\cite{DBLP:conf/rp/HaaseH14}: counter machines
with integer-valued counters.

\newcommand{\smallppn}[1]{%
\begin{tikzpicture}[every circle/.style={}, scale=#1]
\fill (0,0) circle (2pt) node (a) {}    (1,0) circle (2pt) node (b)  {}   (2,0) circle (2pt) node (c) {};
\draw (a.center) -- (b.center) -- (c.center);
\end{tikzpicture}
}

Finally, suppose $\Pthree$ is the graph \smallppn{0.7}, i.e. the path on three
vertices. The two outer vertices alone behave like a pushdown as above. The
middle vertex is adjacent to both pushdown vertices, and in itself, behaves
like a $1$-VASS. Thus, a valence automaton over $\Gamma$ is a pushdown
automaton with access to one additional VASS counter, also known as
one-dimensional pushdown
VASS~\cite{DBLP:conf/stoc/BiziereC25,DBLP:journals/corr/abs-2504-05015}.

\myparagraph{Viable runs and right-invertibility} In the setting of \nrgs, we do
not need a notion of accepting run, but we need a notion of a run that is
``allowed'' (or ``viable''), i.e.\ does not perform any illegal instructions.
Given that a sequence $w\in\Sigma^*$ is considered complete in valence automata
if $[w]=1$, we can define $w$ (or $[w]$) as \emph{valid} or \emph{viable} if
$w$ can be completed to $1$, i.e.\ if the element $[w]$ is right-invertible in
$M$. Here, an element $x\in M$ is called \emph{right-invertible} if there is
$y\in M$ with $xy=1$. The set of right-invertible elements of $M$ is denoted
$\RInv{M}$. Thus, we call an instruction (sequence) $v\in (V\cup \bar{V})^*$ \emph{valid} (resp.\ \emph{invalid}) in a configuration $[u]$ if $[uv]\in\RInv{\Mon_\Gamma}$ (resp.\ $[uv]\notin\RInv{\Mon_\Gamma}$).

The idea to consider right-invertible elements as valid
configurations/instruction sequences has been introduced in work on first-order
logic with reachability over configuration graphs of valence
systems~\cite{DBLP:conf/lics/DOsualdoMZ16}. As observed there, for those
concrete storage mechanisms (such as pushdowns or VASS counters) that can be
represented by $\Gamma$,
$\RInv{\Mon_\Gamma}$ corresponds to real configurations. For example, if
$\Gamma$ has no edges and thus represents a pushdown, then $\RInv{\Mon_\Gamma}$
is the set of all $[w]$ with $w\in V^*$, which corresponds to the set of stack
words over $V$, hence $\RInv{\Mon_\Gamma}\cong V^*$. On the other hand, stack words containing a pop not preceded by a push of the same letter, e.g. $[\bar{a}]$ or $[b\bar{a}]$ where $a \neq b$, are \emph{not} valid configurations. Clearly, they are not right-invertible in $\Mon_\Gamma$, and thus not in $\RInv{\Mon_\Gamma}$. If $\Gamma$ is a loopless clique, then $\RInv{\Mon_\Gamma}$ is
also the set of all $[w]$ with $w\in V^*$, but since $[ab]=[ba]$ for $a,b\in
V$, we have an isomorphism $\RInv{\Mon_\Gamma}\cong \N^d$ for $d=|V|$. Hence,
$\RInv{\Mon_\Gamma}$ is the set of valid counter valuations in a $d$-VASS. Once more, for $a, b \in V$, $[\bar{a}]$ and $[b\bar{a}]$ for $a \neq b$ are not in $\RInv{\Mon_\Gamma}$ as those are not valid configurations.

\myparagraph{Valence system}
Let $\Gamma$ be a graph. A \emph{valence system over $\Gamma$} is a tuple $(\Gamma,Q,\delta,q_\init)$, where $\Gamma=(V,E)$ is a graph, $Q$ is a finite set of \emph{states}, $\delta\subseteq Q\times (V\cup\bar{V})^*\times Q$ is the \emph{transition relation}, and $q_\init\in Q$ is its \emph{initial state}. A \emph{configuration} is a pair $(q,x)\in Q\times\Mon_\Gamma$. We denote the set of all configurations with $\Conf$. A \emph{run} is a (finite or infinite) sequence of configurations $(q_0,x_0)(q_1,x_1)\ldots \in \Conf^* \cup \Conf^\omega$ such that for each $i$, there is a transition $(q_i,w,q_{i+1})$ in $\delta$ such that $x_{i+1}=x_iy_i$, where $y_i=[w]$. Thus, after applying transitions $w_1,\ldots,w_n$, we arrive in some $(q,x)$ with $x=[w_1]\cdots [w_n]=[w_1\cdots w_n]$. %

\section{\Nrgs}\label{games}
We now introduce \nrgs\ and the corresponding decision problems.
As mentioned above, \nrgs\ generalize Energy games by including ``invalid
configurations'' in the arena (i.e.\ the results of instructions that are not
allowed), and making it part of the winning condition of $\exists$ to only use
valid/viable configurations. This leads to our notion of arenas:

\myparagraph{Valence game arena}
A \emph{(valence) game arena} is a valence system $(\Gamma, Q = Q_\exists \sqcup Q_\forall, \delta, q_{init})$ where the set of states $Q$ are parititoned into two sets $Q_\exists$, the states owned by the existential (or, $\exists$-) player and $Q_\forall$, the states owned by the universal (or, $\forall$-) player. Moreover, in a game arena, every state has at least one outgoing transition, and 
for any $p,q\in Q$, there is at most one transition from $p$ to $q$.
Thus, each graph $\Gamma$ defines a class of valence games arenas, in the same way it defines a class of valence systems. 

A \emph{play} is an infinite run.
A \emph{valence game} $(\gr, \Obj)$, is a game arena $\gr$ with an objective $\Obj \subseteq \Conf^\omega$. %
Intuitively, $\exists$-player controls $Q_\exists$ and tries to satify $\Obj$, while $\forall$-player controls $Q_\forall$ and tries to violate $\Obj$. 
For $\iota\in\{\exists,\forall\}$, we denote $\{\exists, \forall\} \setminus \{\iota\}$ by $1 - \iota$. An \emph{$\iota$-strategy} is a map $\sigma_\iota\colon Q^* Q_\iota \to Q$ such that for every $\sigma_\iota(wq) = q'$, there is a $(q, x, q') \in \delta$. %
 A play $(q_1, x_1) (q_2, x_2) \ldots$ is \emph{compliant} with $\sigma_\iota$ if for every $i$ with $q_i \in Q_\iota$, we have $\sigma_\iota(q_1 \ldots q_i) = q_{i+1}$. Note that starting from an initial configuration $(q,x)$, any state sequence $q_1 q_2 \ldots $ with $q_1 = q$ and similarly, any two strategies $\sigma_\exists$ and $\sigma_\forall$, define a unique play because there is at most one transition between any two states.
An $\iota$-strategy $\sigma_\iota$ is \emph{winning} from a configuration $(q,x)$ if \emph{all plays} starting from $(q,x)$ and compliant with $\sigma_\iota$ are in $\Obj$ if $\iota=\exists$, and not in $\Obj$ if $\iota=\forall$. A configuration $(q,x)$ is \emph{won} by the player that has a winning strategy from $(q,x)$. If the objective is a Borel set, each configuration is won by one of the players~\cite{Borel}. %

\begin{remark}
	The assumption that every state has an outgoing transition is
	non-essential for the objectives we consider, but simplifies the
	definitions. \onlyConference{See the full version~\cite{full} for details.}\onlyFull{See \cref{app:dead-ends} for details.}
\end{remark}

\myparagraph{Graph classes} A single graph $\Gamma$ represents a specific
storage mechanism, such as $2$-VASS if $\Gamma$ is \twob{0.75}. Thus,
understanding a decision problem for each individual $\Gamma$ 
tells us little about the setting where part of the storage
mechanism (such as the dimension of a VASS) is given in the input. Therefore, we
consider the more general setting where the graph $\Gamma$ is also part of the
input, but drawn from some class $\Gr$. For example, if $\Gamma$ is the class
of unlooped cliques, then this captures the setting of $\VASS$ where the
dimension is arbitrary and part of the input.  We will assume that $\Gr$ is
closed under taking induced subgraphs.

\myparagraph{\Nrgs} We are ready to define \nrgs.  Here, the fact that only valid/viable configurations should be
used is part of the objective for $\exists$-player. %
Consider a valence game arena $(\Gamma,Q=Q_\exists\sqcup Q_\forall,\delta,q_\init)$.
We
define the objective $\RIO\subseteq \Conf^\omega$, short for
``right-invertibility objective'', as the set of plays $(q_1,x_1)(q_2,x_2) \ldots$ where $x_i\in\RInv{\Mon_\Gamma}$ for every $i\in\N$. In other words, the play only visits valid/viable configurations.
 $\RIO$ is a Borel set.
\myparagraph{Decision problems} We have the following decision problems related to \nrgs.
In the \emph{fixed initial credit} setting, the initial configuration (or, \emph{credit}) $[w]$ is given. For a class $\Gr$ of graphs, $\FICEG(\Gr)$ is the following problem:
\begin{description}
	\item[Given:] A graph $\Gamma=(V,E)$ from $\Gr$, a valence game arena over $\Gamma$, and a word $w\in(V\cup \bar{V})^*$.
	\item[Question:] Does $\exists$-player win $(q_\init,[w])$ for the objective $\RIO$?
\end{description}
We also have the \emph{unknown initial credit} setting, where the initial
configuration is to be determined.  For a class $\Gr$ of graphs, $\UICEG(\Gr)$
is the following problem:
\begin{description}
	\item[Given:] A graph $\Gamma=(V,E)$ from $\Gr$ and a valence game arena over $\Gamma$.
	\item[Question:] Is there a $w\in(V\cup \bar{V})^*$ such that $\exists$-player wins $(q_\init,[w])$ for the objective $\RIO$?
\end{description}

\section{Main results}\label{main-results}
In this section, we present the main results of this work.
\Cref{thm:FIC-EG} provides a full characterization of decidability and complexity of $\FICEG$, and \cref{thm:UIC-EG} provides such a characterization of $\UICEG$. We begin with some notation needed in the statements.

\myparagraph{The VASS dimension} %
Intuitively, the VASS dimension of $\Gr$ is the largest dimension of VASS that can be simulated using storage mechanisms over $\Gr$. A \emph{$d$-clique} is a clique of $d \in \N$ vertices. Recall that an (un)looped $d$-clique represents $d$-dim.\ $\Z$-VASS ($d$-dim.\ VASS). Therefore, we denote by $\ZVASS$ (resp.\ $\VASS$) the class of looped (resp.\ unlooped) cliques.
For a class $\Gr$ of graphs, 
we define its \emph{VASS dimension} 
 $\Vdim(\Gr)$ as the supremum over all $d$ such that some graph in $\Gr$ contains an unlooped $d$-clique as an induced subgraph.

\myparagraph{Stacks of configurations}
For some graphs $\Gamma$, valence systems over $\Gamma$ behave like pushdown systems where the stack entries can be configurations of other storage mechanisms.
Suppose a graph $\Gamma=(W \cup U,E)$ is obtained from a graph $\Delta=(W,E)$ by adding a set $U$ of unlooped independent (i.e.~pairwise non-adjacent) vertices. Then the elements of $\RInv{\Mon_\Gamma}$ can be written as $[w_{r+1}u_r w_r \cdots u_1 w_1]$, where $u_1,\ldots,u_r\in U$ and $[w_1],\ldots,[w_{r+1}]\in\RInv{\Mon_\Delta}$. This can be viewed as a stack content with letters $u_1,\ldots,u_r$, interspersed with configurations $[w_1],\ldots,[w_{r+1}]$ corresponding to $\Delta$. Then, multiplying some $w\in(W\cup\bar{W})^*$ will yield $[w_{r+1}u_rw_r\cdots u_1w_1w]$, i.e.~apply $w$ to the top-most entry. Multiplying some $u\in U$ will yield $[w_{r+1}u_rw_r\cdots u_1 w_1 u]$ and thus start a fresh entry on top. Finally, applying $\bar{u}$ will yield a right-invertible element (i.e.~be valid/viable) if and only if $u=u_1$ (i.e.~we are popping the right letter) and $[w_1]=1$ (the top-most entry is empty). If that is the case, then this yields $[w_{r+1}u_rw_r\cdots u_1w_1\bar{u}]=[w_{r+1}u_rw_r\cdots u_{2}w_{2}]$, thus popping the entire top-most entry.

For these reasons, we denote by $\Pushdown(\Gr)$ the class of graphs $\Gamma$ as above, where $\Delta$ is drawn from $\Gr$: We think of these as pushdown systems over $\Gr$. Formally, $\Pushdown(\Gr)$ is the class of graphs $\Gamma = (W \cup U, E)$ where $U$ is a set of unlooped independent vertices, and $(W, E) \in \Gr$. We also write $\Pushdown$ for all graphs that are just unlooped independent graphs, i.e. $(U, \emptyset)$.

\myparagraph{Direct products}
Suppose $\Gamma$ is obtained from $\Delta_1$ and $\Delta_2$ by taking a disjoint union of the vertices, and making any two vertices $v_1$ from $\Delta_1$ and $v_2$ from $\Delta_2$ adjacent. Then $\Mon_\Gamma\cong\Mon_{\Delta_1}\times\Mon_{\Delta_2}$ and thus valence systems over $\Gamma$ can be viewed as having access to \emph{both} storage mechanisms $\Delta_1$ and $\Delta_2$, used independently. We therefore write $\Gamma=\Delta_1\times\Delta_2$, and extend this notation to classes of graphs, i.e. $\Gr_1\times\Gr_2$. For example, $\Pushdown\times\VASS$ are the graphs corresponding to pushdown vector addition systems~\cite{DBLP:conf/stoc/BiziereC25,DBLP:conf/icalp/LerouxST15,DBLP:journals/corr/abs-2504-05015}.

\myparagraph{Groups}
If $\Gamma$ has only looped vertices, then $\Mon_\Gamma$ is a \emph{group},
meaning for every $x\in\Mon_\Gamma$, there is a $y\in\Mon_\Gamma$ with $xy=1$:
Indeed, for $w\in(V\cup\bar{V})^*$, we have $[w\bar{w}]=1$, since
$\bar{a}a\equiv_\Gamma a\bar{a}\equiv_\Gamma\varepsilon$ for any $a\in V$.
Therefore, we denote by $\Group$ the class of graphs with only looped vertices.
Note that for a class $\Gr$, an algorithm for $\FICEG(\Gr\times\Group)$ or
$\UICEG(\Gr\times\Group)$ can just ignore all letters in the $\Group$ components,
since they never stand in the way of right-invertibility.

\myparagraph{The main results}
We are now ready to state our main results. We first have a full description of
the decidability and complexity landscape of the fixed initial credit setting:
\begin{maintheorem}\label{thm:FIC-EG}
    Let $\Gr$ be a class of graphs closed under induced subgraphs.
    \begin{enumerate}
        \item If $\Vdim(\Gr)<\infty$ and $\Gr\subseteq \VASS \times \Group$, then $\FICEG(\Gr)$ is $\PTIME$-complete. \label{num:A1}
	\item If $\Vdim(\Gr)<\infty$ and $\Gr\subseteq \Pushdown (\Group) \times \Group \cup \VASS \times \Group$, but $\Gr\not\subseteq\VASS \times \Group$, then $\FICEG(\Gr)$ is $\EXPTIME$-complete. \label{num:A2}
        \item If $\Vdim(\Gr)=\infty$ and $\Gr\subseteq \Pushdown (\Group) \times \Group \cup \VASS \times \Group $, then $\FICEG(\Gr)$ is $\twoEXPTIME$-complete. \label{num:A3}
        \item In all other cases, $\FICEG(\Gr)$ is undecidable.\label{num:A4}
    \end{enumerate}
\end{maintheorem}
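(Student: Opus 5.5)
The plan has three parts: the upper bounds for the two decidable families, the matching lower bounds, and a graph-theoretic dichotomy that sends every remaining class to undecidability.

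\emph{Upper bounds.} First, the $\Group$ factor is irrelevant. Letters in the $\Group$ components never affect right-invertibility, so we delete them from the arena and reduce $\FICEG(\Gr\times\Group)$ to $\FICEG(\Gr)$ in logspace.

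For $\VASS$, viability games are exactly $d$-dimensional energy games with initial credit given by $[w]$. We invoke the known results on fixed-dimension energy games with fixed initial credit (e.g.\ via one-sided VASS games or the bounds of Jurdzi\'nski, Lazi\'c, Schmitz). These give a pseudo-polynomial algorithm with bounds of the form $(|Q|\cdot\|\delta\|)^{O(d)}$. Under unary encoding (transitions are words over $V\cup\bar V$), this is polynomial for fixed $d$, which yields item~(a). For unbounded $d$, the same bounds give $\twoEXPTIME$.

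For $\Pushdown(\Group)$, we design an alternating (equivalently, exponential-size game) algorithm in the style of Walukiewicz's pushdown game reduction. When $\exists$ pushes a letter $u$, she must later pop it. The pop is viable only if the accumulated group element of that stack entry is $1$. We therefore abstract each stack entry by a finite summary: the set of states at which $\exists$ claims she can return (pop) while the top group component is trivial. $\forall$ challenges either by continuing above the entry or by jumping back to check a claim. The difficulty is that the group value inside an entry is unbounded. However, since $\Gr$ consists of looped graphs, right-invertibility within an entry is automatic, and the only condition is that the element equals $1$ at pop time. This is a word-problem check on the sequence of group transitions played since the push. Because that sequence is unbounded, I would instead let $\exists$ commit, at each push, to a target set $R\subseteq Q$ of return states. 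Along the segment between push and pop, $\exists$ must ensure that whenever a pop $\bar u$ is played, the group word since the push equals $1$. This reduces the problem to a game on a finite structure with a ``group word $=1$'' side condition. I expect this to be the main obstacle: showing that the relevant pop-reachability, as a game, is decidable despite control-state reachability being undecidable for general groups. The resolution I would try is to note that $\exists$ may simply never pop. The set of relevant strategies then only needs to track whether the current group element is $1$ at finitely many commitment points, so winning reduces to a parity or safety game over $2^{Q}$-summaries in which the group only enters through the word problem of bounded-length words generated in the exponential construction. Combining this with the VASS part for $\Pushdown(\Group)\times\Group\cup\VASS\times\Group$ gives $\EXPTIME$, and $\twoEXPTIME$ when the VASS dimension is unbounded.

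\emph{Lower bounds and undecidability.} $\PTIME$-hardness follows from finite-state reachability/safety games, which embed directly. $\EXPTIME$-hardness comes from pushdown games, encoded using only unlooped independent vertices, which are present whenever $\Gr\not\subseteq\VASS\times\Group$. $\twoEXPTIME$-hardness comes from energy games of unbounded dimension with fixed initial credit. For item~(d), we prove a graph-theoretic dichotomy: a class not contained in $\Pushdown(\Group)\times\Group\cup\VASS\times\Group$ contains one of a few forbidden induced subgraphs. These include $\Pthree$-like graphs (pushdown plus one VASS counter), a pushdown over a VASS counter combined with a $\Z$ counter, and the graph consisting of an unlooped vertex adjacent to a looped one inside a stack. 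For each such graph, we reduce from two-counter Minsky machines, or from the undecidable pushdown energy games of Abdulla et al., in which $\forall$ can punish cheating zero tests by forcing an invalid pop.
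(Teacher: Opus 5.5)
Your overall decomposition matches the paper's: energy-game results for $\VASS\times\Group$, a new algorithm for $\Pushdown(\Group)$, hardness from pushdown and energy games, and forbidden induced subgraphs for undecidability. However, the one genuinely new upper bound, $\EXPTIME$ for $\Pushdown(\Group)$, is not established. Nothing obliges anyone to pop a pushed letter. Your fix that ``$\exists$ may simply never pop'' is false: in the example game of Section~2 the only move at the $\exists$-state $q_1$ is a pop, and $\forall$ may pop from his own states. More importantly, the claim that the group only enters through bounded-length words is exactly what has to be proved. Without it, your summary game must decide whether an unbounded element accumulated since a push can return to $1$. That is a rational-subset-membership question, undecidable for some graph groups, which is why even control-state reachability is undecidable here. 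The missing idea is that big elements are ``too big to cancel in time''. Suppose $\forall$ can force every branch from $(p,\gamma)$ to end in an invalid pop or a pop of the current entry. Replace each nested push--pop segment by a summary transition (which does not touch the current entry) and cut repeated states. Then he can do so with a tree of height at most $|Q|+1$, so each branch multiplies at most $|Q|$ generators onto the current entry. If that entry holds $g$ with $\|g\|>|Q|$, all these pops are invalid. Hence only elements of word length $\le|Q|$ need exact tracking, via the restricted Cayley graph $\mathcal{C}^{\le|Q|}$ with a trap vertex. It has exponentially many vertices and is computable because the word problem of right-angled Artin groups is in polynomial time. The paper then saturates the sets $R_q\subseteq 2^Q$ with exponential-size tree automata, re-proving the $|Q|$ bound at every level; the summary automata gain exponentially many transitions but keep $|Q|$ states. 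Once this lemma is in place, your Walukiewicz-style game with positions $(q,R,c)$, $c$ a vertex of $\mathcal{C}^{\le|Q|}$, would be a reasonable alternative route. As written, the decisive step is missing.

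The lower bounds and item~(d) also have holes. $\Gr\not\subseteq\VASS\times\Group$ does not yield two independent unlooped vertices. The class of induced subgraphs of an unlooped $\gamma$ plus a non-adjacent looped $g$ is a counterexample. So $\EXPTIME$-hardness must encode stack letters as $\gamma g^i$ with pops $\bar g^i\bar\gamma$, plus a gadget turning classical pushdown games (where mismatched pops are absent from the arena) into viability games. For item~(d) you need the exact list of forbidden graphs and a proof that avoiding them forces membership in $\Pushdown(\Group)\times\Group\cup\VASS\times\Group$; the paper does this by a case split on whether the unlooped vertices are independent. The graphs you name omit one case: an unlooped edge $b$--$c$ plus a vertex $a$, looped or not, adjacent to neither. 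That graph lies outside the union and contains none of your other graphs. For the graph you describe as a pushdown over an $\N$- and a $\Z$-counter, the standard zero-test gadget fails, since $\exists$ can loop on $\bar c$ forever. The paper instead encodes $(m_1,m_2)$ as $bab^{m_1+m_2}c^{m_2}$ and adds decrement gadgets. Finally, a bound $(|Q|\cdot\|\delta\|)^{O(d)}$ would place fixed-credit energy games of unbounded dimension in $\EXPTIME$, contradicting the $\twoEXPTIME$-hardness you invoke. The fixed-credit bound you need is polynomial for each fixed $d$ but doubly exponential when $d$ is part of the input.
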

Likewise, we have a full description of the decidability and complexity
landscape of the unknown initial credit setting:
\begin{maintheorem}\label{thm:UIC-EG}
    Let $\Gr$ be a class of graphs closed under induced subgraphs.
    \begin{enumerate}
        \item If $\Vdim(\Gr)<\infty$ and $\Gr\subseteq \VASS \times \Group$, then $\UICEG(\Gr)$ is $\PTIME$-complete. \label{num:B1}
         \item If $\Vdim(\Gr)=\infty$ and $\Gr\subseteq\VASS \times \Group$, then $\UICEG(\Gr)$ is $\coNP$-complete. \label{num:B2}
         \item If $\Gr\subseteq \Pushdown (\Group) \times \Group \cup \VASS \times \Group $, but $\Gr\not\subseteq \VASS \times \Group$, then $\UICEG(\Gr)$ is $\EXPTIME$-complete. \label{num:B3}
        \item In all other cases, $\UICEG(\Gr)$ is undecidable.\label{num:B4}
    \end{enumerate}
\end{maintheorem}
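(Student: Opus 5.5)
The proof of \cref{thm:UIC-EG} splits into upper bounds, lower bounds, and undecidability. It mirrors \cref{thm:FIC-EG} but uses the unknown-credit literature for energy games.

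\textbf{Group components.} As observed above, letters from $\Group$ components never obstruct right-invertibility. In every decidable case I therefore first delete the $\Group$ factor. This reduces $\Gr\subseteq\VASS\times\Group$ to plain multi-dimensional energy games with unknown initial credit, and $\Pushdown(\Group)\times\Group$ to $\Pushdown(\Group)$, i.e.\ the $G$-pushdown systems of \cref{preview}.

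\textbf{VASS cases, \ref{num:B1} and \ref{num:B2}.} For \ref{num:B1} I invoke the known result that unknown-initial-credit energy games of fixed dimension are solvable in polynomial time (Chatterjee--Doyen--Henzinger--Raskin type results). For \ref{num:B2} I use the known $\coNP$-completeness of multi-dimensional energy games with unknown credit. Memoryless strategies suffice for the universal player, which gives the upper bound. $\coNP$-hardness transfers because those games are exactly viability games over unlooped cliques.

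Hardness in \ref{num:B1} should be $\PTIME$-hardness already for $\Vdim=0$ with only looped or empty graphs. There the game is a plain safety/infinite-play game on a finite graph, and alternating reachability is $\PTIME$-hard. Since $\Gr$ is closed under induced subgraphs, it contains the empty graph whenever it is nonempty, and that is all this reduction needs.

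\textbf{$\Pushdown(\Group)$ case, \ref{num:B3}.} This is the main obstacle. The key observation is that $\exists$ never benefits from stack content below the current top unless she pops into it. A pop into the stack requires the topmost group element to be $1$ and the letter to match. So from a configuration $u\,a\,g$, the relevant information is the top entry $(a,g)$ together with the question of which summary ``return states'' are winning for the lower stack.

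I would show that with unknown credit, WLOG the initial credit has the form $g_0 a_1 1 a_2 1\cdots$ or a single top group element $g$. The group element only matters through the finitely many ways the play can drive it back to $1$ before a pop. I would then build an exponential-size finite game, in the style of Walukiewicz's pushdown game reduction. A push is simulated by $\exists$ claiming a set $R$ of states she wins after the matching pop; $\forall$ either checks a claim or continues inside. Pops are validated by checking that the group word accumulated since the push is trivial, using the decidable word problem for the graph group (in $\PTIME$ for right-angled Artin groups).

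The difficulty is that accumulated group words within one stack level are unbounded. To handle this, I would make $\exists$ lose any level in which she pops with a nontrivial element. With unknown credit I would guess the top element as a formal variable: the initial group content $g$ is only ever tested via equalities $g\cdot[w]=1$. The existence of a good $g$ then reduces to $\exists$ committing, in the abstract game, to which pop (if any) at the bottom level is the first to return to the initial entry. That is encodable with exponentially many states. $\EXPTIME$-hardness follows from pushdown games, and $\Gr\not\subseteq\VASS\times\Group$ forces an induced subgraph yielding a pushdown, namely two non-adjacent unlooped vertices.

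\textbf{Undecidability, \ref{num:B4}.} Any class outside these cases contains one of finitely many minimal forbidden induced subgraphs, e.g.\ \Pthree, a pushdown with a VASS counter, or an unlooped vertex next to a pushdown-of-groups. For each, reduce from two-counter Minsky machines as in the fixed-credit undecidability. In each reduction the $\forall$ player checks zero tests, and the game forces $\exists$ to first reset the storage to a canonical configuration. This makes the unknown initial credit irrelevant and transfers fixed-credit undecidability to $\UICEG$.
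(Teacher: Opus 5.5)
\textbf{Case (c).} The summary-set skeleton of your case~(c) (a push simulated by a claimed set of return states) has the right shape. The group part, however, does not go through, and this is the heart of the theorem. There are three problems.

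First, your normal form for the initial credit is false. Over $G=\Z\times\Z$, take $\exists$-states $q_0\xrightarrow{(0,1)}q_1\xrightarrow{\bar a}q_2\xrightarrow{(0,1)}q_3\xrightarrow{\bar a}q_4$ with an $\varepsilon$-loop on $q_4$. Here $\exists$ wins exactly from credits whose two topmost group entries both equal $(0,-1)$, so no credit with at most one nontrivial entry works. What does hold (\cref{thm:initial-credit-ub}) is only $|\lambda|_\gamma<2^n$ and word length $\le n$ per entry.

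Second, your abstract game is not finite. ``$\exists$ loses if she pops with a nontrivial element'' just restates the winning condition. A decidable word problem cannot supply a finite abstraction either, since even one-player control-state reachability over $\Pushdown(\Group)$ is undecidable. The missing idea is the big/small dichotomy (\cref{big-elements-zero} and its stack-height-$i$ version). After shortcutting pushes by the saturated summaries $R^i_q$, every eligible state has a $\forall$-witness in which each branch applies at most $|Q|$ generators to the current top entry before popping it. So a top element of word length $>|Q|$ can never be cancelled in time, and $\forall$ wins outright. Hence only elements of word length $\le|Q|$ need tracking, via the restricted Cayley graph $\mathcal{C}^{\le|Q|}$. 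Saturating only sets of control states then keeps the algorithm in $\EXPTIME$.

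Third, letting $\exists$ commit to a ``formal variable'' $g$ inside the abstract game is unsound. A single $g$ must satisfy $g[w_b]=1$ for \emph{all} branches $b$ of her strategy that pop that entry. In-game commitment instead lets her choose $g$ after seeing $\forall$'s moves. The paper reduces to fixed credit instead: $\exists$ first writes a bounded credit explicitly onto the stack. Then $n$ two-state DFAs with suffix-closed intersection (\cref{lemma:DFAs}) let $\forall$ punish her if she stalls by growing the stack to $2^n$ entries.

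\textbf{Other cases.} These have smaller holes.

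Your $\EXPTIME$-hardness argument misses classes whose graphs outside $\VASS\times\Group$ have only one unlooped vertex $v$ together with a looped vertex $u$ not adjacent to $v$. There you must encode stack letters as $vu^iv$, as in \cref{single-unlooped-node}.

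Your $\PTIME$-hardness argument fails. Over the empty graph or looped graphs, every configuration is right-invertible and arenas have no dead ends, so $\exists$ wins every instance. Hardness needs an unlooped vertex $a$: reduce safety games by putting $\bar a$-loops on bad states. This requires $\Vdim(\Gr)\ge 1$.

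For~(d), you neither identify the illegal graphs (i)--(iv) of \cref{fig:illegal-graphs} nor prove that avoiding them forces membership in $\Pushdown(\Group)\times\Group\cup\VASS\times\Group$ (\cref{thm:without-illegal-subgraph}). Moreover, for graph~(i) the standard two-counter reduction breaks, because a $\Z$-counter cannot stop $\exists$ from looping. That is why the paper encodes $(m_1,m_2)$ as $bab^{m_1+m_2}c^{m_2}$ with dedicated decrement gadgets. In the paper's reductions for (i) and (ii), credit-irrelevance comes from the fresh marker $ba$, not from resetting the storage.
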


Let us briefly interpret \cref{thm:FIC-EG,thm:UIC-EG} in terms of concrete
storage mechanisms. First, note that $\Gr \subseteq \VASS \times \Group$ means
that $\Gr$ corresponds to VASS (up to a group component, which does not affect
the game). Then, $\Vdim(\Gr) < \infty$ means the VASS dimension is bounded by
some fixed $d \in \N$, while $\Vdim(\Gr) = \infty$ means $\Gr$ corresponds to
VASS of arbitrary dimension. 
Moreover, graphs in $\Pushdown(\Group)$ correspond to group pushdown systems
(see \cref{preview}) where the group is a graph group.

Thus, in both theorems, case~(a) corresponds to fixed-dimensional VASS.  In
\cref{thm:FIC-EG}, case~(b) means we allow group pushdowns or fixed-dimensional
VASS, and case~(c) is the same as (b), except that the VASS have arbitrary
dimension.  In \cref{thm:UIC-EG}, case~(b) corresponds to arbitrary VASS,
whereas case~(c) means we have some group pushdown system, and VASS of either
fixed or arbitrary dimension.

\myparagraph{Discussion: Decidability through viability semantics}
Our results show that fixed and unknown initial credit problems are decidable precisely for a class $\Gr$ included in $\Pushdown(\Group)\times\Group~\cup~\VASS\times\Group$. Since a
direct product with $\Group$ is irrelevant for $\FICEG$ and $\UICEG$, this means
the interesting settings are $\Pushdown(\Group)$ and $\VASS$. 

Here, $\FICEG(\VASS)$ and $\UICEG(\VASS)$ are the same as multi-dimensional energy games,
meaning our results reveal $\Pushdown(\Group)$ as a new class of systems with
decidable games.

This new class $\Pushdown(\Group)$ includes $\Pushdown(\ZVASS)$, i.e.~stacks
where entries contain vectors over $\Z^d$, for some $d\in\N$. These are part of
the class of \emph{stacked counter
automata}~\cite{DBLP:conf/stacs/Zetzsche15,DBLP:journals/iandc/Zetzsche21}.
Having a decidable game type for $\Pushdown(\ZVASS)$ is noteworthy, because
these systems behave as unfavorably to classical games as $\VASS$: In games
where the arena ensures validity of instructions, (i)~control-state
reachability games, (ii)~configuration reachability games, and
(iii)~non-termination games (where the objective is just to construct an arbitrary
infinite run), are all undecidable for $\Pushdown(\ZVASS)$. For (i),(ii), this
was shown in \cite[Thms.~19.2.1,19.2.10]{Muskalla23}. \onlyFull{For (iii), we provide a
simple proof in \cref{app:infinite-runs-pdtwoz}.}\onlyConference{For (iii), we provide a
simple proof in the full version~\cite{full}.}

Therefore, in $\Pushdown(\Group)$, we identify a new class of systems where
the viability semantics achieves decidability.

Decidability of \nrgs\ for $\Pushdown(\Group)$ is noteworthy for another reason:
In $\Pushdown(\Group)$ (beyond $\Pushdown(\ZVASS)$), even ordinary
(i.e.\ single-player) control-state reachability is undecidable. This is
because the rational subset membership problem for graph groups (which is
undecidable in general~\cite[Theorem 2]{lohrey2008submonoid}) reduces to
control-state reachability in $\Pushdown(\Group)$.

\myparagraph{Discussion: Subtle decidability border}
Let us briefly discuss the subtle decidability border for $\FICEG$ and $\UICEG$ for graphs of the form $\Pushdown(\twocounters{0.75})$, i.e.\ a stack of two counters, which are either $\Z$-VASS counters or VASS counters. If both are VASS counters, then \nrgs\ are undecidable by a similar reduction as for ordinary safety games for $2$-VASS. It turns out, the construction can be adapted if one counter is a $\Z$ counter, i.e.\ for $\Pushdown(\btimesz{0.75})$. However, if both are $\Z$ counters, then we are in $\Pushdown(\Group)$, where \nrgs\ are decidable.

\section{Proof overview and key ingredients}\label{proof-overview}

\definecolor{myBlue}{HTML}{648FFF}
\definecolor{myPurple}{HTML}{785EF0}
\definecolor{myMagenta}{HTML}{DC267F}
\definecolor{myOrange}{HTML}{FE6100}
\definecolor{myYellow}{HTML}{FFB000}

\definecolor{myGreen}{HTML}{1B9E77} %

In this section, we give an overview of the proofs of \cref{thm:FIC-EG,thm:UIC-EG}. The individual ingredients will be shown in the subsequent sections.

\newcommand{\nodedistance}{0.7}
\begin{figure}[h]
  \centering
  \begin{minipage}{0.23\textwidth}
    \centering
    \begin{tikzpicture}[scale=1]
      \foreach \x in {0,1,2} {
	\coordinate (p\x) at (\nodedistance*\x,0);
	\fill (p\x) circle (2pt);
	}
      \draw (p1) -- (p2);
      \draw [densely dotted] (p0.center) ++(90:3pt) circle (3pt);
      \draw (p2.center) ++(90:3pt) circle (3pt);
      \node[left=1.5mm of p0] {\small (i)};
    \end{tikzpicture}
  \end{minipage}
  \begin{minipage}{0.23\textwidth}
    \centering
    \begin{tikzpicture}[scale=1]
      \foreach \x in {0,1,2} {
	\coordinate (p\x) at (\nodedistance*\x,0);
	\fill (p\x) circle (2pt);
	}
      \draw (p1) -- (p2);
      \draw[densely dotted] ($(p0.center)+(0,3pt)$) circle [radius=3pt]; 
      \node[left=1.5mm of p0] {\small (ii)};
    \end{tikzpicture}
  \end{minipage}
   \begin{minipage}{0.23\textwidth}
    \centering
    \begin{tikzpicture}[scale=1]
      \foreach \x in {0,1,2} {
	\coordinate (p\x) at (\nodedistance*\x,0);
	\fill (p\x) circle (2pt);
	}
      \draw (p0) -- (p1);
      \draw (p1) -- (p2);
      \draw[draw=none] ($(p0.center)+(0,3pt)$) circle [radius=3pt];
      \node[left=1.5mm of p0] {\small (iii)};
    \end{tikzpicture}
  \end{minipage}
    \begin{minipage}{0.23\textwidth}
    \centering
    \begin{tikzpicture}[scale=1]
      \foreach \x in {0,1,2} {
	\coordinate (p\x) at (\nodedistance*\x,0);
	\fill (p\x) circle (2pt);
	}
      \draw (p0) -- (p1);
      \draw (p1) -- (p2);
      \draw ($(p2.center)+(0,3pt)$) circle [radius=3pt];
      \node[left=1.5mm of p0] {\small (iv)};
    \end{tikzpicture}
  \end{minipage} 
  \caption{Illegal graphs. Dotted edges indicate that both versions (with or without the edge) are illegal. For each graph, we name the vertices $a$, $b$ and $c$ from left-to-right.}   \label{fig:illegal-graphs}
\end{figure}

\myparagraph{Undecidable cases}
Let us begin by mentioning those graphs for which we show
undecidability (both of $\FICEG$ and $\UICEG$). We call the graphs shown in \cref{fig:illegal-graphs} the \emph{illegal graphs}. Note that in \cref{fig:illegal-graphs}, a dotted edge means it can be present or not (and both variants are deemed illegal). We will show that for these graphs, both $\FICEG$ and $\UICEG$ are undecidable:
\begin{restatable}{theorem}{undecidability}\label{thm:undecidability}
If $\Gamma$ is an illegal graph, then both $\FICEG(\Gamma)$ and $\UICEG(\Gamma)$ are undecidable.
\end{restatable}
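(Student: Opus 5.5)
We reduce from the halting problem for two-counter (Minsky) machines, in a form suited to the universal player. The standard template is the undecidability of safety/non-termination games on $2$-VASS. There, $\exists$ simulates the Minsky machine and claims zero tests. Whenever she claims a counter is zero, $\forall$ may challenge by trying to decrement it. Whenever she claims it is nonzero, she herself must decrement and re-increment it. In the viability setting, all checks become validity conditions. First, $\exists$ wins iff the machine does not halt: non-halting runs are infinite, and halting states get a self-loop with an invalid instruction. Second, $\forall$ can punish a false zero claim by executing a decrement, which is invalid iff the counter is actually zero. So each illegal graph must supply two "counters" whose emptiness can be exploited in this way. We treat each of the four graphs of \cref{fig:illegal-graphs} (naming vertices $a,b,c$) separately, and for the dotted variants we check that the construction never relies on the presence or absence of the dotted edge or loop.

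Cases (iii) and (iv) are handled first. Here $b$ commutes with both $a$ and $c$, and $a,c$ are non-adjacent; in (iv) $c$ is additionally looped. Vertex $b$ is a VASS counter, and $a$ serves as an unlooped stack symbol. In (iii) we have the pushdown $\{a,c\}$ together with the counter $b$, i.e. a one-dimensional pushdown VASS. We encode the second Minsky counter as the height of a stack of $a$'s, and the first counter as $b$. A zero test of the stack counter is realised by pushing a marker $c$ first. $\forall$ can challenge a claim "stack is empty of $a$" by popping $c$ and then popping $\bar a$, which is invalid iff no $a$ lies below. Symmetrically, challenging "nonzero" requires $\exists$ to pop and re-push. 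In (iv) the looped $c$ does not commute with $a$, so $c\bar c$ still blocks: $[a c\bar a]$ is not right-invertible unless the $c$-content is cancelled. We can therefore use $c$ as a separator in the same way. Alternatively, we invoke the observation from the introduction that pushdown energy games are undecidable \cite{AbdullaAHMKT14} and reduce them directly to (iii).

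Cases (i) and (ii) are the stack-of-counters situations $\Pushdown(\btimesz{0.75})$ and $\Pushdown(\twob{0.75})$, with $a$ the stack letter and $\{b,c\}$ the counter pair. The dotted loop on $a$ does not matter: we only use $a$ as a bottom separator pushed once, so the loop never matters for validity, and for the unknown credit version we first push $a$. In (ii), $b$ and $c$ are two VASS counters inside the top stack entry, and the $2$-VASS safety-game reduction applies verbatim. In (i), $c$ is a $\Z$ counter, so only one true VASS counter $b$ remains, and we must obtain the zero tests of the second Minsky counter from the stack. The idea is that popping $\bar a$ is valid only if the whole top entry, including the $\Z$-component $c$, is exactly neutral. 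Thus $\exists$ keeps the second counter value $n$ in $c$ and the first in $b$. When $\exists$ claims that both are zero, $\forall$ may challenge with $\bar a$. When only the $c$-counter is claimed zero, $\exists$ must first move the content of $b$ into a fresh stack entry, or $\forall$ forces a protocol in which the $b$-content is transferred and later checked. We expect designing this transfer protocol so that cheating is always detectable by a single pop is the main obstacle. As a first attempt, $\exists$ pushes $a$, copies $b$ into the new entry's $b$, and $\forall$ verifies by checks using the pop condition. Finally, for $\UICEG$ we add an initial gadget that lets $\forall$ demand the empty configuration: $\forall$ can attempt pops and decrements, punishing any nonempty surplus in the controlled part. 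Alternatively, we make the simulation robust to extra initial content below a pushed fresh separator $a$, which works uniformly in all four cases since each contains the unlooped $a$ or $c$ that acts as a separator.
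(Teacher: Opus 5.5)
\textbf{Gap 1: the basic zero-test mechanism is reversed.} In a viability game, any non-right-invertible configuration loses for $\exists$, no matter who produced it. So a $\forall$-decrement that is invalid iff the counter is zero punishes a \emph{true} zero claim, and $\forall$ could defeat every honest zero test. Your (iii) gadget ($\bar a$ invalid iff no $a$ lies below) and your first $\UICEG$ gadget are reversed in the same way. Your second $\UICEG$ alternative, insensitivity to content below a fresh separator, is sound and is what the paper does.

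More decisively, no zero test built from counter increments and decrements alone can work. If $a$ only serves as a bottom separator, as you say, the game on (ii) is just a $2$-dimensional energy game, and these are in $\PTIME$ (\cref{complexity-vass-case}). The $2$-VASS safety-game reduction relies on the arena disabling a decrement at zero, so it does not transfer verbatim.

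The only instruction whose validity can require something to be zero is a pop of the separator, which needs the whole top entry to be neutral. The paper's zero test for (ii) (\cref{fig:gadgets-ii}) is built on exactly this. $\forall$ challenges by moving to an $\exists$-state. There $\exists$ may drain the \emph{untested} counter, but only finitely often, since each drain is a VASS decrement. She must then fire the separator pop, which is valid iff both counters are zero.

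\textbf{Gap 2: the loop on $a$ does matter.} If $a$ is looped, then $[ab\bar a]$ is right-invertible (right inverse $[a\bar b\bar a]$), so popping $\bar a$ tests nothing. This is why the paper pushes the compound $ba$ and pops with $\bar a\bar b$: $[bab^kc^m\bar a\bar b]$ is right-invertible iff $k=m=0$, with or without the loop.

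\textbf{Gap 3: case (i) is left open.} This is the case that needs a genuinely new idea. If the second counter lives in the $\Z$-counter $c$, then while draining before the test $\exists$ can loop $\bar c$ forever and never face the pop. The paper encodes $(m_1,m_2)$ as $bab^{m_1+m_2}c^{m_2}$, with $\texttt{inc}(2)=bc$ and $\texttt{dec}(2)=\bar b\bar c$. Then every draining loop contains a VASS decrement $\bar b$ and must terminate. This encoding no longer enforces $m_1,m_2\ge 0$, so decrements are also routed through a challenge gadget (\cref{fig:gadgets-i}). In that gadget $\exists$ must be able to drain to exactly $ba$ and pop.

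\textbf{Gap 4: cases (iii) and (iv) need a translation.} For (iii), citing \cite{AbdullaAHMKT14} matches the paper. You still need to neutralise $\forall$'s ability to fire mismatching pops, which do not exist in pushdown energy games. The paper routes each $\forall$-pop through a fresh $\exists$-state that offers the other pops into a winning sink. For (iv), $c$ cannot act as a separator: it is looped, so $\bar c$ is always valid. The paper instead substitutes $c\mapsto cac$ and $\bar c\mapsto\bar c\bar a\bar c$.
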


\myparagraph{Pure VASS-like cases} As apparent from \cref{thm:FIC-EG,thm:UIC-EG}, the decidable cases are those where every graph in $\Gr$ belongs to $\Pushdown(\Group)\times\Group$ or to $\VASS\times\Group$ (or both). Since in our setting, a direct product with a group will have no influence on the game (as right-invertibility is not affected), we may as well assume that the graphs belongs to $\Pushdown(\Group)\cup\VASS$.
For graphs in $\VASS$, the problems $\FICEG$ and $\UICEG$ are just energy games with unary-encoded numbers, and thus complexity follows from existing results:
\begin{restatable}{theorem}{complexityVASScase}\label{complexity-vass-case}
Let $\Gr\subseteq\VASS\times\Group$. If $\Vdim(\Gr)=\infty$, then 
$\FICEG(\Gr)$ is $\twoEXPTIME$-complete and $\UICEG(\Gr)$ is $\coNP$-complete.
If $\Vdim(\Gr)<\infty$, then both
$\FICEG(\Gr)$ and $\UICEG(\Gr)$ are $\PTIME$-complete.
\end{restatable}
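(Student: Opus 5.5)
The plan is to reduce both problems, in both directions, to classical multi-dimensional energy games with unary-encoded weights, and then import the known complexity bounds. Upper bounds first. Let $\Gamma\in\Gr\subseteq\VASS\times\Group$, so $\Gamma=\Delta\times\Lambda$ with $\Delta$ an unlooped $d$-clique and $\Lambda$ looped. Since $\Mon_\Gamma\cong\Mon_\Delta\times\Mon_\Lambda$ and $\Mon_\Lambda$ is a group, $[w]$ is right-invertible iff its projection to $\Mon_\Delta$ is. I will therefore erase all letters from $\Lambda\cup\bar\Lambda$ on transitions and in the initial word; this does not change the winner. Next, I identify $\RInv{\Mon_\Delta}$ with $\N^d$ and $\Mon_\Delta$ with a monoid whose elements are described by counter effects together with ``dips below zero''. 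A transition label $w\in(V\cup\bar V)^*$ is then replaced by a path of fresh intermediate states (owned by the owner of the source state) with one $\pm e_a$ step per letter. Right-invertibility of every configuration along a play of the original arena is exactly the non-negativity of all counters at every intermediate step, because $[uv]$ stays right-invertible only if each prefix keeps each $a$-count non-negative; once invalid, it stays invalid. This yields a $d$-dimensional energy game of polynomial size with unary weights in $\{-1,0,1\}$, and the construction is logspace. For $\FICEG$ the initial vector is the (valid, else $\exists$ loses immediately) effect of $w$, in unary; for $\UICEG$ the question is the unknown-initial-credit energy problem.

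Then I invoke the literature. For fixed $d$, the fixed initial credit energy game with unary weights and unary credit is solvable in polynomial time~\cite{JurdzinskiLS15,DBLP:conf/lics/ColcombetJLS17} (pseudo-polynomial $n^{O(d)}$-type bounds). Unknown initial credit for fixed dimension is also polynomial for unary weights via the same pseudo-polynomial bounds~\cite{JurdzinskiLS15}. For arbitrary dimension, fixed credit is in $\twoEXPTIME$ (pseudopolynomial/doubly exponential bounds~\cite{JurdzinskiLS15,DBLP:conf/lics/ColcombetJLS17}), and unknown credit is in $\coNP$~\cite{ChatterjeeDHR10}.

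For the lower bounds, I translate in the other direction: an energy game with unary weights over dimension $d$ becomes a valence arena over an unlooped $d$-clique by writing a weight vector $\bu$ as the word $a_1^{u_1^+}\cdots a_d^{u_d^+}\bar a_1^{u_1^-}\cdots\bar a_d^{u_d^-}$; placing decrements after increments ensures intermediate invalidity coincides with the final vector being negative. Since $\Gr$ is closed under induced subgraphs and has $\Vdim(\Gr)=\infty$, every unlooped clique lies in $\Gr$, so the $\twoEXPTIME$-hardness of fixed-credit energy games in arbitrary dimension~\cite{JurdzinskiLS15,DBLP:conf/lics/ColcombetJLS17} and $\coNP$-hardness of unknown credit~\cite{ChatterjeeDHR10} transfer. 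For $\PTIME$-hardness, I use a single vertex (or even the empty graph, if nonempty classes are assumed) and reduce from finite-graph safety or reachability games, which are $\PTIME$-hard. Here a losing sink is encoded as a self-loop labeled $\bar a$ from the empty configuration.

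The main obstacle I expect is the bookkeeping in the hardness citations. The cited hardness results must hold for unary-encoded weights, and in the unknown-credit case the hardness must survive the fact that the credit quantifies over all of $\Mon_\Gamma$ rather than only over $\N^d$. The latter is harmless because invalid initial elements lose immediately. The $\twoEXPTIME$ lower bound of Jurdzi\'nski et al.\ is indeed for weights in $\{-1,0,1\}$, which I will double-check; if not, I will insert a standard unary expansion.
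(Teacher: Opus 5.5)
Your proposal takes the same route as the paper. The paper also treats viability games over $\VASS\times\Group$ as energy games with unary-encoded weights, ignoring the $\Group$ part since it never affects right-invertibility, and imports the bounds from \cite{JurdzinskiLS15} and \cite{ChatterjeeDHR10}; you additionally write out both translations and the $\PTIME$-hardness reduction. One correction: your parenthetical about the empty graph is wrong, because over a graph with only looped vertices (in particular the empty graph) every element of $\Mon_\Gamma$ is right-invertible and arenas have no dead ends, so $\exists$ always wins and the problem is not $\PTIME$-hard; your hardness argument therefore needs an unlooped vertex, i.e.\ $\Vdim(\Gr)\ge 1$, a degenerate case that the statement itself also overlooks.
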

It is immediate from the definitions that the case $\Vdim(\Gr)=\infty$
corresponds to energy games where the dimension is arbitrary (and part of the
input). Here, the fixed initial credit setting $\FICEG$ is
$\twoEXPTIME$-complete~\cite[Cor.~6.1]{JurdzinskiLS15} and the unknown initial
credit setting $\UICEG$ is $\coNP$-complete~\cite[Thm.~9]{ChatterjeeDHR10}.
Moreover, $\Vdim(\Gr)<\infty$ is the case of fixed dimension. Here, both
settings are $\PTIME$-complete, see \cite[Thms.~3.3\&3.5]{JurdzinskiLS15}.

\myparagraph{Pure Pushdown-like cases}
Now suppose $\Gr\subseteq\Pushdown(\Group)\times\Group$, or equivalently
$\Gr\subseteq\Pushdown(\Group)$. Here we discover a novel decidability result:
While control-state reachability games and configuration reachability games
(and even ordinary control-state reachability, without games!) are undecidable,
viability games are decidable for $\Pushdown(\Group)$ storage mechanisms.
Moreover, we will show that both $\FICEG$ and $\UICEG$ are $\EXPTIME$-complete
for this class:
\begin{restatable}{theorem}{mainPushdown}\label{main-pushdown}
Suppose $\Gr\subseteq\Pushdown(\Group)\times\Group$. Then $\FICEG(\Gr)$ and $\UICEG(\Gr)$ are in $\EXPTIME$.  $\EXPTIME$-hardness holds as soon as $\Gr$ contains a graph that does not belong to $\VASS\times\Group$.
\end{restatable}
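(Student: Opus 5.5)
The upper and lower bounds are proved separately. Since letters from the $\Group$ factor never affect right-invertibility, we first delete them and assume $\Gamma\in\Pushdown(\Group)$, i.e.\ $\Gamma=(W\cup U,E)$ with $U$ independent and unlooped and $(W,E)$ looped. We then work with the $G$-pushdown view of \cref{preview}, where $G$ is the graph group of $(W,E)$. Its word problem is in $\PTIME$, hence certainly in $\EXPTIME$. Transition labels in $(W\cup U\cup\bar W\cup\bar U)^*$ are split into single instructions by adding intermediate states, which is a polynomial blow-up.

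\textbf{Upper bound.} The plan is to adapt Walukiewicz's summary construction for pushdown games, which yields an exponential finite game. We play "level by level". At a push of $u\in U$ in state $p$, $\exists$ announces a set $R\subseteq Q$ of states in which she promises the matching pop of $u$ can happen. $\forall$ then either challenges a claimed return state $r\in R$, continuing the current level in state $r$, or enters the new level from $1\in G$ and checks the promise. Inside one level, the only unbounded information is the top group element $g$. A pop $\bar u$ is valid iff $g=1$ and the letter matches, and the level's play either returns in $R$ or continues forever. Invalid moves by $\forall$ (a mismatched or premature pop) are losing for $\exists$, so in a level game a pop is a target for $\exists$ and a trap for both players otherwise.

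The central step is to remove $g$. Consider the set $L_{p,R}$ of all group words $w$ labelling maximal same-level paths from the push to a pop in $R$ that are compatible with an $\exists$ strategy. These pop-reaching paths are the ones that matter when $\forall$ plays adversarially, and they form a regular set once deeper levels are summarized by their entry/exit state pairs. The pop is guaranteed valid for all $\forall$ choices iff every word of $L_{p,R}$ evaluates to $1$ in $G$.

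\textbf{Main obstacle.} The main obstacle is making this check effective. We need a lemma of the form: a regular language $L$ over $W\cup\bar W$, given by an NFA with $n$ states, satisfies $[w]=1$ for all $w\in L$ iff this holds for all $w\in L$ of length at most $2n$. The proof is a pumping argument. If $uv$ and $ucv$ are both in $L$ and both evaluate to $1$, then $[c]=1$ in the group, since $[ucv]=[uv]$ gives $[u][c][u]^{-1}=1$. By induction on length, every long word then reduces to a short one with the same value. The step requiring care is combining this with game strategies: a strategy is not a language, because $\exists$ adapts to $\forall$'s moves. I will handle this by letting $\exists$ commit, per level, to a positional strategy in the finite summarized arena. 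Finite-state safety/reachability games with summaries admit such strategies, and there are exponentially many candidates. Each candidate is checked by the lemma, using exponentially many word-problem calls on words of exponential length. An alternating polynomial-space procedure organizes these guesses and checks, giving $\EXPTIME$ when the word problem is in $\EXPTIME$.

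For $\UICEG$, the initial stack $g_0a_1g_1\cdots a_mg_m$ is only ever inspected through pops below the starting level. There are two cases. Either $\exists$ wins without ever popping below level $0$; or she uses the existing entries, each of which acts as one level game started from an arbitrary unknown $g_i$. In the second case we let $\exists$ choose, at each such pop, any return state. Validity of a path starting from an unknown $g_i$ just requires all of its words to evaluate to the same element $g_i^{-1}$, and the same pumping lemma applies to this condition. Existence of such an initial content then reduces to a finite alternating reachability question, again in $\EXPTIME$.

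\textbf{Lower bound.} If $\Gr$ contains a graph outside $\VASS\times\Group$, then, being closed under induced subgraphs, it contains two non-adjacent unlooped vertices $a,b$. These yield an ordinary pushdown with binary stack alphabet. We reduce from pushdown safety games, which are $\EXPTIME$-hard by Walukiewicz's result. We push a bottom marker $a$, encode the stack symbols over $\{a,b\}$ with a fixed prefix code, and turn each bad state into a state whose only move is an invalid pop (e.g.\ $\bar b\bar b$ directly after pushing $a$). Winning $\RIO$ then coincides with avoiding bad states while making only legitimate pops. For $\UICEG$ we let the game start with a gadget in which $\forall$ can test that the initial stack is empty above the marker, by popping everything until the marker is reached.
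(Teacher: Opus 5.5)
\paragraph{Upper bound.} Your central step fails. The level objective is that every pop on the current level happens with top element $g=1$. This depends on the unbounded value of $g$, so it is not a safety or reachability condition on the finite summarized arena. Positional determinacy of finite games therefore does not let you restrict $\exists$ to positional strategies there. Checking a fixed positional strategy with your language lemma is at best sound, but it is incomplete.

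Here is a counterexample with one unlooped $\gamma$ and one looped $b$. After a push of $\gamma$, $\forall$ multiplies by $b$ or by $\bar b$, via two distinct states. Both states lead to a single $\exists$-state $t$ that offers $\bar b$ or $b$, after which $\forall$ must pop $\gamma$. $\exists$ wins by inverting $\forall$'s choice. Yet each positional choice at $t$ admits a pop-reaching path with word $bb$ or $\bar b\bar b$, so your procedure reports a $\forall$-win.

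In general $\exists$ must remember the current group element. For free groups, or $\Mon_\Delta\cong\Z^k$ with $k$ part of the input, there are exponentially many relevant values. So the candidate strategies are no longer exponentially many, and your alternating polynomial-space bookkeeping has nowhere to store $g$.

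What is missing is a reason why only bounded information about $g$ matters. This is the paper's ``too big to cancel in time'' lemma (\cref{big-elements-zero}, \cref{lemma:i-eligible}). From an $i$-eligible state, a top element of word length $>|Q|$ already loses for $\exists$, because $\forall$ has a witness tree of height $\le|Q|+1$ in the shortcut automaton. This lets the paper track $g$ inside the restricted Cayley automaton $\mathcal{C}^{\le|Q|}$ while saturating the sets $R^i_q$ on the $\forall$ side, where strategy trees are finite. That yields $\EXPTIME$ without enumerating $\exists$-strategies.

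\paragraph{$\UICEG$ and lower bound.} Your $\UICEG$ argument inherits the same problem. It also never bounds the relevant part of the unknown initial stack, even though different branches pop to different depths. Reducing to ``a finite alternating reachability question'' needs two things you do not supply. One is a pigeonhole argument on the sets of states reached after popping the $i$-th entry; this gives fewer than $2^n$ entries, each of word length $\le n$ (\cref{thm:initial-credit-ub}). The other is a way for $\exists$ to guess such a stack without stalling forever, which the paper achieves with the suffix-closed DFA gadget of \cref{lemma:DFAs}.

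Your lower bound rests on a false claim. A graph in $\Pushdown(\Group)\times\Group$ outside $\VASS\times\Group$ need not contain two unlooped vertices. One unlooped $a$ with a non-adjacent looped $b$ suffices; take $\Gr$ to be the induced subgraphs of this two-vertex graph. This case needs its own encoding, for example stack symbol $i$ as $ab^i$, popped by $\bar b^{\,i}\bar a$.

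Moreover, $\RIO$ lets $\forall$ win by popping a symbol that is not on top, a move that does not exist in the pushdown game. So ``winning $\RIO$ coincides with avoiding bad states'' is false as stated. You must route each $\forall$-pop through an $\exists$-state that pops the actual top symbol and, on a mismatch, moves to a winning sink, as the paper does.
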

Regarding hardness, note that a graph in $\Pushdown(\Group)\times\Group$ belongs to $\VASS\times\Group$ if and only if it contains at most one unlooped vertex $v$, and all the unlooped vertices are adjacent to $v$. Thus, $\Gamma$ does not belong to $\VASS\times\Group$ if and only if $\Gamma$ contains (i)~two unlooped vertices or (ii)~one unlooped vertex $v$ and a looped vertex $u$ that is not adjacent to $v$. In both cases, one can reduce classic pushdown (non-termination) games to $\FICEG(\Gamma)$ and $\UICEG(\Gamma)$ which is known to be $\EXPTIME$-hard~\cite{Walukiewicz01}.

The interesting part of \cref{main-pushdown} is the decidability and $\EXPTIME$ upper bound of $\FICEG$ and $\UICEG$. These will be shown in \cref{sec:FICEG,sec:UICEG}.

\myparagraph{Completeness of the characterization} In addition to \cref{thm:undecidability,complexity-vass-case,main-pushdown}, we also need to show that every class $\Gr$ of graphs falls into one of the cases in \cref{thm:FIC-EG,thm:UIC-EG}. This will follow from the following, which we prove in \cref{sec:completeness}:
\begin{restatable}{theorem}{withoutIllegalSubgraph}\label{thm:without-illegal-subgraph}
Every graph that contains no illegal graph as an induced subgraph belongs to
$\Pushdown (\Group ) \times \Group \cup \VASS \times \Group$.
\end{restatable}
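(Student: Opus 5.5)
The plan is a purely graph-theoretic case analysis. First I split off the ``central'' looped part, then I show that the remaining graph is either a stack over a group or an unlooped clique times a group. Throughout, write $L$ for the looped vertices of $\Gamma$ and $U$ for the unlooped ones. The forbidden patterns of \cref{fig:illegal-graphs} will be used in the following form, where in each case the named vertices are distinct:
\begin{itemize}
\item[(i)] An unlooped $b$ adjacent to a looped $c$ forces every third vertex to be adjacent to $b$ or $c$.
\item[(ii)] Two adjacent unlooped vertices force every third vertex to be adjacent to one of them.
\item[(iii)] There is no induced path $a$--$b$--$c$ with all three vertices unlooped.
\item[(iv)] There is no induced path $a$--$b$--$c$ with $a,b$ unlooped and $c$ looped.
\end{itemize}
The absence of illegal subgraphs passes to induced subgraphs, so all of these facts can be applied to any induced subgraph as well.

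\textbf{Step 1 (factor out universal looped vertices).} Let $Z\subseteq L$ be the set of looped vertices adjacent to every other vertex. Then $Z$ is a looped clique and $\Gamma=\Gamma'\times Z$ with $Z\in\Group$, where $\Gamma'$ is the subgraph induced by the remaining vertices. Since $\Gr\times\Group\times\Group\subseteq\Gr\times\Group$, it suffices to show $\Gamma'\in\Pushdown(\Group)\cup\VASS\times\Group$. In $\Gamma'$, every looped vertex has a non-neighbour. The induction-free restatement is then: a graph without illegal induced subgraphs and without universal looped vertices lies in $\Pushdown(\Group)\cup\VASS\times\Group$. Step 1 is necessary because, for example, the path $x$--$y$--$u$ with $x,u$ unlooped and $y$ looped is legal, but lies in $\Pushdown(\Group)\times\Group$ only after $y$ is factored out.

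\textbf{Step 2 (the two cases).} In $\Gamma'$ I distinguish two cases.
\begin{itemize}
\item[(A)] Every unlooped vertex is isolated. Then $U$ is independent and has no edges to $L$, so $\Gamma'=(L\cup U,E)$ with $(L,E)\in\Group$. Hence $\Gamma'\in\Pushdown(\Group)$; if $U=\emptyset$, then simply $\Gamma'\in\Group$.
\item[(B)] Some unlooped vertex $u$ has a neighbour $y$. Here I aim to show that every unlooped vertex is adjacent to all other vertices. Then $U$ is an unlooped clique joined to everything, so $\Gamma'=U\times L$ with $L\in\Group$, which gives $\Gamma'\in\VASS\times\Group$.
\end{itemize}

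\textbf{Step 3 (case (B) in detail).} First I show that $u$ itself is universal. Suppose some $x$ is not adjacent to $u$. By (i) or (ii), depending on whether $y$ is looped, $x$ must be adjacent to $y$, giving an induced path $u$--$y$--$x$.
\begin{itemize}
\item If $y$ is unlooped, this path is (iii) when $x$ is unlooped and (iv) when $x$ is looped; both are illegal.
\item If $y$ is looped, I use Step 1: $y$ has a non-neighbour $z$ in $\Gamma'$. By (i) applied to $u\sim y$, $z$ must be adjacent to $u$. Now consider the induced subgraph on $\{x,y,z,u\}$ and run through the possible loop types of $x$ and $z$ and the possible edge $x$--$z$. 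In each sub-case I aim to exhibit one of the patterns (i)--(iv). For instance, if $z$ is unlooped, then $u\sim z$ are two adjacent unlooped vertices and $x$ must be adjacent to $z$ by (ii). Then $x$--$z$--$u$ is an induced path with $z,u$ unlooped, which is (iii) or (iv).
\end{itemize}
Once one unlooped vertex $u$ is universal, every other unlooped vertex $v$ has $u$ as an unlooped neighbour, so the same argument applied to $v$ shows that $v$ is universal as well.

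I expect the main obstacle to be exactly this sub-case where $y$ is looped and $z$ is looped. There, the non-neighbour $z$ of $y$ must be played off against $x$. The sub-case where $x$ and $z$ are both looped seems to need a second application of (i) to the pair $u\sim z$, which forces $x\sim z$. After that I would look for a contradiction with (i) using the triple $(x,u,\cdot)$, and if this does not close the case, iterate Step 1's non-universality on $z$ as well.
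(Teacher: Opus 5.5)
\textbf{There is a genuine gap.} It goes beyond the sub-case you left open: the claim you aim for in case (B) is false, because Step 1 factors out too little. In $\Pushdown(\Group)\times\Group$, the $\Group$ factor only has to be joined to the \emph{remaining} vertices. It need not be a clique, so its vertices need not be universal.

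\textbf{Counterexample.} Take the $4$-cycle $u_1$--$x_1$--$u_2$--$x_2$--$u_1$, with $u_1,u_2$ unlooped and $x_1,x_2$ looped.
\begin{itemize}
\item It contains no illegal graph. The only unlooped vertices $u_1,u_2$ are non-adjacent, which rules out (ii)--(iv). Every vertex off an edge of a $4$-cycle is adjacent to one endpoint of that edge, which rules out (i).
\item It lies in the target class, since $\Gamma=\{u_1,u_2\}\times\{x_1,x_2\}\in\Pushdown(\Group)\times\Group$.
\item Your argument breaks on it. Since $x_1\not\sim x_2$, you get $Z=\emptyset$ and $\Gamma'=\Gamma$. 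Since $u_1$ has a neighbour, you are in (B), yet $u_1\not\sim u_2$.
\item In fact $\Gamma'\notin\Pushdown(\Group)\cup\VASS\times\Group$: its unlooped vertices have edges, and they do not form a clique. So your ``induction-free restatement'' is false.
\end{itemize}
The same happens for the $4$-cycle $u$--$x_1$--$w$--$x_2$--$u$ with only $u$ unlooped. It is legal, $Z=\emptyset$, and $u\not\sim w$. With your names $y=x_1$, $x=w$, $z=x_2$, this is exactly the ``$x$ and $z$ both looped'' sub-case you flagged. No further applications of (i), or of non-universality of $z$, can close it, because the graph is legal.

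\textbf{What is needed.} Split on whether the set $U$ of unlooped vertices contains an \emph{edge}, not on whether some unlooped vertex has any neighbour.

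If two unlooped vertices are adjacent, your argument for unlooped $y$ via (ii)--(iv) makes both of them universal. Your propagation step then makes all of $U$ universal, so $\Gamma\in\VASS\times\Group$. This part of your proposal is correct and matches the paper's first case.

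If $U$ is independent, let $X$ be the looped vertices with a neighbour in $U$, and $Y$ the other looped vertices. Applying (i) to a triple $u,u',x$ shows $X$ is complete to $U$. Applying (i) to a triple $y,u,x$ shows $X$ is complete to $Y$. Hence $\Gamma=(U\cup Y)\times X$, where $U\cup Y\in\Pushdown(\Group)$ and $X$ is an arbitrary looped graph, not necessarily a clique. This is the paper's proof, and it needs no factoring of universal vertices.
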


\myparagraph{Key technical ingredients} Our proofs contain several novel
technical ingredients. First, proving undecidability in the case of (i)
(\cref{fig:illegal-graphs}) requires a subtle encoding of two-counter machine:
Valence systems over (i) have a pushdown, where each entry contains one
$\N$-counter and one $\Z$-counter. Note that such an undecidability
proof is not possible if both are $\Z$-counters, as then we are in
the $\Pushdown(\ZVASS)$ setting, which is decidable. 

Second, our algorithm for $\FICEG(\Pushdown(\Group))$ relies on the insight
that if the stack contains a group element that \emph{does not have} a
representation among polynomial length words, then $\forall$ can force a
victory, because these elements are ``too big to cancel in time''. This implies
that detecting winning strategies can be confined to group elements with
representations of polynomial length $\ell$. However, a naive implementation of
this would encode the group elements in an exponential-sized stack alphabet
(resulting in a $\twoEXPTIME$ algorithm). Indeed, we cannot store the group
elements as words on the stack: After expanding them (potentially beyond length
$\ell$), we need to find a new representation of length $\le\ell$ (if one exists),
but this requires substantial rewriting far from the ends of the word. To
mitigate this, we devise an algorithm that, as in pure pushdown
systems~\cite{Cachat02}, saturates only sets of (sets of) control states. This
yields the optimal $\EXPTIME$ algorithm.

Third, our algorithm for $\UICEG(\Pushdown(\Group))$ reduces to
$\FICEG(\Pushdown(\Group))$. To this end, a combinatorial argument shows that
initial configurations of exponential stack height (and polynomial-sized
group-valued entries) always suffice. However, an additional trick is needed to
allow $\exists$ to guess an exponential-sized stack without having the
opportunity to grow it infinitely. We achieve this by adapting a recent
technique from indexed
languages~\cite{mandel2026complexitydownwardclosuresindexed}.

Fourth, a combinatorial argument shows that all graphs that avoid the illegal
graphs (\cref{fig:illegal-graphs}) belong to $\VASS\times\Group$ or to
$\Pushdown(\Group)\times\Group$. 

Note that in the literature on valence
systems, complete descriptions of decidability (let alone complexity)
landscapes are rare. For decidability, this has been achieved for first-order logic with
reachability~\cite{DBLP:conf/lics/DOsualdoMZ16}, certain underapproximations of
reachability that are designed to guarantee
decidability~\cite{DBLP:conf/concur/MeyerMZ18,DBLP:conf/concur/ShettyKZ21}, and
games on valence system arenas~\cite{Muskalla23} (but here,
only trivial cases are decidable).

\section{Group pushdowns with fixed initial credit}\label{sec:FICEG}
In this section, we show that $\FICEG(\Pushdown(\Group)\times\Group)$ can be
solved in $\EXPTIME$. Since in a group, every element is right-invertible, this
problem trivially reduces to $\FICEG(\Pushdown(\Group))$.  

\myparagraph{Group elements on the stack} However, the group elements
\emph{on the stack} do pose a non-trivial challenge. Indeed, recall that if
$\Gamma\in\Pushdown(\Group)$, then an element in $\RInv{\Mon_\Gamma}$ is of the
form $[w_{r+1}\gamma_{r}w_{r}\cdots \gamma_1w_1]$, where $\gamma_1,\ldots,\gamma_r$ are
from an independent, unlooped set of vertices (i.e.\ pushdown letters), and
$w_1,\ldots,w_{r+1}$ are over looped vertices (i.e.\ generators of a group).  Now, if
we perform a pop of $\gamma$ in this configuration, then it depends on the
top-most group element $[w_1]$ (and $\gamma_1$) whether this leads to a valid
configuration: If $[w_1]=1$ and $\gamma_1=\gamma$, then this will lead to the
right-invertible $[w_{r+1}\gamma_rw_r\cdots \gamma_{1}w_{1}]$, but if $[w_1]\ne
1$ or $\gamma\ne \gamma_1$, then the result is not right-invertible. Since
\nrgs\ for pushdown systems can easily be reduced via classical safety games on
pushdown systems, the (infinitely many) group elements on the stack constitute
the main difficulty.

\myparagraph{Preparation: single pushdown letter} We may
assume that the initial credit $\lambda$ is empty, since this can easily be encoded
in the arena.  We may also assume that there
is only a single pushdown letter (aside from group elements): If there
are pushdown letters $\{\gamma_1,\ldots,\gamma_k\}$ and looped vertices
$W$, then each entry $\gamma_iw$ with $w\in (W\cup\bar{W})^*$ in a configuration is encoded by $\gamma g^i \gamma w$, where $\gamma$ is a single fresh pushdown letter, and $g$ is an arbitrary letter in $W$. Thus, we encode one entry by two entries, the first of which encodes the specific original letter $\gamma_i$ used. This works because for $g\in W$, the elements $g^i$ for $i\in\N$ are pairwise distinct.
\begin{restatable}{lemma}{lemSingleUnloopedNode}\label{single-unlooped-node}
	$\FICEG(\Pushdown(\Group))$ reduces in poly-time to the case of a
	single unlooped vertex.
\end{restatable}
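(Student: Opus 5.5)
Given a graph $\Gamma=(W\cup U,E)\in\Pushdown(\Group)$ with unlooped independent set $U=\{\gamma_1,\ldots,\gamma_k\}$ and looped vertices $W$, a game arena over $\Gamma$ and an initial word, I construct a graph $\Gamma'=(W\cup\{\gamma\},E')$ and an arena over it. The edges $E'$ keep those among $W$. The new vertex $\gamma$ is unlooped and isolated from $W$, as in $\Gamma$, so $\Gamma'\in\Pushdown(\Group)$. If $W=\emptyset$, I first add one fresh looped vertex $g$ isolated from everything. This stays in $\Pushdown(\Group)$ and does not affect right-invertibility of words avoiding $g$. Otherwise I fix some $g\in W$.

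The translation is the monoid morphism $h\colon(W\cup U\cup\bar W\cup\bar U)^*\to(W\cup\{\gamma\}\cup\bar W\cup\{\bar\gamma\})^*$ given by
\[
h(w)=w \text{ for } w\in W\cup\bar W,\qquad h(\gamma_i)=\gamma g^i\gamma,\qquad h(\bar\gamma_i)=\bar\gamma\,\bar g^i\,\bar\gamma.
\]
Each transition label $x$ is replaced by $h(x)$, and the initial word $w$ by $h(w)$. The blowup is polynomial ($O(k)$ per letter). Transitions carry arbitrary words, so no intermediate states are needed and the game structure (owners, states, at most one transition between two states) is unchanged. Plays in the two arenas therefore correspond bijectively via the same state sequences.

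The core claim is that for every word $u$ over $\Gamma$, $[u]\in\RInv{\Mon_\Gamma}$ if and only if $[h(u)]\in\RInv{\Mon_{\Gamma'}}$. Once this holds, the configuration sequences of corresponding plays are valid at exactly the same positions, so strategies transfer verbatim and the winner is preserved.

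To prove the claim, I use the stack description of $\RInv{\Mon_\Gamma}$ from the paper. In $\Mon_\Gamma$, a word is right-invertible iff its greedy reduction never pops with the wrong letter or a nontrivial top group element. Concretely, I process letters left to right, maintaining a stack $g_0a_1g_1\cdots a_mg_m$ with $g_j$ in the graph group $G$ of $W$:
\begin{itemize}
\item a letter from $W\cup\bar W$ multiplies onto $g_m$;
\item a push $a$ appends $a\,1$;
\item a pop $\bar a$ succeeds iff $a_m=a$ and $g_m=1$.
\end{itemize}
I would state this as a lemma, proved by induction using that in the free-product-like structure $[x\bar a]$ with $x$ right-invertible is right-invertible exactly in that situation. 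The first step I expect to cost the most care is making this normal-form argument rigorous, since the paper only describes it informally. I would argue via the confluent rewriting system on $\Sigma^*$, or via the isomorphism of $\RInv{\Mon_\Gamma}$ with such stacks.

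With the lemma, I show by induction on prefixes that the stack reached by $h(u)$ is the image of the stack reached by $u$, where an entry $\gamma_i g$ maps to $\gamma g^i\gamma g$. Pushing $\gamma_i$ produces $\gamma g^i\gamma 1$, as desired. Popping via $\bar\gamma\bar g^i\bar\gamma$ first requires the top group element to be $1$ below the topmost $\gamma$. It then multiplies $\bar g^i$ onto the entry $g^j$ encoding $\gamma_j$, and the second $\bar\gamma$ is valid iff $g^{j-i}=1$, i.e.\ $i=j$, because $g$ has infinite order in a graph group. When $i\neq j$, the intermediate prefix is invalid, and so is every extension. This is consistent: $u$ is invalid at that point as well, and right-invertibility of $[h(u)]$ only concerns the endpoint, because a non-right-invertible prefix stays non-right-invertible.

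Conversely, a mid-encoding prefix never occurs as a configuration, since configurations only arise at transition boundaries. This completes the equivalence and hence the reduction.
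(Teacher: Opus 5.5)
Your proposal is correct and follows the paper's approach. The reduction is the same letter-wise substitution $\gamma_i\mapsto\gamma g^i\gamma$ (and $\bar\gamma_i\mapsto\bar\gamma\bar g^i\bar\gamma$) on transition labels, and the paper only says the equivalence is ``easy to verify,'' whereas you check it via the stack simulation and the infinite order of $g$. You also treat $W=\emptyset$ by adding a fresh looped vertex, a case the paper avoids by assuming the group part is non-empty.
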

\onlyFull{Details are in~\cref{sec:app-FICEG}}\onlyConference{Details can be found in the full version~\cite{full}}. Thus we assume $\Gamma=(V,E)$, with
$V=\{\gamma\}\cup W$, where $\gamma$ is unlooped, the vertices in $W$ are all
looped, and there are no edges between $\gamma$ and $W$. Clearly, we may assume
that all transitions carry either a letter in $V\cup\bar{V}$ or $\varepsilon$.

\myparagraph{Strategy trees} Winning strategies for $\exists$ describe
infinite runs%
, meaning it is not immediate how to represent them
finitely.  In contrast, winning strategies for $\forall$ have finite
representations: Since a play winning for $\forall$ must visit an invalid
configuration (i.e.~$(q,[u])$ with $[u]\notin\RInv{\Mon_\Gamma}$), and
everything thereafter is irrelevant, we can represent the strategy in a finite
tree. Therefore, our algorithm searches for a winning strategy for $\forall$.

Specifically, a $\iota$-\emph{strategy tree} for $\iota \in \{\exists, \forall\}$ is a tree with
nodes in $Q\times\Mon_\Gamma$, edges are labeled by $V\cup \bar{V}\cup\{\varepsilon\}$,
such that the following holds: If a node $(q,[u])$ has $q\in
Q_{1-\iota}$, then for every transition $q\xrightarrow{v} q'$ in $\delta$, the
node $(q,[u])$ has a child $(q',[uv])$. If $q\in Q_{\iota}$, then the node
has a single child $(q',[uv])$ for some transition $q\xrightarrow{v}q'$ in $\delta$.
Here, the edges are always labeled by the multiplied $v$.
We call the tree \emph{winning} if all branches (i.e.\ plays) are winning for $\iota$-player. For $\forall$-strategy trees this means each
branch visits some $(q,[u])$ with $[u]\notin\RInv{\Mon_\Gamma}$. We consider finite winning $\forall$-strategy trees where every branch has the first such $(q,[u])$ as a leaf.
Our task is thus to detect winning $\forall$-strategy trees.

\myparagraph{Saturating along stack height}
If we had no group elements, then detecting $\forall$-strategy trees would
amount to solving a safety game on a pushdown graph. Here, a standard approach
is a saturation procedure~\cite{Cachat02}. The data that is being saturated is, for each $q\in
Q$, the set $R_q$ of all subsets $P\subseteq Q$ such that, roughly speaking,
there is a $\forall$-strategy tree with root node $(q,[\varepsilon])$ where we allow
leaves that (i)~are invalid (i.e. $\notin\RInv{\Mon_\Gamma}$), or (ii)~end in
an empty configuration $[\varepsilon]$, but in $P$. We call such a ``relaxed
winning $\forall$-strategy tree'' a \emph{$P$-tree}. Then clearly, $\emptyset\in R_q$ if
and only if there is a winning $\forall$-strategy tree rooted in $(q,[\varepsilon])$. 

In the pushdown setting, $R_q$ can be computed via saturation w.r.t.\ the
maximal stack height: Let $R^i_{q}$ be the set of all $P\subseteq Q$ such that
there is a \emph{$(P,i)$-tree}, i.e.\ a $P$-tree where all stack heights across
branches are $\le i$. Without group elements, it is not hard to compute
$R^{i+1}_{q}$ from $R^i_{q}$ via emptiness of finite-state tree automata.

\myparagraph{Main challenge: Group elements}
The main challenge in our algorithm is to perform the above saturation in the
presence of group elements. Note that very closely related tasks are already
undecidable: It is undecidable (for certain $\Gamma$) whether a given regular
$L\subseteq (W\cup \bar{W})^*$ contains some $w\in(W\cup \bar{W})^*$ with
$[w]=1$~\cite[Thm.~2]{lohrey2008submonoid}. Thus, it is undecidable whether, even in
the single-player setting, there is ever a valid pop.

\myparagraph{Key idea: Big \& small elements}
Let us sketch how we deal with group elements, in the case of stack height $0$.
The crucial observation is that group elements can be ``big'' or ``small''. Intuitively,
small elements just need polynomially many bits, whereas big elements
are ``too big to cancel in time'' and thus do not need to be stored exactly. 

Let us make this precise. With $\Delta=(W,E)$, let $g\in\Mon_\Delta$ be some
group element.  Let $\|g\|$ denote the \emph{word length} of $g$, i.e.\ the
length of the shortest word $w\in (W\cup\bar{W})^*$ with $g=[w]$. We say that
$q\in Q$ is \emph{$0$-eligible} if there exists a $\forall$-strategy tree where (i)~$q$
is in the root and (ii)~every branch contains a pop, and no push. Clearly, in any
$(P,0)$-tree, every ocurring state must be $0$-eligible. The following observation about big elements is key:
\begin{lemma}\label{big-elements-zero}
	Suppose $p$ is $0$-eligible and $g\in\Mon_\Delta$ has $\|g\|>|Q|$.
	Then $(p,wg)$ is the root of some $(\emptyset,0)$-tree for every $w \in \{[\varepsilon]\} \cup \{[u \gamma] \mid u \in (V \cup \bar{V})^*\}$. 
\end{lemma}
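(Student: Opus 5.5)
The plan is to take the $\forall$-strategy tree that witnesses $0$-eligibility of $p$, make it short, and then re-root it at $(p,wg)$. I will show that every branch of the resulting tree ends in an invalid pop. Since no push occurs and no branch returns to an empty configuration in some $P$, the result is an $(\emptyset,0)$-tree.

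\textbf{Step 1: a uniformly short witness.} Consider the finite game graph on $Q$ in which edges labelled $\gamma$ (pushes) are deleted and edges labelled $\bar\gamma$ (pops) lead to a fresh target vertex. Then $0$-eligibility of $p$ means exactly that $p$ lies in the $\forall$-attractor of this target, computed while ignoring group letters. The standard attractor construction gives a positional $\forall$-strategy in which the rank strictly decreases along every branch. Hence every branch of the induced strategy tree from $p$ visits pairwise distinct states before taking its pop. So it uses at most $|Q|-1$ transitions labelled in $W\cup\bar W\cup\{\varepsilon\}$, followed by one transition labelled $\bar\gamma$. I expect this to be the only step that needs care: one must check that the attractor is taken over the finite state graph with group labels ignored, which is legitimate because $0$-eligibility quantifies over all $\forall$-strategy trees regardless of the group content.

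\textbf{Step 2: re-rooting.} Replay this strategy from $(p,wg)$. Along any branch, the configuration just before the pop has the form $wgh$, where $h=[v]$ for a word $v\in(W\cup\bar W)^*$ of length at most $|Q|-1$, so $\|h\|\le|Q|-1$. If some configuration becomes invalid earlier, for instance because $w$ itself is invalid, we cut the branch there; that node is an invalid leaf, which is allowed in an $(\emptyset,0)$-tree. Otherwise we reach the pop $\bar\gamma$.

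\textbf{Step 3: the pop is invalid.} The top-most group entry at the pop is $gh$ (if $w=[\varepsilon]$, there is no $\gamma$ below at all). If $gh=1$, then $g=h^{-1}$ and $\|g\|=\|h\|\le|Q|-1<|Q|$, which contradicts $\|g\|>|Q|$. Hence $gh\neq1$. By the description of $\RInv{\Mon_\Gamma}$ for $\Pushdown(\Group)$, multiplying $\bar\gamma$ then yields a non-right-invertible element; in the case $w=[\varepsilon]$ the pop is invalid for lack of any $\gamma$. So every branch ends in an invalid leaf, and no push ever occurs.

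It follows that the tree never exceeds stack height $0$ and never ends in an empty configuration, so it is an $(\emptyset,0)$-tree rooted at $(p,wg)$.
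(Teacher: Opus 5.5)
Your argument follows the paper's route. You obtain a witness for $0$-eligibility whose branches are short, re-root it at $(p,wg)$, and note that the group element accumulated before the pop has word length too small to cancel $g$. The one difference is how you get the short witness. You use attractor ranks: a positional strategy with strictly decreasing rank, giving at most $|Q|-1$ group letters per branch. The paper instead does cut-and-paste surgery on repeated states along the branches of an arbitrary witness, giving height at most $|Q|+1$. Your route is, if anything, cleaner.

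There is one slip in Step~1 that you need to fix: deleting push edges is only harmless at $\forall$-states. At an $\exists$-state, a push is an escape for $\exists$, and a $\forall$-strategy tree must contain that child. After deletion, some states are wrongly added to the attractor. Examples are an $\exists$-state with a push edge and a pop edge, or one whose only edges are pushes (it becomes a dead end and is attracted vacuously). So your claimed equivalence ``$0$-eligible iff in the attractor'' is false as stated. Worse, the positional strategy from a genuinely $0$-eligible $p$ may route through such a state, and then the tree you re-root is not a $\forall$-strategy tree of the original arena.

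The fix is immediate. Send push edges to a sink that is not the target, so an $\exists$-state with a push edge is never attracted. Alternatively, invoke the paper's normalisation that pushes only leave $\forall$-states. With either fix the equivalence holds, and Steps~2 and~3 go through unchanged.
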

This is because there is a tree $t$ witnessing $0$-eligibility of height at
most $|Q|$: In a tree higher than $|Q|$, we can cut a repeating occurrence of a
state, and still obtain a witness. However, on this witness, every branch will
multiply a group element of word length at most $|Q|$. Therefore, making
$(p,wg)$ the root node will yield a $(P,0)$-tree: In every branch, we multiply
some element $[v]$ with $|v|\le |Q|$ to $g$, but since $\|g\|>|Q|$, we know that
$g[v]\ne 1$: Otherwise we would have $g=[\bar{v}]$. Therefore, the pop in each
branch must be applied with a group element $\ne 1$ and is thus invalid---the
tree is a $(\emptyset,0)$-tree. We say $g\in\Mon_\Delta$ is \emph{big} if $\|g\|>|Q|$.

\myparagraph{Cayley graphs} By \Cref{big-elements-zero}, it suffices to track \emph{small} elements, i.e.\ those with word length $\le|Q|$, of $\Mon_\Delta$ exactly. We do this using a standard construction in group theory: The \emph{Cayley graph of $\Mon_\Delta$}, denoted $\mathcal{C}$, is the directed graph with $\Mon_{\Delta}$ as its vertex set, and edges $g\xrightarrow{x}g'$ for each $g,g'\in\Mon_\Delta$ and $x\in W\cup\bar{W}$ with $g'=g\cdot [x]$. Of course, $\mathcal{C}$ is infinite, but we consider a variant where all big elements are collapsed into one: We obtain the \emph{restricted Cayley graph} $\mathcal{C}^{\le |Q|}$ by removing all big nodes and inserting a fresh ``trap vertex'' $c^{\trap}$: All edges that arrive in big elements in $\mathcal{C}$ are redirected into $c^{\trap}$.  

We will regard $\mathcal{C}^{\le|Q|}$ as an automaton with exponentially many
states: Note that it can be computed in $\EXPTIME$: An element
$g\in\Mon_\Delta$ can be represented by the length-lexicographically minimal
word $u\in(W\cup \bar{W})^*$ with $g=[u]$. Moreover, given words $u,u'\in
(W\cup\bar{W})^*$, one can decide in poly-time whether $[u]=[u']$: This is
because the groups of the form $\Mon_\Delta$, also called \emph{right-angled
Artin groups}, embed into matrix groups (over the
reals)~\cite{DavisJanuszkiewicz2000,HsuWise1999}, for which the word problem is
in logspace~\cite{LiptonZalcstein1977}. One can thus find a minimal
representative in $\EXPTIME$. 

\myparagraph{The base case} With \cref{big-elements-zero} and
$\mathcal{C}^{\le|Q|}$ in hand, we can now compute $R^0_{q}$: Given a set $P\subseteq Q$,
we construct a finite (top-down) tree automaton that, roughly speaking, accepts $(P,0)$-trees with $q$ in their root. However, it restricts nodes to $0$-eligible states, and so when in some node $s$, the element of $\Mon_{\Delta}$ becomes big, then the automaton can ``declare victory'' (for $\forall$): It accepts this node as a leaf, because \cref{big-elements-zero} tells us that this node can be completed to even a $(\emptyset,0)$-tree. The automaton can therefore use $\mathcal{C}^{\le|Q|}$ to track the group element on the current branch. This tree automaton has exponentially many states, and it accepts some tree if and only if $P\in R^0_{q}$. Thus, we can compute all of $R^0_{q}$ in $\EXPTIME$.

\myparagraph{Saturation step} Once we have computed $R^i_{q}$ for every $q\in
Q$, we compute $R^{i+1}_{q}$ with similar ideas. Consider a $(P,i+1)$-tree with
some push of $\gamma$ in some node $\nu$ in state $q'$ that moves the stack
height from $0$ to $1$. In each branch below this push, we either
(i)~eventually come back to stack height $0$ and arrive in some state $p\in Q$,
or (ii)~encounter an invalid pop. Let $s$ be the subtree consisting of all
paths until (i) or (ii) occurs. Then the set $P'$ of all states $p$ as above,
satisfies $P'\in R^i_{q'}$. We can therefore ``shortcut'' $s$: Instead of
performing the $\gamma$ push, we take one subtree for each $p\in P'$ and place
it directly under $\nu$.  Repeating this shortcutting will lead to a tree with
stack height $0$, but with special nodes whose children are given by some
$P'\in R^i_{q'}$ for some $q'$. The trees of this form can now be detected by a
finite tree automaton: It is constructed as in the base case, except that each
$P'\in R^i_{q'}$ provides a new transition to simulate shortcuts. Here, we use
a fact analogous to \cref{big-elements-zero}: Note that there the bound will
still be $|Q|$, since the automata for eligibility for higher $i$ have
exponentially many transitions, but still only $|Q|$ states.

\myparagraph{Overall complexity} This way, we can compute $R^{i+1}_{q}$ in
$\EXPTIME$ (w.r.t.\ input size) from the sets $R^i_{q'}$ for $q'\in Q$. Since
$R^i_{q}\subseteq R^{i+1}_{q}$ for every $q,i$, this means the overall saturation
terminates in exponentially many steps, leading to an $\EXPTIME$ upper bound
overall.
\begin{remark}
	A naive implementation of the idea in
	\cref{big-elements-zero} would reduce $\FICEG(\Pushdown(\Group))$ to a
	pushdown game with exponential stack alphabet, resulting only in
	a $\twoEXPTIME$ upper bound.
\end{remark}
\begin{remark}
	Although our result is only about $\Pushdown(\Group)$, where the group
	$G$ on the stack is of the form $\Mon_\Delta$ for a looped 
	$\Delta$, the algorithm even works for any finitely generated group
	with a decidable word problem (which asks whether a given sequence of
	generators yields the identity): This suffices to compute
	an equivalent representatives of length $\le |Q|$, if it exists. 
\end{remark}

\section{Group pushdowns with unknown initial credit}\label{sec:UICEG}
In this section, we show that $\UICEG(\Pushdown(\Group))$ reduces to $\FICEG(\Pushdown(\Group))$ in polynomial time. Together with the $\EXPTIME$ upper bound for $\FICEG(\Pushdown(\Group))$ in \cref{sec:FICEG}, this establishes \cref{main-pushdown}.

Let $\Gamma=(V,E)$ belong to $\Pushdown(\Group)$ and consider a valence game
arena $\gr$. As in \cref{single-unlooped-node}, we may assume that
$V=\{\gamma\}\cup W$ such that $\gamma$ is isolated and $\Delta=(W,E)$ has a
self-loop on every vertex. Recall that an element $x\in\RInv{\Mon_\Gamma}$ can be
written as $x=[w]$ with $w=w_{r+1}\gamma w_r \cdots \gamma w_1$ for
$w_1,\ldots,w_{r+1}\in (W\cup \bar{W})^*$.  We call $r=|w|_\gamma$ the
\emph{$\Pushdown$-length} of $w$. By $\|w\|$, we denote the maximal word length
$\|[w_i]\|$ among all elements $[w_i]\in\Mon_\Delta$.

\myparagraph{Ingredient I: Bounding the initial credit} Our first ingredient is a
combinatorial argument that shows if a \nrg\ with $n$ states over
$\Pushdown(\Group)$ is won with some initial credit $\lambda\in (V\cup \bar{V})^*$, then
there is such a $\lambda$ where with $|\lambda|_\gamma < 2^n$ and $\|\lambda\|\le n$. 
\begin{restatable}{theorem}{initialCreditUb}\label{thm:initial-credit-ub}
	If the \nrg\ $\UICEG(\gr)$ is won by $\exists$, then $\exists$ wins $(q_{init}, [\lambda])$ where $q_{init}$ is the initial state of $\gr$ and
	$\lambda\in(V\cup\bar{V})^*$ where $|\lambda|_\gamma <
	2^n$ and $\|\lambda\|\le n$.
\end{restatable}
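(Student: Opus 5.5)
The plan is to work with the normalized setting of \cref{single-unlooped-node}: $V=\{\gamma\}\cup W$, with $\gamma$ isolated and unlooped and $\Delta=(W,E)$ looped. Suppose $\exists$ wins from $(q_\init,[w_{r+1}\gamma w_r\cdots\gamma w_1])$, and write $g_j=[w_j]\in\Mon_\Delta$. I will shrink this credit in two independent steps, first the $\Pushdown$-length and then the word lengths $\|g_j\|$.

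\textbf{Step 1: a summary of each stack prefix.} Let $x_j$ be the stack prefix consisting of the bottom entries $g_{r+1}\gamma\cdots\gamma g_{r+1-j}$. Define $S_j\subseteq Q$ as the set of states $p$ such that $\exists$ wins from $(p,x_j)$.

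The central compositionality claim is this. Whether $\exists$ wins from $(p,x\gamma g)$ depends only on $p$, $g$ and the set $S(x)\subseteq Q$ of states winning for $\exists$ from $x$. The reason is that a play from $x\gamma g$ either stays strictly above $x$ forever, in which case $x$ is never inspected, or performs a valid pop of the entry $\gamma g$. At that moment the configuration is exactly $(p',x)$, and from then on the remainder is winnable if and only if $p'\in S(x)$. I would prove the claim by a strategy-transfer argument. Given winning strategies for $x$ and $x'$ with $S(x)=S(x')$, I glue the strategy above $\gamma g$, which plays identically in both situations, with the winning strategy from $(p',x')$ at the first return.

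Hence there is a function $F$ with $S_{j+1}=F(S_j,g_{r-j})$. Only $2^n$ subsets of $Q$ exist. So if $r\ge 2^n$, two indices $i<j$ satisfy $S_i=S_j$. Deleting the entries strictly between them (together with their $\gamma$'s) yields a shorter stack. On it the remaining sequence of summaries is unchanged from index $j$ upward, by induction using $F$, so $q_\init$ still lies in the top summary. Iterating this deletion gives $|\lambda|_\gamma<2^n$.

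\textbf{Step 2: small entries.} The bottom entry $g_{r+1}$ can never be popped validly, so its value is irrelevant; I replace it by $1$. For the other entries, the task is to show that whenever $p\in F(S,g)$ with $\|g\|>n$, some $h$ with $\|h\|\le n$ also satisfies $p\in F(S,h)$. Replacing $g$ by $h$ then keeps the summary sequence intact, at least as far as membership of the relevant states is concerned.

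\textbf{Main obstacle.} Step 2 is the hard part, because $\exists$ may take arbitrarily long to cancel a big $g$. My first attempt argues via $\forall$-strategy trees, in the spirit of \cref{big-elements-zero}. Suppose every small $h$ loses for $\exists$ from $(p,\cdot\gamma h)$ with summary $S$. Then $\forall$ has finite winning trees, whose leaves are invalid configurations or returns to states outside $S$. Pruning repeated states along a branch, as in the proof of \cref{big-elements-zero}, bounds the group word accumulated before the critical pop (shortcutting nested pushes via the saturation sets $R^i$). This should show that $\forall$ can win with a tree in which every pop of the entry occurs after multiplying an element of word length $\le n$.

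Choosing $h=1$ together with an $h$ realizing the prefix structure, I aim to combine these into a single $\forall$ tree that also defeats $g$. The reason it should work is that against $g$, any pop using a word of length $\le n$ is invalid because $\|g\|>n$, so $g$ is strictly worse for $\exists$ than the small candidates on those branches. If this uniform combination fails, my fallback is to weaken the target to $\|\lambda\|\le n$ by taking $h$ to be the shortest suffix-cancellable element witnessed in $\exists$'s winning strategy along the first popping branch.
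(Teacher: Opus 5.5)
Your Step~1 is correct, and it takes a different route from the paper. The paper fixes one winning $\exists$-strategy tree. It then pigeonholes over the sets of states its branches occupy after popping $1,\dots,2^n$ entries of the credit, padding shallow branches with a recurring state. You instead pigeonhole over the winning regions $S(x_j)$ of the bottom prefixes and propagate them with the transfer function $F$. Your gluing argument does show that $F$ is well defined, and this version is cleaner. The same argument also gives monotonicity, $S(x)\subseteq S(x')\Rightarrow F(S(x),g)\subseteq F(S(x'),g)$, which you will need below.

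The gap is Step~2, which is where the bound $\|\lambda\|\le n$ comes from. You leave it as a hedged sketch (``should show'', ``aim to combine'', plus a fallback), and it has two concrete problems.

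\begin{itemize}
\item The per-state target is not what you need. You replace $g$ by a \emph{single} $h$, so you need one $h$ with $F(S,g)\subseteq F(S,h)$ for all states simultaneously. You then need monotonicity of $F$ to push this inclusion up the stack. Having ``some $h$ for each $p$'' does not give this.
\item The fallback does not work. The shortest element cancelled on one popping branch of $\exists$'s strategy tells you nothing about the other branches.
\end{itemize}

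The missing idea is the content of the paper's \cref{thm:initial-credit-structure}: a big entry behaves like the bottom of the stack. Concretely, if $\exists$ wins $(p,x\gamma g)$ with $\|g\|>n$, then she already wins $(p,[\varepsilon])$. Suppose not. Then $\forall$ has a finite winning tree from $(p,[\varepsilon])$, in which every branch ends in an invalid pop, either at the base level or inside a push excursion. Shortcut the excursions via the saturation sets and prune repeated states as in the proof of \cref{lemma:i-eligible}; this gives a winning tree of height at most $n+1$. Replay this tree from $(p,x\gamma g)$. Each pop at the level of $g$ is preceded by at most $n$ group letters at that level, so it cannot cancel $g$ and is invalid. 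The excursions behave exactly as before.

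Winning from $(p,[\varepsilon])$ implies winning from $(p,y)$ for every right-invertible $y$. Hence $F(S,g)\subseteq F(S,1)$ for every $S$. So $h=1$ works uniformly, and no combination of trees for several $h$ is needed. The paper equivalently deletes the big entry together with everything below it.

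Your own ingredients also suffice if you apply them to $h=1$ alone. Take $\forall$'s winning tree against $x\gamma 1$, cut every branch at its first pop of the entry, then shortcut and prune. Pruning changes the accumulated group words, so the pruned tree need not win against $h=1$. This is harmless against the big $g$, because there every such pop is invalid anyway.
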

\begin{proof}[Proof sketch]
	We sketch the proof for $|\lambda|_\gamma < 2^n$. The bound $\|\lambda\|\le n$
	uses arguments similar to~\cref{big-elements-zero}. Consider a winning
	$\exists$-strategy tree $\tau$ from $(q_\init,[\lambda])$ for $\lambda=w_{r+1}\gamma w_{r}\cdots
	\gamma w_1$ with $r>2^n$. We will argue that we can cut some infix
	$\gamma w_i\cdots \gamma w_{j}$ from $\lambda$.

	For each branch in $\tau$, its \emph{popping depth} is the maximal $s$
	such that the suffix $\gamma w_{s}\cdots \gamma w_1$ is popped during
	this branch.
	Consider the case that all branches have a popping depth $\ge 2^n$. For
	every $i\in[1,2^n]$, let $S_i\subseteq Q$ be the (non-empty) set of all
	states visited on some branch after popping the suffix $\gamma
	w_i\cdots \gamma w_1$ for the first time. There must be
	$j,\ell\in[1,2^n]$, $j<\ell$, with $S_j\supseteq S_\ell$. But then
	we can remove the infix $\gamma w_\ell \cdots \gamma w_{j+1}$
	from $\lambda$: By reattaching subtrees whose root contains $w_{r+1}\gamma
	w_r\cdots \gamma w_{\ell+1}$ to other nodes that used to contain $w_{r+1}\gamma
	w_r\cdots \gamma w_{j+1}$, but have the same control-state, we obtain a winning
	$\exists$-strategy tree for the shorter initial credit $w_{r+1}\gamma
	w_r\cdots \gamma w_{\ell+1} \cdot \gamma
	w_j\cdots \gamma w_1$.

	If there are branches with popping depth $s<2^n$, then for $i\leq s$,
	they contribute to $S_i$ just as above. But for $i> s$, we
	expand $S_i$ by a state $p$ such that this branch visits $p$ infinitely
	often at its lowest stack height. Then the same cut-and-reattach procedure works.
\end{proof}

\myparagraph{Ingredient II: Guessing the initial credit} We now use
\cref{thm:initial-credit-ub} to reduce $\UICEG(\Pushdown(\Group))$ to
$\FICEG(\Pushdown(\Group))$. Here, we construct a valence game arena
$\gr^{\expp}$ where $\exists$ first guesses an initial credit $\lambda$ with
$|\lambda|_\gamma < 2^n$ and $\|\lambda\|\le n$, and then simulates the game. Here, the
difficulty is to guess a stack content with exponentially many entries without allowing
$\exists$ to win outright by just growing and growing the stack.

To this end, we modify a construction from \cite[Section 9]{mandel2026complexitydownwardclosuresindexed}: The idea is to not only guess a stack with $< 2^n$ entries, but a stack with decorated letters. Then, each time $\exists$ pushes a pushdown letter, $\forall$ can check whether the stack still has $< 2^n$ entries. But in fact, $\forall$ checks that the decoration adheres to a pattern that implies length $< 2^n$. This pattern is a suffix-closed set of words that can be recognized by running linearly many DFAs in parallel.
Let $\gamma_{[i, j]}$ to denote the set $\{\gamma_k \mid k \in [i,j]\}$. 

\begin{restatable}{lemma}{DFAs}\label{lemma:DFAs} For every $m \in \N$, there are $m$ two-state DFAs $(D_i)_{i \in [1, m]}$ over $\gamma_{[1, m]}$ such that $\bigcap_{i\in[1,m]} L(D_i)$ is suffix-closed and has a unique longest word of length $2^m-1$.
\end{restatable}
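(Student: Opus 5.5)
We encode a binary counter. A word $\gamma_{k_1}\cdots\gamma_{k_\ell}$ is read as the sequence of bit positions flipped by successive increments of a counter starting at $0$. In the standard binary reflected Gray code, the $t$-th step ($t\ge 1$) flips bit $\nu_2(t)+1$, where $\nu_2$ is the $2$-adic valuation. Accordingly, we take the unique longest word to be the \emph{ruler sequence}
\[ R_m=\rho_1\rho_2\cdots\rho_{2^m-1},\qquad \rho_t=\gamma_{\nu_2(t)+1}. \]
Recursively, $R_1=\gamma_1$ and $R_{k+1}=R_k\,\gamma_{k+1}\,R_k$, so $|R_m|=2^m-1$. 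The target language is the set of suffixes of $R_m$; it is suffix-closed by definition, and its longest word is $R_m$. The task is to realize it as $\bigcap_i L(D_i)$ with two-state DFAs, or at least to find some suffix-closed intersection whose longest word is $R_m$.

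For the construction, let $D_i$ have states $\{0,1\}$, start state $0$, and both states accepting (a rejecting sink is allowed, as a partial DFA). $D_i$ tracks the parity of the number of occurrences of $\gamma_i$ read so far, but subject to a constraint. In $R_m$, between two consecutive occurrences of $\gamma_i$ there is exactly one occurrence of a letter in $\gamma_{[i+1,m]}$, and before the first $\gamma_i$ there is none. So $D_i$ is in state $0$ when it expects a $\gamma_i$ next and in state $1$ when it expects some $\gamma_j$ with $j>i$ next. Formally, on $\gamma_i$ it moves $0\to1$ and rejects from $1$; on $\gamma_j$ with $j>i$ it moves $1\to0$ and rejects from $0$; letters $\gamma_j$ with $j<i$ act as self-loops. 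Each $D_i$ then has two states.

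The first step is to show that $R_m$ is accepted. By induction on the recursion, for every $i$, $R_k$ with $k\ge i$ leaves $D_i$ in state $1$ after passing through the alternating pattern. Moreover $\gamma_{k+1}$ resets every $D_i$ with $i\le k$ to state $0$, which is consistent with the alternation.

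The second step, which I expect to be the main obstacle, is to characterize $\bigcap_i L(D_i)$ exactly. One inclusion is that every accepted word is a prefix of $R_m$; this gives the length bound and uniqueness. The tension is that the statement asks for suffix-closure, while DFAs started in a fixed state naturally give prefix-closed languages. I would resolve this by reversing the reading direction. Design the $D_i$ to read the stack word in the order in which it is checked, so that "suffix" in the statement corresponds to a prefix in the machines' reading order. Alternatively, run the automata on the reversal and use that $R_m$ is a palindrome: this follows from the recursion $R_{k+1}=R_k\gamma_{k+1}R_k$ by induction. Reading in reverse, the accepted language becomes the set of prefixes of $R_m^{\mathrm{rev}}=R_m$, and prefix-closure of DFAs in which all live states are accepting then yields suffix-closure in the original direction.

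For the uniqueness and maximality claim, I would argue by induction on $m$ that any accepted word is a prefix of $R_m$. In reading order, consider the first occurrence of $\gamma_m$. Before it, the letters avoid $\gamma_m$, so that portion is accepted by $D_1,\dots,D_{m-1}$, and the induction hypothesis makes it a prefix of $R_{m-1}$. Since $D_{m-1}$ must be in state $1$ when $\gamma_m$ is read, that prefix must in fact be exactly $R_{m-1}$: the automaton for the largest index below $m$ forces completion of the lower levels. After $\gamma_m$, the automata $D_1,\dots,D_{m-1}$ have been reset, and $D_m$ is in state $1$, which forbids any further $\gamma_m$. The remainder is therefore again a prefix of $R_{m-1}$. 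Hence the accepted words are exactly the prefixes of $R_{m-1}\gamma_mR_{m-1}=R_m$.
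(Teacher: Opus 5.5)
Your automata do not give a suffix-closed intersection, and neither of your proposed repairs stays within two-state DFAs, so the lemma is not established. Your $D_i$ rejects every $\gamma_j$ with $j>i$ in its start state. As a result, $\bigcap_i L(D_i)$ is exactly the set of prefixes of $R_m$ (as your step~2 shows). That set is prefix-closed but not suffix-closed once $m\ge 2$: for $m=2$, $\gamma_1\gamma_2\gamma_1$ is accepted, but $D_1$ rejects its suffix $\gamma_2\gamma_1$ on the first letter.

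Running the automata on the reversal means you are really working with the languages $L(D_i)^{\mathrm{rev}}$, and these are not recognized by two-state DFAs. Already for $m=2$, $L(D_1)^{\mathrm{rev}}$ consists of the empty word together with the words alternating between $\gamma_1$ and $\gamma_2$ that end in $\gamma_1$. It has the three distinct non-empty residuals $L(D_1)^{\mathrm{rev}}$, $(\gamma_2\gamma_1)^*$ and $\gamma_1(\gamma_2\gamma_1)^*$: an automaton reading in this direction must distinguish ``no letter of $\gamma_{[i,m]}$ read yet'' from ``the last such letter was $\gamma_i$''. That $R_m$ is a palindrome does not help, because the problem lies in the automata, not in the longest word.

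The missing idea is that the language need not be exactly the set of suffixes of $R_m$. A two-sided, factor-closed constraint removes the prefix/suffix tension altogether. Concretely, let every letter other than $\gamma_i$ loop on the start state, i.e.\ drop the rejection of $\gamma_j$, $j>i$, in state~$0$. Then $D_i$ accepts exactly the words in which any two occurrences of $\gamma_i$ are separated by a letter of $\gamma_{[i+1,m]}$.
\begin{itemize}
\item This condition is closed under taking factors, hence suffix-closed.
\item It gives $|w|_{\gamma_i}\le 1+\sum_{j>i}|w|_{\gamma_j}$, so $|w|_{\gamma_i}\le 2^{m-i}$ and $|w|\le 2^m-1$.
\item For uniqueness, erase all $\gamma_1$'s. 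They are self-loops of $D_2,\dots,D_m$, so what remains is accepted in the $(m-1)$-instance over $\gamma_{[2,m]}$. Length $2^m-1$ forces it to be the unique longest word there, with exactly one $\gamma_1$ in each of its $2^{m-1}$ gaps, i.e.\ $w=R_m$.
\end{itemize}
This is the paper's construction with the indices reversed: there, $D_i$ demands a letter of $\gamma_{[1,i-1]}$ between any two $\gamma_i$'s, so $\gamma_1$ occurs once in the middle and $\gamma_m$ is the most frequent letter.
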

See \cref{fig:DFA-i} for an illustration of $D_i$. Intuitively, $D_i$ accepts all words where between any two $\gamma_{i}$'s, there is at least one occurrence of $\gamma_{[1,i-1]}$. Thus, a word in the intersection can have at most one $\gamma_1$, at most two $\gamma_2$'s, etc.\ and at most $2^{i-1}$-many $\gamma_i$'s. Thus, the unique longest word in the intersection has length $\Sigma_{i \in [1,m]} 2^{i-1}  = 2^m-1$ and is a palindrome.

\begin{remark}
In \cite[Section 9]{mandel2026complexitydownwardclosuresindexed} a similar
	DFA-based construction in indexed/grammars/alternating PDAs is used to
	ensure a stack height of exactly $2^n$. Here, we need to apply this
	check after every push, necessitating a check of ``$\le 2^n$''.  Thus,
	our DFAs need to be suffix-closed, and hence we need a different
	construction than their %
	``binary counting'' DFAs.
\end{remark}

\myparagraph{Constructing the game}
With \cref{thm:initial-credit-ub,lemma:DFAs}, we are prepared to present the reduction.
To this end, we construct the game $\gr^\expp$, which is given in~\cref{fig:G-expp}.
 First, $\exists$ collects an initial credit $\lambda'$ in the loop $p_0 - p_n$ before entering $\gr[\gamma \gets \gamma_{[1,n]}]$: the arena $\gr$ where every push/pop is replaced by $n$-many pushes/pops, one for every $\gamma_i \in \gamma_{[1, n]}$.
Say $\exists$ wins the game from $(q_\init,\lambda)$ for some $\lambda =  [w_{r+1} \gamma \cdots \gamma w_1]$ that obeys~\cref{thm:initial-credit-ub}. She collects $\lambda$ before entering $\gr[\gamma_i \gets \gamma_{[1, n]}]$ with one caveat: she collects $\lambda' = [w_{r+1} \gamma^{r} \cdots \gamma^1  w_1]$ such that the push-sequence $\gamma^{r} \cdots \gamma^1$ of $\lambda'$ is the $r$-length suffix of the $(2^n-1)$-length word $w_\cap$ accepted by $\cap_{[1, n]} L(D_i)$. Clearly, $\exists$ wins $(q_\init,\lambda)$ iff she wins on $(\gr[\gamma\gets \gamma_{[1,n]}]$ from $\lambda'$.

 Player $\forall$ can challenge the accumulated initial credit at every loop from his state $\texttt{ch}$ that has an outgoing edge to the initial state of $\texttt{D}_i$, \enquote{the game-version} of $D_i$, for each $i \in [1, n]$ (see~\cref{fig:DFA-i}(b)). Player $\forall$ wins the subgame $\texttt{D}_i$ if (i) the push-sequence of the accumulated credit is not accepted by $D_i$, or (ii) there is a group element of word length $>n$ inbetween two pushes. Otherwise, $\exists$ wins $\texttt{D}_i$. As $\lambda$ obeys~\cref{thm:initial-credit-ub} and the language of $\cap_{i \in [1,n]} L(D_i)$ is suffix-closed, $\exists$ wins the subgame $\cap_{i \in [1, n]}\texttt{D}_i$ entered with any prefix of $\lambda'$. As she also wins on $\gr[\gamma \gets \gamma_{[1,n]}]$ from $\lambda'$, she wins $\gr^{\expp}$ from $(p_{in}, [\varepsilon]$) -- $\gr^\expp$ is a positive $\FICEG$ instance.

Suppose $\gr$ is a negative $\UICEG$ instance, i.e.\ there is no initial credit with which $\exists$ wins on $\gr[\gamma \to \gamma_{[1,n]}]$. Then, $\exists$ stays in the loop $p_0 - p_n$, and $\forall$ wins by waiting until the $\Pushdown$-length of the accumulated stack word $w$ reaches $2^n$ and moving to a $\texttt{D}_i$ for which $w \not \in L(D_i)$.

\begin{figure}[t]
  \centering
  \tikzset{every state/.style={minimum size=50pt}}
  \tikzset{every loop/.style={looseness=2}}
  \begin{tikzpicture}[
    ->, >=stealth', shorten >=1pt,
    auto,
    node distance=1.5cm,
    semithick
  ]  
  \node[circle, draw] (s) {\phantom{$p_0$}}; %
  \node at (s.center) {$p_{\tiny in}$};
  \node[circle, draw] (p0) [right= 0.8cm of s] {$p_0$};
  \node[circle, draw] (p1) [right= 1cm of p0] {$p_1$};

  \node (dots) [right= 1cm of p1] {\huge{$\cdots$}};

  \node[circle, draw] (pn) [right=1cm of dots] {$p_n$};
  \node (game) [right=1cm of pn] {$\gr[\gamma \gets \gamma_{[1,n]}]$};
  \node[rectangle, draw, minimum width=7mm, minimum height=7mm] (x) [below of=pn] {$\texttt{ch}$};

   \node (Dcap)  [right=1cm of x]  {$\bigcap_{i \in [1, n]}\texttt{D}_i$};

  \draw (s.west) ++(-0.5cm,0) -- (s);

  \draw (s)  -- node[above] {$\gamma_0$} (p0);
  \draw (p0) -- node[above] {${W}_\varepsilon$} (p1);
  \draw (p1) -- node[above] {${W}_{\varepsilon}$} (dots);
  \draw (dots) -- node[above] {${W}_{\varepsilon}$} (pn);

  \draw (pn) -- node[right] {$\gamma_{[1,n]}$} (x);
  \draw (pn) -- node[above] {$\varepsilon$} (game);

  \draw (x.west) -- node[above] {$\varepsilon$} (p0.south east);

    \draw (x) -- (Dcap);

  \end{tikzpicture}
  \caption{${W}_{\varepsilon}$ stands for $W \cup \bar{W} \cup \{\varepsilon\}$. $\texttt{D}_i$ is the game-version of the DFA $D_i$, depicted in~\cref{fig:DFA-i}.}
  \label{fig:G-expp}
\end{figure}

\begin{figure}[t]
  \centering
  \tikzset{every state/.style={minimum size=50pt}}
  \tikzset{every loop/.style={looseness=2}}
 \begin{subfigure}{0.45\textwidth}\label{fig:DFA-version}
    \centering
    \begin{tikzpicture}[
      auto,
      node distance=1.5cm,
      semithick
      ]
       \node[circle, draw, double] (di) {$d^i$};
        \node[circle, draw, double] (bar-di)  [right= 1.3cm of di] {$\tilde{d}^i$};
        \draw[->] (di) edge [loop above] node {$\gamma_{[1,n]} \setminus \{\gamma_i\}$} ();
        \draw[->] (bar-di) edge [loop above] node {$\gamma_{[i+1,n]}$} ();

        \draw[->] (di) to[bend right=15] node[below] {$\gamma_i$} (bar-di);
        \draw[->] (bar-di) to[bend right=15] node[above] {$\gamma_{[1,i-1]}$} (di);
         
        \draw[->] (di.west) ++(-0.5cm,0) -- (di);

    \end{tikzpicture}
    \caption{Figure of $D_i$. Both nodes are accepting.} %
    \end{subfigure}
    \hfill
    \begin{subfigure}{0.45\textwidth}\label{fig:DFA-i-game-version}
 \centering
    \begin{tikzpicture}[
      auto,
      node distance=1.5cm,
      semithick
      ]

       \node[circle, draw, line width=2.5pt, opacity=0.45] (di) {\phantom{$d^i$}};
        \node at (di.center) {$\mathbf{d}^i$};
        \node[circle, draw, line width=2.5pt, opacity=0.45] (bar-di)  [right= 1.3cm of di] {\phantom{$\tilde{d}^i$}};
        \node at (bar-di.center) {$\mathbf{\tilde{d}}^i$};
        \node (sadey) [right of=bar-di] {\scalebox{2.5}{$\Sadey$}};
        \node (smiley) [below right=9mm and 3mm of di] {\scalebox{2.5}{$\Smiley$}};
        \node (smiley-shadow) at ($(smiley.center)+(0,-0.5mm)$) {\phantom{\scalebox{1.5}{$\Smiley$}}};

        \draw[->] (di) edge [loop above] node {$\overline{\gamma}_{[1,n]} \setminus \{\overline{\gamma}_i\}$} ();
        \draw[->] (bar-di) edge [loop above] node {$\overline{\gamma}_{[i+1,n]}$} ();
        \draw[->] (bar-di) -- node[above] {$\overline{\gamma}_i$} (sadey); 
         
         \draw[->] (sadey) edge [loop above] node {$\overline{\gamma}_{0}$} ();

        \draw[->] (di) to[bend right=15] node[below] {$\overline{\gamma}_i$} (bar-di);
        \draw[->] (bar-di) to[bend right=15] node[above] {$\overline{\gamma}_{[1,i-1]}$} (di);
        \draw[->] (bar-di) to[bend left=15] node[below,  xshift=3mm, yshift=2mm] {$\overline{\gamma}_0$}  (smiley-shadow);
        \draw[->] (di) to[bend right=15] node[below, xshift=-2mm, yshift=2mm] {$\overline{\gamma}_0$} (smiley-shadow);

        \draw[->] (smiley) edge [loop right] node {$\varepsilon$} ();

        \draw[->] (di.west) ++(-0.5cm,0) -- (di);

    \end{tikzpicture}
    \caption{Figure of $\texttt{D}_i$. $\mathbf{d}^i$ and $\mathbf{\tilde{d}}^i$ 
    contain ($n$+1)-states letting $\exists$ read $g\in \Mon_\Delta$ with $\|g\|\leq n$.}
    \end{subfigure}
  \caption{The DFA $D_i$ used in~\cref{lemma:DFAs}, and its game version  $\texttt{D}_i$ used in $\gr^\expp$ (\cref{fig:G-expp}).}
  \label{fig:DFA-i}
\end{figure}

\section{Undecidability results}\label{sec:undecidability-results}
We now prove \cref{thm:undecidability}, i.e.\ the undecidability results
entailed by \cref{thm:FIC-EG,thm:UIC-EG}.  Recall that the vertices of the graphs (i)---(iv) (\cref{fig:illegal-graphs}) are called $a,b,c$, left to right.

 \begin{figure}
  \centering
  \tikzset{every state/.style={minimum size=50pt}}
  \tikzset{every loop/.style={looseness=2}}
    \begin{tikzpicture}[
      auto,
      node distance=1.5cm,
      semithick
      ]
       \node[circle, draw] (qi) {$q_i$};
       \node[rectangle, draw] (qij) [right of=qi] {$q^{\forall}_{i,j}$};
       \node[circle, draw] (qj) [right of=qij] {$q_j$};
       \node[circle, draw] (qij-exists) [above of=qij] {\phantom{$q_i$}};
       \node at (qij-exists) (qij-label) {$q^{\exists}_{i,j}$};
       \node (smiley) [right= 1.3cm of qij-exists] {\scalebox{2.5}{$\Smiley$}};
       \draw[->] (qi) -- node[above] {$\varepsilon$} (qij); 
       \draw[->] (qij) -- node[above] {$\varepsilon$} (qj); 
       \draw[->] (qij) -- node[left] {$\varepsilon$} (qij-exists); 
       \draw[->] (qij-exists) -- node[above] {$\bar{a}\bar{b}$} (smiley); 
       \draw[->] (smiley) edge [loop above] node {$\varepsilon$} ();
       \draw[->] (qij-exists) edge [in=120, out=153, loop] node {} ();

       \node[above left=2mm of qij-exists, yshift=-0.5cm, xshift=-0.05cm] (labelc) {\textcolor{myMagenta}{$c^{-}$}};   
       \node[above left=2mm of qij-exists, xshift=0.7cm, yshift=0.13cm] (labelb) {\textcolor{myBlue}{$b^{-}$}};   

    \end{tikzpicture}
    \caption{Reduction of (ii): $\texttt{zero}(k)$ gadgets for $k \in \{1, 2\}$. The gadgets differ only in the colored edges, where
    $\texttt{zero}(1)$ gadget takes the pink ($\bar{c}$) edge and $\texttt{zero}(2)$ gadget takes the blue ($\bar{b}$) edge.}
    \label{fig:gadgets-ii}
  \end{figure}

\begin{restatable}{theorem}{undeciiiANDiv}\label{thm:undec-iii-iv}  Let $\Gamma$ be one of the valence graphs (iii) or (iv). Then $\FICEG(\Gamma)$ and $\UICEG(\Gamma)$ are both undecidable. 
\end{restatable}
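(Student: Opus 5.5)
Graphs (iii) and (iv) are the path $a - b - c$, in (iv) with a self-loop on $c$. Valence systems over (iii) are exactly one-dimensional pushdown VASS: the independent pair $\{a,c\}$ acts as a pushdown and $b$, adjacent to both, as a VASS counter. Over (iv), $c$ is looped, so the only pushdown letter is $a$ (with $c$ an integer counter commuting with $b$). In both cases the storage contains an unlooped pushdown letter $a$ and an $\N$-counter $b$ commuting with everything.

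The plan is to reduce from the undecidable ``pushdown energy games'' of Abdulla et al.~\cite{AbdullaAHMKT14}. These are pushdown games with one energy counter, where $\exists$ wins iff the play is infinite and the counter stays non-negative. As announced in the introduction, they are to be shown to reduce to viability games over pushdown VASS. For (iv), I would first encode a full pushdown using only the letter $a$ together with the looped $c$, if the source game needs more than a unary stack. For instance, the encoding could use $c$-values under each $a$, in the spirit of \cref{single-unlooped-node}, or the paper's trick of representing letter $i$ by an entry $a\,c^i\,a$. The catch is that $c$ commutes with $b$ but not with $a$, so $c$ behaves like a group element stored on the stack. This gives a $\Pushdown$ of one unlooped letter with $\Z$-entries, times the counter $b$ that commutes with all of it. Alternatively, and more simply, I would use the fact that pushdown games with a one-letter stack alphabet plus bottom are already one-counter games. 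For those, Abdulla et al.\ need two energy dimensions, which I do not have, so I prefer the first option.

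The core step is legality. In the source game, pops are checked by the arena: a pop of the wrong letter is simply not available. In a viability game, by contrast, an illegal pop by $\forall$ loses for $\exists$ immediately, since it leads to a non-right-invertible configuration. So I will never let $\forall$ pop directly. Instead, $\forall$ announces a choice by moving to a state, and $\exists$ performs all stack operations. The counter updates chosen by $\forall$ are likewise executed on $\exists$'s behalf. To stop $\exists$ from cheating (for example, claiming the top letter differs from the true one), I add challenge gadgets. From a $\forall$-state, $\forall$ may challenge. $\exists$ must then prove her claim by performing the claimed pop, possibly after popping $c$-parts, with an invalid pop making her lose. After a successful proof, the play moves to a sink that is winning for $\exists$. $\forall$'s challenge succeeds exactly when the claim is false. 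Energy decrements on $b$ are taken literally: a decrement of $b$ below zero is non-viable, which matches the energy objective precisely. Since $b$ commutes with $a$ (and with $c$), the counter is genuinely independent of the stack.

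For $\UICEG$, I add a preamble in which $\forall$ can force a reset to a canonical start, or I argue that the source problem is undecidable even with arbitrary initial stack and credit. The cleaner choice is to guard the simulation with an initial test by $\forall$ that the stack is empty, via a gadget in which $\exists$ must pop everything and then be unable to continue. The main obstacle I expect is exactly these challenge gadgets in the unknown-credit setting, together with the limited alphabet in (iv). Getting the gadgets right requires showing that $\exists$ can always answer a true claim without ever being forced to perform an invalid operation, and can never answer a false one.
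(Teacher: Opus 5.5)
Your fixed-credit plan is essentially the paper's route. You reduce from the one-dimensional pushdown energy games of Abdulla et al.\ and let $b$ carry the energy. You move every pop chosen by $\forall$ to a fresh $\exists$-state that either executes it or pops the actual top letter into a winning sink. Your extra claim/challenge round is unnecessary, since a valid pop is itself the certificate. For (iv) you encode the stack alphabet using $a$ and the looped $c$. The paper does this last step by reducing (iii) to (iv) via $c\mapsto cac$, $\bar c\mapsto\bar c\bar a\bar c$, and your $a\,c^i\,a$ encoding serves the same purpose.

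The gap is in the unknown-credit part. The device you prefer is a gadget by which $\forall$ checks that the initial stack is empty (or forces a reset). No such gadget can exist in viability games, because extra content at the bottom of the stack never hurts $\exists$. Suppose $[v][v']=1$ and $[u x_1\cdots x_k][y]=1$. Then $[v u x_1\cdots x_k][y][v']=1$, so every $\exists$-strategy winning from $(q,[u])$ also wins from $(q,[vu])$. Hence no gadget can make a nonempty initial stack worse for $\exists$ than the empty one. Concretely, in your gadget $\exists$ must pop everything and is then stuck, so she loses it from every configuration. $\forall$ would therefore always enter it, and every instance would become negative.

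What you need is the converse: show that the unknown stack content is \emph{irrelevant}, rather than testing for it. Let the arena begin by pushing (an encoding of) a bottom marker that the simulated pushdown never pops for good. The content below is then exposed only by an escape pop into the winning sink. That escape is available to $\exists$ only when $\forall$ has chosen a transition that is disabled in the source game. The same $\exists$-strategy therefore wins from $(q_{\mathit{init}},[b^k u])$ iff it wins from $(q_{\mathit{init}},[b^k])$. So $\UICEG$ asks exactly whether the source game is won for some initial energy, and the unknown-initial-credit undecidability of Abdulla et al.\ transfers. The paper's transfer of both results through its equivalence implicitly relies on exactly this. Of your listed alternatives, only the idea that the source problem stays undecidable with an arbitrary initial stack points this way, and this irrelevance argument is what makes it true.
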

This follows almost directly from Abdulla, Atig, Hofman, Mayr, Kumar and Totzke~\cite{AbdullaAHMKT14}, who show undecidability of pushdown games with a single additional counter, and where the objective includes non-negativity of that counter (invalid pops, however, are not visible in the arena). Moreover, valence systems over (iii) are precisely one-dimensional pushdown VASS. %
 In the case (iv), we need a small modification, because the right-most vertex has a self-loop: We replace $c$ ($\bar{c}$) transitions with $cac$ ($\bar{c}\bar{a}\bar{c}$). \onlyFull{See \cref{app:undec-iii-iv} for details of both reductions.}\onlyConference{See the full version~\cite{full} for details of both reductions.}

\begin{restatable}{theorem}{undeciANDii}\label{thm:undec-i-ii}  Let $\Gamma$ be one of the valence graphs represented by (i) or (ii). Then $\FICEG(\Gamma)$ and $\UICEG(\Gamma)$ are both undecidable. \end{restatable}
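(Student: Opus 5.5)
We reduce from the complement of the halting problem for deterministic two-counter (Minsky) machines, which is undecidable. Given a machine $\Minsky$ with counters $1,2$, we build a game arena over $\Gamma$ where $\exists$ simulates $\Minsky$ and $\forall$ audits her zero tests. $\exists$ wins iff $\Minsky$ runs forever. Every halting state gets a single outgoing transition with an instruction that is never valid, e.g.\ $\bar{a}\bar{a}\bar{a}$ from a configuration where this cannot be popped. The arena first multiplies the marker $ba$. After that, counters are stored in the top entry above $a$: in case (ii), counter~$1$ is the $b$-count and counter~$2$ the $c$-count (both $\N$-counters); in case (i), counter~$2$ is the $\Z$-counter $c$. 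Increments and decrements are the letters $b,\bar{b},c,\bar{c}$. Since $\Gamma$ has no edges between $a$ and $\{b,c\}$, the top entry behaves as an independent $\N^2$ (resp.\ $\N\times\Z$) storage above the marker.

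Next come the zero-test gadgets of \cref{fig:gadgets-ii}. When $\Minsky$ tests counter $k$, $\exists$ chooses a branch. In the ``$=0$'' branch she moves to a $\forall$ state $q^\forall_{i,j}$. From there $\forall$ either continues the simulation or challenges by moving to $q^\exists_{i,j}$. At $q^\exists_{i,j}$, $\exists$ may repeatedly decrement the \emph{other} counter and then must apply $\bar{a}\bar{b}$, after which she loops forever.

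The key verification is that $\bar{a}\bar{b}$ is valid iff the top entry is trivial. In the configuration $[b\,a\,x]$ with $x$ over $\{b,c\}^{\pm}$, the element $[b\,a\,x\,\bar{a}\,\bar{b}]$ is right-invertible iff $[x]=1$. If $a$ is unlooped this is the usual pop condition. If $a$ is looped, $[a x\bar a]$ with $[x]\neq 1$ is a nontrivial element not commuting with $b$, so the final $\bar b$ cannot cancel; this is why the marker includes $b$. Hence, after draining the other counter to $0$, $\exists$ survives a challenge iff counter $k$ is really $0$. In the ``$\neq 0$'' branch she simply decrements (then re-increments) counter $k$; for an $\N$-counter, validity of this decrement is enforced by the storage itself. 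Thus, if $\Minsky$ does not halt, $\exists$ wins by faithful simulation. If it halts, any cheating is punished by a challenge, and honest play reaches a halting state.

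The main obstacle is case (i), where counter~$2$ lives in the $\Z$-counter $c$. Here a decrement at value $0$ is not invalid in itself, so nonnegativity must be policed by $\forall$. I would add a second challenge after every $\bar{c}$ step, and after the ``$\neq0$'' claim for counter $2$. In this challenge $\exists$ must drain $b$, may then apply only $\bar c$ steps, and finally $\bar a\bar b$. She can succeed iff $c\ge 0$. For zero tests of counter $2$, draining $b$ and popping works as before. For zero tests of counter $1$, $\exists$ may drain $c$ using only $\bar c$, which is possible since $c\ge0$ is maintained. I would check carefully that no challenge lets $\exists$ win while cheating, in particular that she can never reach $c<0$ unpunished, and that honest play always survives every challenge.

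For $\UICEG$, the arena multiplies $ba$ onto whatever right-invertible initial credit is chosen. All gadgets only pop down to this marker, so the credit below is never inspected and the same equivalence holds. This gives undecidability of both $\FICEG(\Gamma)$ (with empty credit) and $\UICEG(\Gamma)$.
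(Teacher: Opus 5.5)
Your treatment of case (ii) is essentially the paper's proof: the marker $ba$, counters on $b$ and $c$, the challenge gadget of \cref{fig:gadgets-ii}, and the remark that the credit below the marker is never inspected.

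The gap is in case (i). There $c$ is looped, so $[\bar c]$ is a unit of $\Mon_\Gamma$: if $xy=1$ then $x\bar c\cdot cy=1$. Hence applying $\bar c$ never leads to an invalid configuration. Any gadget in which $\exists$ owns a state with a $\bar c$ self-loop is therefore won by her outright. She takes that loop forever and never has to attempt $\bar a\bar b$.

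This defeats both places where you drain $c$:
\begin{itemize}
\item In your nonnegativity challenge, ``she can succeed iff $c\ge 0$'' is false. After draining $b$, she survives for every value of $c$, so decrements of counter~$2$ are not policed.
\item In the zero test for counter~$1$, where she ``may drain $c$ using only $\bar c$'', she likewise survives even when $b\neq 0$.
\end{itemize}
So neither check you need is enforced, and the reduction breaks. This is exactly the obstacle the paper points out for reusing \cref{fig:gadgets-ii} in case (i).

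The missing idea is to make every draining step also consume the $\N$-valued $b$, so that draining phases have bounded length. The paper encodes $(m_1,m_2)$ as $bab^{m_1+m_2}c^{m_2}$, i.e.\ counter~$2$ is moved by $bc$ and $\bar b\bar c$. The drain loop in the zero test for counter~$1$ is $\bar b\bar c$, and for counter~$2$ it is $\bar b$. Each iteration decrements $b$, so $\exists$ cannot stall, and she can make $\bar a\bar b$ valid exactly when the tested counter is $0$.

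The price is that $\bar b$ and $\bar b\bar c$ are no longer self-policing, since the encoding stays right-invertible with $m_1<0$ or $m_2<0$. So decrements also get a challenge gadget (\cref{fig:gadgets-i}, left). Its $\exists$-state loops on both $\bar b$ and $\bar b\bar c$ before $\bar a\bar b$, and she can succeed iff $m_1,m_2\ge 0$.

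A minor point: your halting trap $\bar a\bar a\bar a$ is always valid when $a$ is looped, because $\bar a$ is then a unit. A $\bar b$ self-loop works in all variants, since it eventually hits the marker's $a$.
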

\myparagraph{The case (ii)} Valence systems over (ii) belong to
$\Pushdown(\VASS)$, i.e.\ they have a stack whose entries are $d\ge 0$ VASS
counters, and (ii) is the case of $d=2$. Here, $a$ is the pushdown letter, and
$b$ and $c$ correspond to the counters.  We exploit that a pop
is only valid if the top-most entry's counters are both zero. This permits a
reduction from two-counter machines (with instructions $\mathtt{dec}(k)$, $\mathtt{inc}(k)$, $\mathtt{zero}(k)$, for $k=1,2$), similar to undecidability of
games on VASS arenas~\cite[Thm.~13.1]{GamesOnGraphs}. We still sketch (ii), 
to set the stage for the new idea for (i).

For (ii), we use a stack with two
entries (the lower containing a single $b$), and the top-most entry
containing the two counter values: We encode $(m_1,m_2)\in\N^2$ by
$bab^{m_1}c^{m_2}$. Increments and decrements are mimiced directly on
$b$ and $c$. For zero-tests, the gadget in \cref{fig:gadgets-ii}
lets $\forall$ win if the tested counter, say $b$, was not zero: If
$\forall$ challenges, then $\exists$ first gets a chance to
reset the untested counter, say $c$, in $q_{i,j}^\exists$.  Since this
cannot go on forever (this would eventually lead to an illegal
decrement), $\exists$ must move right eventually, but this is only
victorious for $\exists$ if both counters are actually zero: Otherwise, we obtain an
element $bab^{m_1}c^{m_2}\bar{a}\bar{b}$ with $m_1,m_2$ not both zero,
which is not right-invertible (here, the single $b$ is there to make
this element non-right-invertible even if $a$ has a self-loop). \onlyFull{See
\cref{app:undec-ii} for details.}\onlyConference{See the full version~\cite{full} for details.}

\myparagraph{The case (i)} With (i), we need a new trick. Here, we have a pushdown where each entry contains one $\N$-valued counter ($b$) and one $\Z$-valued counter ($c$). The gadget \cref{fig:gadgets-ii} would not work, because when zero-testing $b$, the $\bar{c}$ loop would let $\exists$ win just by looping $\bar{c}$. 

The trick is to encode $(m_1,m_2)\in\N^2$ as $bab^{m_1+m_2}c^{m_2}$.
Then, in the new zero-test gadget (\cref{fig:gadgets-i}, right), when
$\exists$ gets a chance to deplete the untested counter in $q^\exists_{i,j}$,
each of her loops will include a decrement on $b$, which means (as above) she
cannot do this forever. However, the encoding as $bab^{m_1+m_2}c^{m_2}$ does not in itself guarantee $m_1\ge 0$ and $m_2\ge 0$: This element can be right-invertible despite $m_1<0$ or $m_2<0$. Therefore, decrements also use a gadget (\cref{fig:gadgets-i}, left) where $\forall$ wins as soon as $m_1<0$ or $m_2<0$. \onlyFull{See \cref{app:undec-i} for details.}\onlyConference{See the full version~\cite{full} for details.}

 \begin{figure}
  \centering
  \tikzset{every state/.style={minimum size=50pt}}
  \tikzset{every loop/.style={looseness=2}}
  \begin{subfigure}{0.45\textwidth}
    \centering
    \begin{tikzpicture}[
      auto,
      node distance=1.5cm,
      semithick
      ]
       \node[circle, draw] (qi) {$q_i$};
       \node[rectangle, draw] (qij) [right= 1.5cm of qi] {$q^{\forall}_{i,j}$};
       \node[circle, draw] (qj) [right of=qij] {$q_j$};
       \node[circle, draw] (qij-exists) [above of=qij] {\phantom{$q_i$}};
       \node at (qij-exists) (qij-label) {$q^{\exists}_{i,j}$};
       \node (smiley) [right= 1.3cm of qij-exists] {\scalebox{2.5}{$\Smiley$}};
       \draw[->] (qi) -- node[above]{\textcolor{myBlue}{$\bar{b}\bar{c}$}} (qij); 
       \draw[->] (qi) -- node[below] {\textcolor{myMagenta}{$\bar{b}$}} (qij); 
       \draw[->] (qij) -- node[above] {$\varepsilon$} (qj); 
       \draw[->] (qij) -- node[left] {$\varepsilon$} (qij-exists); 
       \draw[->] (qij-exists) -- node[above] {$\bar{a}\bar{b}$} (smiley); 
       \draw[->] (smiley) edge [loop above] node {$\varepsilon$} ();
       \draw[->] (qij-exists) edge [loop above] node {$\bar{b}$} ();
       \draw[->] (qij-exists) edge [loop left] node {$\bar{b}\bar{c}$} ();
    \end{tikzpicture}
    \end{subfigure}
    \hfill
    \begin{subfigure}{0.45\textwidth}
 \centering
    \begin{tikzpicture}[
      auto,
      node distance=1.5cm,
      semithick
      ]
       \node[circle, draw] (qi) {$q_i$};
       \node[rectangle, draw] (qij) [right of =qi] {$q^{\forall}_{i,j}$};
       \node[circle, draw] (qj) [right of=qij] {$q_j$};
       \node[circle, draw] (qij-exists) [above of=qij] {\phantom{$q_i$}};
       \node at (qij-exists) (qij-label) {$q^{\exists}_{i,j}$};
       \node (smiley) [right= 1.3cm of qij-exists] {\scalebox{2.5}{$\Smiley$}};
       \draw[->] (qi) -- node[above]{$\varepsilon$}  (qij); 
       \draw[->] (qij) -- node[above] {$\varepsilon$} (qj); 
       \draw[->] (qij) -- node[left] {$\varepsilon$} (qij-exists); 
       \draw[->] (qij-exists) -- node[above] {$\bar{a}\bar{b}$} (smiley); 
       \draw[->] (smiley) edge [loop above] node {$\varepsilon$} ();
       \draw[->] (qij-exists) edge [in=120, out=153, loop] node {} ();

       \node[above left=2mm of qij-exists, yshift=-0.5cm, xshift=-0.05cm] (labelc) {\textcolor{myMagenta}{$\bar{b}\bar{c}$}};   
       \node[above left=2mm of qij-exists, xshift=0.7cm, yshift=0.13cm] (labelb) {\textcolor{myBlue}{$\bar{b}$}};   

    \end{tikzpicture}
    \end{subfigure}
    \caption{Reduction of (i): $\texttt{dec}(k)$ (left) and $\texttt{zero}(k)$ (right) gadgets for $k \in \{1, 2\}$. Gadgets differ only in the colored edges, where $k=1$ and $2$ gadgets take pink (down) and blue (up) edges, resp.}
    \label{fig:gadgets-i}
  \end{figure}

\section{Completeness of characterization}\label{sec:completeness}
In this section, we prove \cref{thm:without-illegal-subgraph}.
\begin{proof}[Proof of \cref{thm:without-illegal-subgraph}]
  Suppose $\Gamma$ contains no illegal graph as an induced subgraph. Let $U$ be the set of unlooped vertices of $\Gamma$. If $U$ is empty, then $\Gamma$ even belongs to $\Group$, and thus to the union as desired. Hence, we assume $U\ne\emptyset$.  We distinguish two cases, illustrated in \cref{fig:without-illegal-subgraph}.

  First, suppose $U$ contains two adjacent vertices $a \ne b$. Then every other vertex in $V$ must be adjacent to both $a$ and $b$, as otherwise, $\Gamma$ would contain one of the illegal graphs (ii), (iii), or (iv). This means, the vertices in $U$ form an unlooped clique, and all looped vertices are adjacent to all vertices in $U$. Thus, $\Gamma$ belongs to $\VASS\times\Group$.

  Now suppose the vertices in $U$ are independent. We partition the \emph{looped}
vertices of $\Gamma$ into the sets $X$ and $Y$, where (i)~$X$ contains those
looped vertices that are adjacent to some vertex in $U$ and (ii)~$Y$ contains all
other looped vertices. %

\newcommand{\radius}{7pt}
\begin{figure}[h]
  \centering
  \begin{minipage}{0.47\textwidth}
    \centering
    \begin{tikzpicture}[scale=0.6]

  \node[circle, fill=myPurple, inner sep=2pt] (Lneck) at (0.5,2) {};
  \node[circle, fill=myPurple, inner sep=2pt] (Rneck) at (1.5,2) {};
  \node[left=4mm of Lneck, text=myPurple] {$L$};

  \node[circle, fill=myGreen, inner sep=2pt] (A) at (0.5,1) {};
  \node[circle, fill=myGreen, inner sep=2pt] (B) at (1.5,1) {};
  \node[circle, fill=myGreen, inner sep=2pt] (C) at (0.5,0) {};
  \node[circle, fill=myGreen, inner sep=2pt] (D) at (1.5,0) {};
  \node[left=4mm of C, text=myGreen] {$U$};

  \draw[myPurple] ($(Rneck.north)+(0,4pt)$) circle [radius=\radius];
  \draw[myPurple] ($(Lneck.north)+(0,4pt)$) circle [radius=\radius];

  \draw[myPurple] (Lneck) -- (A);
  \draw[myPurple] (Rneck) -- (B);
  \draw[myPurple] (Lneck) -- (B);
  \draw[myPurple] (Rneck) -- (A);

  \draw[myGreen] (A) -- (B);
  \draw[myGreen] (C) -- (D);
  \draw[myGreen] (A) -- (C);
  \draw[myGreen] (B) -- (D);
  \draw[myGreen] (A) -- (D);
  \draw[myGreen] (B) -- (C);

  \draw[myPurple] (Lneck) -- (D);
  \draw[myPurple] (Rneck) -- (C);

\draw[myPurple] (Lneck.west)
  to[out=200, in=160, looseness=0.8] (C.west);

\draw[myPurple] (Rneck.east)
  to[out=-20, in=20, looseness=0.8] (D.east);
\end{tikzpicture}
    \caption*{{\color{myGreen}$\VASS$}$ \times ${\color{myPurple}{$\Group$}}}
  \end{minipage}\hfill
  \begin{minipage}{0.47\textwidth}
    \centering
    \begin{tikzpicture}[scale=0.6]

  \node[circle, fill=myOrange, inner sep=2pt] (Ua) at (0, 0.5) {};
  \node[circle, fill=myOrange, inner sep=2pt] (Ub) at (0, 1.5) {}; 
  \node[circle, fill=myOrange, inner sep=2pt] (Uc) at (0, 2.5) {};
  \node[above=4mm of Uc, text=myOrange] {$U$};

  \node[circle, fill=myPurple, inner sep=2pt] (Xa) at (2, 0.5) {};
  \node[circle, fill=myPurple, inner sep=2pt] (Xb) at (2, 1.5) {};
  \node[circle, fill=myPurple, inner sep=2pt] (Xc) at (2, 2.5) {};
  \node[above=4mm of Xc, text=myPurple] {$X$};

  \node[circle, fill=myMagenta, inner sep=2pt] (Ya) at (4, 0.5) {};
  \node[circle, fill=myMagenta, inner sep=2pt] (Yb) at (4, 1.5) {};
  \node[circle, fill=myMagenta, inner sep=2pt] (Yc) at (4, 2.5) {};
  \node[above=4mm of Yc, text=myMagenta] {$Y$};

  \draw[myPurple] ($(Xa.south)-(0,4pt)$) circle [radius=\radius];
  \draw[myPurple] ($(Xb.south)-(0,4pt)$) circle [radius=\radius];
  \draw[myPurple] ($(Xc.north)+(0,4pt)$) circle [radius=\radius];

  \draw[myMagenta] ($(Ya.south)-(0,4pt)$) circle [radius=\radius];
  \draw[myMagenta] ($(Yb.north)+(0,4pt)$) circle [radius=\radius];
  \draw[myMagenta] ($(Yc.north)+(0,4pt)$) circle [radius=\radius];

  \draw[myPurple] (Ua) -- (Xa);
  \draw[myPurple] (Ub) -- (Xa);
  \draw[myPurple] (Uc) -- (Xa);
  \draw[myPurple] (Ua) -- (Xb);
  \draw[myPurple] (Ub) -- (Xb);
  \draw[myPurple] (Uc) -- (Xb);
  \draw[myPurple] (Ua) -- (Xc);
  \draw[myPurple] (Ub) -- (Xc);
  \draw[myPurple] (Uc) -- (Xc);

  \draw[myPurple] (Ya) -- (Xa);
  \draw[myPurple] (Yb) -- (Xa);
  \draw[myPurple] (Yc) -- (Xa);
  \draw[myPurple] (Ya) -- (Xb);
  \draw[myPurple] (Yb) -- (Xb);
  \draw[myPurple] (Yc) -- (Xb);
  \draw[myPurple] (Ya) -- (Xc);
  \draw[myPurple] (Yb) -- (Xc);
  \draw[myPurple] (Yc) -- (Xc);

  \draw[myPurple] (Xb) -- (Xc);
  \draw[myMagenta] (Ya) -- (Yb);
\end{tikzpicture}
    \caption*{{\color{myOrange}$\Pushdown$}$(${\color{myMagenta}$\Group$}$)\times${\color{myPurple}$\Group$}}
  \end{minipage}\caption{Any $\Gamma$ without an illegal graph as induced subgraph is in one of these two classes.}\label{fig:without-illegal-subgraph}
\end{figure}

\textit{Claim 1:} All vertices in $X$ are adjacent to all vertices in $U$. Indeed, otherwise there is an $x\in X$ not adjacent to some $u\in U$. However, $x$ is adjacent to some $u'\in U$ by definition of $X$, and so $u,u',x$ induce an illegal graph~(i). This establishes Claim~1.

\textit{Claim 2:} All vertices in $X$ are adjacent to all vertices in $Y$. Indeed, otherwise there are non-adjacent $x\in X$ and $y\in Y$. But since there is $u\in U$, the vertices $y,u,x$ together induce an illegal graph~(i). This establishes Claim~2.

Clearly, the graph induced by $U\cup Y$ belongs to $\Pushdown(\Group)$. By our claims, the vertices in $X$ are adjacent to all vertices in $U \cup Y$, hence the entire $\Gamma$ belongs to $\Pushdown(\Group)\times\Group$.
\end{proof}

\label{beforebibliography}
\newoutputstream{pages}
\openoutputfile{main.pages.ctr}{pages}
\addtostream{pages}{\getpagerefnumber{beforebibliography}}
\closeoutputstream{pages}
\bibliography{references}

\begin{thebibliography}{10}

\bibitem{AbdullaAHMKT14}
Parosh~Aziz Abdulla, Mohamed~Faouzi Atig, Piotr Hofman, Richard Mayr,
  K.~Narayan Kumar, and Patrick Totzke.
\newblock Infinite-state energy games.
\newblock In Thomas~A. Henzinger and Dale Miller, editors, {\em Joint Meeting
  of the Twenty-Third {EACSL} Annual Conference on Computer Science Logic
  {(CSL)} and the Twenty-Ninth Annual {ACM/IEEE} Symposium on Logic in Computer
  Science (LICS), {CSL-LICS} '14, Vienna, Austria, July 14 - 18, 2014}, pages
  7:1--7:10. {ACM}, 2014.
\newblock \href {https://doi.org/10.1145/2603088.2603100}
  {\path{doi:10.1145/2603088.2603100}}.

\bibitem{DBLP:conf/concur/AbdullaMSS13}
Parosh~Aziz Abdulla, Richard Mayr, Arnaud Sangnier, and Jeremy Sproston.
\newblock Solving parity games on integer vectors.
\newblock In Pedro~R. D'Argenio and Hern{\'{a}}n~C. Melgratti, editors, {\em
  {CONCUR} 2013 - Concurrency Theory - 24th International Conference, {CONCUR}
  2013, Buenos Aires, Argentina, August 27-30, 2013. Proceedings}, volume 8052
  of {\em Lecture Notes in Computer Science}, pages 106--120. Springer, 2013.
\newblock \href {https://doi.org/10.1007/978-3-642-40184-8_9}
  {\path{doi:10.1007/978-3-642-40184-8_9}}.

\bibitem{DBLP:conf/lics/AnandSSZ24}
Ashwani Anand, Sylvain Schmitz, Lia Sch{\"{u}}tze, and Georg Zetzsche.
\newblock Verifying unboundedness via amalgamation.
\newblock In Pawel Sobocinski, Ugo~Dal Lago, and Javier Esparza, editors, {\em
  Proceedings of the 39th Annual {ACM/IEEE} Symposium on Logic in Computer
  Science, {LICS} 2024, Tallinn, Estonia, July 8-11, 2024}, pages 4:1--4:15.
  {ACM}, 2024.
\newblock \href {https://doi.org/10.1145/3661814.3662133}
  {\path{doi:10.1145/3661814.3662133}}.

\bibitem{DBLP:conf/stoc/BiziereC25}
Clotilde Bizi{\`{e}}re and Wojciech Czerwinski.
\newblock Reachability in one-dimensional pushdown vector addition systems is
  decidable.
\newblock In Michal Kouck{\'{y}} and Nikhil Bansal, editors, {\em Proceedings
  of the 57th Annual {ACM} Symposium on Theory of Computing, {STOC} 2025,
  Prague, Czechia, June 23-27, 2025}, pages 1851--1862. {ACM}, 2025.
\newblock \href {https://doi.org/10.1145/3717823.3718149}
  {\path{doi:10.1145/3717823.3718149}}.

\bibitem{DBLP:conf/icalp/BrazdilJK10}
Tom{\'{a}}s Br{\'{a}}zdil, Petr Jancar, and Anton{\'{\i}}n Kucera.
\newblock Reachability games on extended vector addition systems with states.
\newblock In Samson Abramsky, Cyril Gavoille, Claude Kirchner, Friedhelm~Meyer
  auf~der Heide, and Paul~G. Spirakis, editors, {\em Automata, Languages and
  Programming, 37th International Colloquium, {ICALP} 2010, Bordeaux, France,
  July 6-10, 2010, Proceedings, Part {II}}, volume 6199 of {\em Lecture Notes
  in Computer Science}, pages 478--489. Springer, 2010.
\newblock \href {https://doi.org/10.1007/978-3-642-14162-1_40}
  {\path{doi:10.1007/978-3-642-14162-1_40}}.

\bibitem{DBLP:conf/mfcs/BuckheisterZ13}
P.~Buckheister and Georg Zetzsche.
\newblock Semilinearity and context-freeness of languages accepted by valence
  automata.
\newblock In Krishnendu Chatterjee and Jir{\'{\i}} Sgall, editors, {\em
  Mathematical Foundations of Computer Science 2013 - 38th International
  Symposium, {MFCS} 2013, Klosterneuburg, Austria, August 26-30, 2013.
  Proceedings}, Lecture Notes in Computer Science, pages 231--242. Springer,
  2013.
\newblock \href {https://doi.org/10.1007/978-3-642-40313-2_22}
  {\path{doi:10.1007/978-3-642-40313-2_22}}.

\bibitem{Cachat02}
Thierry Cachat.
\newblock Symbolic strategy synthesis for games on pushdown graphs.
\newblock In Peter Widmayer, Francisco~Triguero Ruiz, Rafael~Morales Bueno,
  Matthew Hennessy, Stephan~J. Eidenbenz, and Ricardo Conejo, editors, {\em
  Automata, Languages and Programming, 29th International Colloquium, {ICALP}
  2002, Malaga, Spain, July 8-13, 2002, Proceedings}, volume 2380 of {\em
  Lecture Notes in Computer Science}, pages 704--715. Springer, 2002.
\newblock \href {https://doi.org/10.1007/3-540-45465-9_60}
  {\path{doi:10.1007/3-540-45465-9_60}}.

\bibitem{DBLP:journals/fuin/Chaloupka13}
Jakub Chaloupka.
\newblock Z-reachability problem for games on 2-dimensional vector addition
  systems with states is in {P}.
\newblock {\em Fundam. Informaticae}, 123(1):15--42, 2013.
\newblock \href {https://doi.org/10.3233/FI-2013-798}
  {\path{doi:10.3233/FI-2013-798}}.

\bibitem{ChatterjeeDHR10}
Krishnendu Chatterjee, Laurent Doyen, Thomas~A. Henzinger, and
  Jean{-}Fran{\c{c}}ois Raskin.
\newblock Generalized mean-payoff and energy games.
\newblock In Kamal Lodaya and Meena Mahajan, editors, {\em {IARCS} Annual
  Conference on Foundations of Software Technology and Theoretical Computer
  Science, {FSTTCS} 2010, December 15-18, 2010, Chennai, India}, volume~8 of
  {\em LIPIcs}, pages 505--516. Schloss Dagstuhl - Leibniz-Zentrum f{\"{u}}r
  Informatik, 2010.
\newblock \href {https://doi.org/10.4230/LIPICS.FSTTCS.2010.505}
  {\path{doi:10.4230/LIPICS.FSTTCS.2010.505}}.

\bibitem{Secondary-ChatterjeeRR12}
Krishnendu Chatterjee, Mickael Randour, and Jean{-}Fran{\c{c}}ois Raskin.
\newblock Strategy synthesis for multi-dimensional quantitative objectives.
\newblock In Maciej Koutny and Irek Ulidowski, editors, {\em {CONCUR} 2012 -
  Concurrency Theory - 23rd International Conference, {CONCUR} 2012, Newcastle
  upon Tyne, UK, September 4-7, 2012. Proceedings}, volume 7454 of {\em Lecture
  Notes in Computer Science}, pages 115--131. Springer, 2012.
\newblock \href {https://doi.org/10.1007/978-3-642-32940-1_10}
  {\path{doi:10.1007/978-3-642-32940-1_10}}.

\bibitem{DBLP:conf/lics/ColcombetJLS17}
Thomas Colcombet, Marcin Jurdzinski, Ranko Lazic, and Sylvain Schmitz.
\newblock Perfect half space games.
\newblock In {\em 32nd Annual {ACM/IEEE} Symposium on Logic in Computer
  Science, {LICS} 2017, Reykjavik, Iceland, June 20-23, 2017}, pages 1--11.
  {IEEE} Computer Society, 2017.
\newblock \href {https://doi.org/10.1109/LICS.2017.8005105}
  {\path{doi:10.1109/LICS.2017.8005105}}.

\bibitem{DBLP:conf/mfcs/CourtoisS14}
Jean{-}Baptiste Courtois and Sylvain Schmitz.
\newblock Alternating vector addition systems with states.
\newblock In Erzs{\'{e}}bet Csuhaj{-}Varj{\'{u}}, Martin Dietzfelbinger, and
  Zolt{\'{a}}n {\'{E}}sik, editors, {\em Mathematical Foundations of Computer
  Science 2014 - 39th International Symposium, {MFCS} 2014, Budapest, Hungary,
  August 25-29, 2014. Proceedings, Part {I}}, volume 8634 of {\em Lecture Notes
  in Computer Science}, pages 220--231. Springer, 2014.
\newblock \href {https://doi.org/10.1007/978-3-662-44522-8_19}
  {\path{doi:10.1007/978-3-662-44522-8_19}}.

\bibitem{DavisJanuszkiewicz2000}
Michael~W. Davis and Tadeusz Januszkiewicz.
\newblock Right-angled artin groups are commensurable with right-angled coxeter
  groups.
\newblock {\em Journal of Pure and Applied Algebra}, 153(3):229--235, 2000.
\newblock \href {https://doi.org/10.1016/S0022-4049(99)00175-9}
  {\path{doi:10.1016/S0022-4049(99)00175-9}}.

\bibitem{DBLP:conf/lics/DOsualdoMZ16}
Emanuele D'Osualdo, Roland Meyer, and Georg Zetzsche.
\newblock First-order logic with reachability for infinite-state systems.
\newblock In Martin Grohe, Eric Koskinen, and Natarajan Shankar, editors, {\em
  Proceedings of the 31st Annual {ACM/IEEE} Symposium on Logic in Computer
  Science, {LICS} '16, New York, NY, USA, July 5-8, 2016}, pages 457--466.
  {ACM}, 2016.
\newblock \href {https://doi.org/10.1145/2933575.2934552}
  {\path{doi:10.1145/2933575.2934552}}.

\bibitem{esparza1994decidability}
Javier Esparza.
\newblock On the decidability of model checking for several $\mu$-calculi and
  {Petri} nets.
\newblock In {\em Colloquium on Trees in Algebra and Programming}, pages
  115--129. Springer, 1994.
\newblock \href {https://doi.org/10.1007/BFb0017477}
  {\path{doi:10.1007/BFb0017477}}.

\bibitem{esparza2024decidability}
Javier Esparza and Mogens Nielsen.
\newblock Decidability issues for petri nets -- a survey, 2024.
\newblock \href {http://arxiv.org/abs/2411.01592} {\path{arXiv:2411.01592}},
  \href {https://doi.org/10.48550/arXiv.2411.01592}
  {\path{doi:10.48550/arXiv.2411.01592}}.

\bibitem{GamesOnGraphs}
Nathanaël Fijalkow, C.~Aiswarya, Guy Avni, Nathalie Bertrand, Patricia Bouyer,
  Romain Brenguier, Arnaud Carayol, Antonio Casares, John Fearnley, Paul
  Gastin, Hugo Gimbert, Thomas~A. Henzinger, Florian Horn, Rasmus Ibsen-Jensen,
  Nicolas Markey, Benjamin Monmege, Petr Novotný, Pierre Ohlmann, Mickael
  Randour, Ocan Sankur, Sylvain Schmitz, Olivier Serre, Mateusz Skomra,
  Nathalie Sznajder, and Pierre Vandenhove.
\newblock Games on graphs: From logic and automata to algorithms, 2025.
\newblock \href {http://arxiv.org/abs/2305.10546} {\path{arXiv:2305.10546}},
  \href {https://doi.org/10.48550/arXiv.2305.10546}
  {\path{doi:10.48550/arXiv.2305.10546}}.

\bibitem{DBLP:conf/lics/GanardiMZ22}
Moses Ganardi, Rupak Majumdar, and Georg Zetzsche.
\newblock The complexity of bidirected reachability in valence systems.
\newblock In Christel Baier and Dana Fisman, editors, {\em {LICS} '22: 37th
  Annual {ACM/IEEE} Symposium on Logic in Computer Science, Haifa, Israel,
  August 2 - 5, 2022}, pages 26:1--26:15. {ACM}, 2022.
\newblock \href {https://doi.org/10.1145/3531130.3533345}
  {\path{doi:10.1145/3531130.3533345}}.

\bibitem{DBLP:journals/corr/abs-2504-05015}
Roland Guttenberg, Eren Keskin, and Roland Meyer.
\newblock {PVASS} reachability is decidable.
\newblock {\em CoRR}, abs/2504.05015, 2025.
\newblock \href {http://arxiv.org/abs/2504.05015} {\path{arXiv:2504.05015}},
  \href {https://doi.org/10.48550/ARXIV.2504.05015}
  {\path{doi:10.48550/ARXIV.2504.05015}}.

\bibitem{DBLP:conf/rp/HaaseH14}
Christoph Haase and Simon Halfon.
\newblock Integer vector addition systems with states.
\newblock In Jo{\"{e}}l Ouaknine, Igor Potapov, and James Worrell, editors,
  {\em Reachability Problems - 8th International Workshop, {RP} 2014, Oxford,
  UK, September 22-24, 2014. Proceedings}, volume 8762 of {\em Lecture Notes in
  Computer Science}, pages 112--124. Springer, 2014.
\newblock \href {https://doi.org/10.1007/978-3-319-11439-2_9}
  {\path{doi:10.1007/978-3-319-11439-2_9}}.

\bibitem{HsuWise1999}
Tim Hsu and Daniel~T Wise.
\newblock On linear and residual properties of graph products.
\newblock {\em Michigan Mathematical Journal}, 46(2):251--259, 1999.
\newblock \href {https://doi.org/10.1307/mmj/1030132408}
  {\path{doi:10.1307/mmj/1030132408}}.

\bibitem{JurdzinskiLS15}
Marcin Jurdzinski, Ranko Lazic, and Sylvain Schmitz.
\newblock Fixed-dimensional energy games are in pseudo-polynomial time.
\newblock In Magn{\'{u}}s~M. Halld{\'{o}}rsson, Kazuo Iwama, Naoki Kobayashi,
  and Bettina Speckmann, editors, {\em Automata, Languages, and Programming -
  42nd International Colloquium, {ICALP} 2015, Kyoto, Japan, July 6-10, 2015,
  Proceedings, Part {II}}, volume 9135 of {\em Lecture Notes in Computer
  Science}, pages 260--272. Springer, 2015.
\newblock \href {https://doi.org/10.1007/978-3-662-47666-6_21}
  {\path{doi:10.1007/978-3-662-47666-6_21}}.

\bibitem{DBLP:conf/icalp/LerouxST15}
J{\'{e}}r{\^{o}}me Leroux, Gr{\'{e}}goire Sutre, and Patrick Totzke.
\newblock On the coverability problem for pushdown vector addition systems in
  one dimension.
\newblock In Magn{\'{u}}s~M. Halld{\'{o}}rsson, Kazuo Iwama, Naoki Kobayashi,
  and Bettina Speckmann, editors, {\em Automata, Languages, and Programming -
  42nd International Colloquium, {ICALP} 2015, Kyoto, Japan, July 6-10, 2015,
  Proceedings, Part {II}}, volume 9135 of {\em Lecture Notes in Computer
  Science}, pages 324--336. Springer, 2015.
\newblock \href {https://doi.org/10.1007/978-3-662-47666-6_26}
  {\path{doi:10.1007/978-3-662-47666-6_26}}.

\bibitem{LiptonZalcstein1977}
Richard~J. Lipton and Yechezkel Zalcstein.
\newblock Word problems solvable in logspace.
\newblock {\em Journal of the ACM}, 24(3):522--526, July 1977.
\newblock \href {https://doi.org/10.1145/322017.322031}
  {\path{doi:10.1145/322017.322031}}.

\bibitem{lohrey2008submonoid}
Markus Lohrey and Benjamin Steinberg.
\newblock The submonoid and rational subset membership problems for graph
  groups.
\newblock {\em Journal of Algebra}, 320(2):728--755, 2008.
\newblock \href {https://doi.org/10.1016/j.jalgebra.2007.08.025}
  {\path{doi:10.1016/j.jalgebra.2007.08.025}}.

\bibitem{LyndonSchupp1977}
Roger~C. Lyndon and Paul~E. Schupp.
\newblock {\em Combinatorial Group Theory}.
\newblock Springer Berlin, Heidelberg, 1977.
\newblock \href {https://doi.org/10.1007/978-3-642-61896-3}
  {\path{doi:10.1007/978-3-642-61896-3}}.

\bibitem{mandel2026complexitydownwardclosuresindexed}
Richard Mandel, Corto Mascle, and Georg Zetzsche.
\newblock The complexity of downward closures of indexed languages, 2026.
\newblock \href {http://arxiv.org/abs/2601.19466} {\path{arXiv:2601.19466}},
  \href {https://doi.org/10.48550/arXiv.2601.19466}
  {\path{doi:10.48550/arXiv.2601.19466}}.

\bibitem{Borel}
Donald~A. Martin.
\newblock Borel determinacy.
\newblock {\em Annals of Mathematics}, 102(2):363--371, 1975.
\newblock \href {https://doi.org/10.2307/1971035} {\path{doi:10.2307/1971035}}.

\bibitem{DBLP:conf/concur/MeyerMZ18}
Roland Meyer, Sebastian Muskalla, and Georg Zetzsche.
\newblock Bounded context switching for valence systems.
\newblock In Sven Schewe and Lijun Zhang, editors, {\em 29th International
  Conference on Concurrency Theory, {CONCUR} 2018, Beijing, China, September
  4-7, 2018}, volume 118 of {\em LIPIcs}, pages 12:1--12:18. Schloss Dagstuhl -
  Leibniz-Zentrum f{\"{u}}r Informatik, 2018.
\newblock \href {https://doi.org/10.4230/LIPICS.CONCUR.2018.12}
  {\path{doi:10.4230/LIPICS.CONCUR.2018.12}}.

\bibitem{Muskalla23}
Sebastian Muskalla.
\newblock {\em Certificates for automata in a hostile environment}.
\newblock PhD thesis, 2023.
\newblock \href {https://doi.org/10.24355/dbbs.084-202401201643-0}
  {\path{doi:10.24355/dbbs.084-202401201643-0}}.

\bibitem{DBLP:journals/entcs/RaskinSB05}
Jean{-}Fran{\c{c}}ois Raskin, Mathias Samuelides, and Laurent~Van Begin.
\newblock Games for counting abstractions.
\newblock In Michael Huth, editor, {\em Proceedings of the Fouth International
  Workshop on Automated Verification of Critical Systems, AVoCS 2004, London,
  UK, September 4, 2004}, volume 128 of {\em Electronic Notes in Theoretical
  Computer Science}, pages 69--85. Elsevier, 2004.
\newblock \href {https://doi.org/10.1016/J.ENTCS.2005.04.005}
  {\path{doi:10.1016/J.ENTCS.2005.04.005}}.

\bibitem{DBLP:conf/concur/ShettyKZ21}
Aneesh~K. Shetty, S.~Krishna, and Georg Zetzsche.
\newblock Scope-bounded reachability in valence systems.
\newblock In Serge Haddad and Daniele Varacca, editors, {\em 32nd International
  Conference on Concurrency Theory, {CONCUR} 2021, Virtual Conference, August
  24-27, 2021}, volume 203 of {\em LIPIcs}, pages 29:1--29:19. Schloss Dagstuhl
  - Leibniz-Zentrum f{\"{u}}r Informatik, 2021.
\newblock \href {https://doi.org/10.4230/LIPICS.CONCUR.2021.29}
  {\path{doi:10.4230/LIPICS.CONCUR.2021.29}}.

\bibitem{Walukiewicz01}
Igor Walukiewicz.
\newblock Pushdown processes: Games and model-checking.
\newblock {\em Inf. Comput.}, 164(2):234--263, 2001.
\newblock \href {https://doi.org/10.1006/INCO.2000.2894}
  {\path{doi:10.1006/INCO.2000.2894}}.

\bibitem{DBLP:conf/icalp/Zetzsche13}
Georg Zetzsche.
\newblock Silent transitions in automata with storage.
\newblock In Fedor~V. Fomin, Rusins Freivalds, Marta~Z. Kwiatkowska, and David
  Peleg, editors, {\em Automata, Languages, and Programming - 40th
  International Colloquium, {ICALP} 2013, Riga, Latvia, July 8-12, 2013,
  Proceedings, Part {II}}, volume 7966 of {\em Lecture Notes in Computer
  Science}, pages 434--445. Springer, 2013.
\newblock \href {https://doi.org/10.1007/978-3-642-39212-2_39}
  {\path{doi:10.1007/978-3-642-39212-2_39}}.

\bibitem{DBLP:conf/stacs/Zetzsche15}
Georg Zetzsche.
\newblock Computing downward closures for stacked counter automata.
\newblock In Ernst~W. Mayr and Nicolas Ollinger, editors, {\em 32nd
  International Symposium on Theoretical Aspects of Computer Science, {STACS}
  2015, Garching, Germany, March 4-7, 2015}, volume~30 of {\em LIPIcs}, pages
  743--756. Schloss Dagstuhl - Leibniz-Zentrum f{\"{u}}r Informatik, 2015.
\newblock \href {https://doi.org/10.4230/LIPICS.STACS.2015.743}
  {\path{doi:10.4230/LIPICS.STACS.2015.743}}.

\bibitem{Zetzsche2016c}
Georg Zetzsche.
\newblock {\em Monoids as Storage Mechanisms}.
\newblock {PhD} thesis, Technische Universit{\"a}t Kaiserslautern, 2016.
\newblock URL:
  \url{https://kluedo.ub.rptu.de/frontdoor/index/index/docId/4400}.

\bibitem{DBLP:journals/iandc/Zetzsche21}
Georg Zetzsche.
\newblock The emptiness problem for valence automata over graph monoids.
\newblock {\em Inf. Comput.}, 277:104583, 2021.
\newblock \href {https://doi.org/10.1016/J.IC.2020.104583}
  {\path{doi:10.1016/J.IC.2020.104583}}.

\bibitem{DBLP:conf/rp/Zetzsche21}
Georg Zetzsche.
\newblock Recent advances on reachability problems for valence systems (invited
  talk).
\newblock In Paul~C. Bell, Patrick Totzke, and Igor Potapov, editors, {\em
  Reachability Problems - 15th International Conference, {RP} 2021, Liverpool,
  UK, October 25-27, 2021, Proceedings}, volume 13035 of {\em Lecture Notes in
  Computer Science}, pages 52--65. Springer, 2021.
\newblock \href {https://doi.org/10.1007/978-3-030-89716-1_4}
  {\path{doi:10.1007/978-3-030-89716-1_4}}.

\end{thebibliography}

\newpage

\onlyFull{
\appendix
\section{Additional material for Section~\ref{preliminaries}}\label{app:preliminaries}
\subsection{About dead ends in game arenas}\label{app:dead-ends}
One could also define valence game arenas where dead ends---i.e.\ states with no outgoing transitions--- are allowed. In this case, one would define a play that ends in a dead end as winning for the universal player. It is not difficult to see that for \nrgs, this more general setting reduces to the dead-end-free setting for every graph $\Gamma$: 
\begin{enumerate}
\item If $\Gamma$ has an unlooped vertex $a$, we can just add to every dead-end
$q$ a self-loop $q\xrightarrow{\bar{a}} q$. Then, reaching a dead-end means the
rest of the play consists of this transition, which will eventually reach a
non-right-invertible element.
\item If $\Gamma$ has only looped vertices, then $\Mon_\Gamma$ is a group and
every element of $\Mon_\Gamma$ is right-invertible. In this case, the game is
won by the existential player if and only if they have a strategy to avoid
dead-ends. It is thus a simple safety game, which can be decided in polynomial
time.
\end{enumerate}

\section{Additional material for Section~\ref{main-results}}\label{app:main-results}

\subsection{Non-termination games}
Here, we give a brief definition of non-termination games. These are games played on configuration graphs of infinite-state, where the objective of the existential player is to play infinitely long. 

The purpose of introducing these games is to highlight their decidability
properties in contrast to \nrgs. Therefore, crucially, the arena only consists
of valid configurations of the underlying infinite-state systems. In valence
systems, valid configurations are those that are right-invertible. Therefore, a
\emph{play (in a non-termination game)} is a (finite or infinite) run $(q_1,x_1)\to
(q_2,x_2)\to\cdots$ such that $x_i\in\RInv{\Mon_\Gamma}$ for every $i$.

Strategies are defined as in \nrgs, but notice that since runs can only visit
$\RInv{\Mon_\Gamma}$, it can happen that a play gets stuck (and is thus maximal
and finite), even if every control-state has an outgoing transition: For
example, applying a decrement in a VASS may not be possible without making the
counter negative. In this case, the play \emph{is won by $\forall$}, since the
objective of $\exists$ is to play infinitely long.

\subsection{Non-termination games for $\Pushdown(\ZVASS)$}\label{app:infinite-runs-pdtwoz}
Let us now show undecidability of non-termination games for the graph in
\cref{fig:pdtwoz}, which belongs to $\Pushdown(\Group)$, and specifically to
$\Pushdown(\twoz{0.75})\subseteq\Pushdown(\ZVASS)$. Thus, valence systems over
the graph in \cref{fig:pdtwoz} are pushdown systems where the stack contains
pushdown letters and between them a vector in $\Z^2$.

As in the undecidability proof for non-termination games on VASS configuration graphs~\cite[Thm.~13.1]{GamesOnGraphs}, we reduce from the non-termination problem
for deterministic two-counter machines. To simplify the construction, we assume that the two-counter machine's counters range over $\Z$: Since it has zero-tests available, that still leaves its non-termination game problem undecidable. We take the two-counter machine and
\begin{figure}
\pdtwoz{1}
\caption{Graph in $\Pushdown(\Group)$ with undecidable non-termination game problem}\label{fig:pdtwoz}
\end{figure}
turn operations on the counters $i=0,1$ into labels $b_i$ (to increment counter
$i$) and $\bar{b}_i$ (to decrement counter $i$). However, before starting the simulation, we first apply $a$, in a fresh initial state. All these states belong to $\exists$-player.

\begin{figure}
  \tikzset{every state/.style={minimum size=50pt}}
  \tikzset{every loop/.style={looseness=2}}
    \begin{tikzpicture}[
      auto,
      node distance=1.5cm,
      semithick
      ]
\node[draw, rectangle] (zi) {$z_i$};
\node[draw, rectangle, above right=1cm of zi] (tp) {$+$};
\node[draw, rectangle, below right=1cm of zi] (tn) {$-$};
\node[draw, rectangle, right=2cm of zi] (di) {$d_i$};
\path[->] (zi) edge node[above] {$\bar{b}_i$} (tp);
\path[->] (zi) edge node[below] {$b_i$} (tn);
\path[->] (tp) edge[loop above] node {$\bar{b}_i$, $b_{1-i}$, $\bar{b}_{1-i}$} (tp);
\path[->] (tn) edge[loop below] node {$b_i$, $b_{1-i}$, $\bar{b}_{1-i}$} (tn);
\path[->] (tp) edge node[above] {$\bar{a}$} (di);
\path[->] (tn) edge node[below] {$\bar{a}$} (di);
\node[below=0.5cm of zi] (dummybelow) {};
\node[above=0.5cm of zi] (dummyabove) {};
\path[->] (dummybelow) edge node[above] {} (zi);
\path[->] (zi) edge node[above] {} (dummyabove);
\path[->] (di) edge[loop right] node {$\bar{a}$} (di);
\end{tikzpicture}
\caption{Gadget for zero tests in non-termination games in $\Pushdown(\ZVASS)$.}\label{fig:undecidability-gadget-pdtwoz}
\end{figure}
The only non-obvious task is to simulate zero tests. In order to zero-test counter $i$, we use the gadget in \cref{fig:undecidability-gadget-pdtwoz}: If there is a zero-test edge in the two-counter machine from state $p$ to $q$, we use a copy of the gadget in \cref{fig:undecidability-gadget-pdtwoz} and add an edge $p\to z_i$ and $z_i\to q$ (indicated by the two arrows without source resp. target in \cref{fig:undecidability-gadget-pdtwoz}). Note that all states in the gadget belong to $\forall$-player.

This has the following effect. If counter $i$ is zero-tested, then $\forall$
can challenge this, either by claiming counter $i$ is in fact positive (by
going to state $+$) or by claiming it is in fact negative (by going to $-$). If
$\forall$ goes to $+$ and counter $i$ had indeed been positive, then $\forall$
can go to a configuration where both counters are zero (hence the entire
storage configuration is $a$), and then use the $\bar{a}$ to move to $d_i$.
Since then, the $\bar{a}$ loop at $d_i$ is not present (since it would be
invalid to pop another $a$), the run gets stuck in $d_i$ and $\forall$ wins.
Similarly, if the counter $i$ had in fact been negative, $\forall$ can force a
stuck game by going to $d_i$ via $-$.

However, if $\forall$'s challenge was incorrect (i.e. $\forall$ claimed ``positive'' and counter $i$ was $\le 0$; or it guessed ``negative'' and counter $i$ was $\ge 0$), 
then going to $+$ or $-$ will leave the game in that respective state: Counter
$i$ will then always be negative (in $+$) or always be positive (in $-$), so
that it is never possible to move to $d_i$. Thus, the run is stays in $+$ or
$-$, never gets stuck and is thus infinite.

Therefore, $\exists$ can enfore an infinite run if and only if the (unique) run
of the two-counter machines is infinite.

\begin{remark}
Note that the construction does not work for \nrgs: There, the transitions
$+\xrightarrow{\bar{a}} d_i$ and $-\xrightarrow{\bar{a}} d_i$ would
always be there if the counters are not both zero, and they would mean
victory for $\forall$. Thus, $\exists$ would not be able to enforce an
infinite run after performing a correct zero test.

In other words, this undecidability proof exploits the fact that the
transitions $+\xrightarrow{\bar{a}} d_i$ and $-\xrightarrow{\bar{a}}
d_i$ are not present in the arena after a correct zero test; their
absence makes $\exists$ win.
\end{remark}

\begin{remark}
	One might wonder why there is a self-loop $d_i\xrightarrow{\bar{a}}
	d_i$: after all, without it, any configuration in $d_i$ would also be
	a dead end. However, we wanted to show that undecidability even holds
	if in the game, every state has an outgoing transition (like we assume for
	\nrgs).
\end{remark}

\section{Additional Material for Section~\ref{proof-overview}}\label{sec:app-proof-overview}

\subsubsection*{Equivalence of non-termination games on Pushdown graphs (\cite{Walukiewicz01,Cachat02}) and \nrgs ~on $\Pushdown$ graphs}
A pushdown game graph is a tuple $(Q^\texttt{P} = Q^\texttt{P}_\exists \sqcup Q^\texttt{P}_\forall, \Sigma^\texttt{P}, \delta^\texttt{P})$ where $Q^\texttt{P}$ is a finite set of states, where $Q^\texttt{P}_\exists$ and $Q^\texttt{P}_\forall$ are existential player and universal player states, as expected. $\Sigma^\texttt{P}$ is a finite stack alphabet and $\delta^{\texttt{P}} \subseteq Q^\texttt{P} \times \Sigma^\texttt{P} \times Q^\texttt{P} \times (\Sigma^\texttt{P})^*$ is a finite transition system. A configuration $(q, w) \in Q^\texttt{P} \times (\Sigma^\texttt{P})^*$ consists of a state and a stack content, and 
$(p, v\gamma) \to (q, v w)$ for $(p, \gamma, q, w) \in \delta^\texttt{P}$. In other words, the transitions that give rise to non-right-invertible configurations, which are allowed in \nrgs, are disallowed in these games.
While defining the winning condition in pushdown games, one typically (\cite{Cachat02, Walukiewicz01}) lets the existential (universal) player to win every game that ends in a deadlock in a universal (existential) player state.
For non-termination games, the existential player wins every infinite game.

Non-termination games on a Pushdown graphs are seen to be equivalent to \nrgs~ on $\Pushdown$ graphs ($\Gamma = (\Sigma^\texttt{P}, \emptyset)$) via the following reductions. To reduce a \nrg~ to a non-termination game: Whenever there is a pop transition $p \xrightarrow{\bar{\gamma}} q$ defined on a universal state $p$ in a \nrg~ on $\Pushdown$ graph, turn it into two transitions $p \xrightarrow{\varepsilon} p' \xrightarrow{\bar{\gamma}} q$ by adding a new existential state $p'$. This shifts the burden of the non-right-invertible transition from the universal player to the existential player, and the play is lost by the existential player in both games if the state $p$ is reached with a stack content that does not end with $\gamma$.

To reduce a non-termination game to a \nrg : Do not alter the pop transitions on existential states, as the existential player cannot take them in the non-termination game (and loses if there is no other transition to take), and she avoids them in the \nrg~ (and loses if there is no other transition to take). Whenever there is a pop transition $p \xrightarrow{\bar{\gamma}} q$ from a universal state $p$ in the non-termination game, replace it with a gadget in the reduced \nrg~ to ensure that the universal player loses in the \nrg~ if he takes this edge with a configuation that does not end with a $\gamma$. 
The gadget consists of a fresh existential state $p'$ and the edges edge $p \xrightarrow{\varepsilon} p' \xrightarrow{\bar{\gamma}} q$. Additionally, there are edges $p' \xrightarrow{\bar{\gamma}_i} \Smiley$ for every other pushdown letter $\gamma_i \in  \Sigma^\texttt{P} \setminus \{\gamma\}$ and $\Smiley$ is a $\exists$-winning existential state with an $\varepsilon$ self-loop. Intuitively, in the reduced \nrg~, the universal player will take the edge $p \xrightarrow{\varepsilon} p'$ only if the top-of-the-stack is $\gamma$. Otherwise, the existential player pops the top of the stack symbol to reach $\Smiley$ and wins the game.

\section{Additional Material for Section~\ref{sec:FICEG}}\label{sec:app-FICEG}

In this section, we provide provide an $\EXPTIME$ decision algorithm for $\FICEG(\Gamma)$ where $\Gamma\in \Pushdown(\Group)$, proving~\cref{main-pushdown} for $\FICEG$. 
A summary of the procedure is given in~\cref{sec:FICEG}. Here we will provide it in a more detailed and self-contained fashion.

The following is the main theorem of this section.

\begin{theorem} Let $\Gamma \in \Pushdown(\Group) \times \Group  $. Then $\FICEG(\Gamma)$ can be decided in $\EXPTIME$.
\end{theorem}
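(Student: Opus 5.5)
We follow the saturation approach sketched in \cref{sec:FICEG}, now with all details. First we drop the $\Group$ direct factor: in $\Gamma=\Gamma'\times\Delta'$ with $\Delta'$ looped, every letter of $\Delta'$ commutes with everything and is invertible. Hence $[w]\in\RInv{\Mon_\Gamma}$ iff the projection of $w$ to $\Gamma'$ is right-invertible, and we may erase these letters from transitions. Next we apply \cref{single-unlooped-node} to reach $V=\{\gamma\}\cup W$ with $\gamma$ isolated and unlooped. We encode the initial credit into the arena by a prefix of transitions, and split transitions so that each carries one letter or $\varepsilon$. Let $n=|Q|$ and $\Delta=(W,E)$.

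We search for winning $\forall$-strategy trees. For $q\in Q$ and $P\subseteq Q$, a $(P,i)$-tree is a finite $\forall$-strategy tree rooted at $(q,[\varepsilon])$. Every leaf is either invalid, or has stack content $[\varepsilon]$ and state in $P$, and along every branch the $\gamma$-height stays $\le i$. Stack height is measured relative to the root, so a pop below the root is invalid. Let $R^i_q$ collect these $P$. These sets are monotone in $i$ and lie in a finite lattice of size $2^{n}\cdot 2^{2^n}$. So the chain $R^i$ stabilises after at most exponentially many steps at $R_q=\bigcup_i R^i_q$. We then show that $\forall$ wins from $(q_\init,[\varepsilon])$ iff $\emptyset\in R_{q_\init}$. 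The forward direction holds because a winning $\forall$-tree is finite (by König's lemma, since plays are finitely branching and each branch reaches an invalid configuration), so it has bounded height. The backward direction holds because a $P$-tree with $P=\emptyset$ is already winning. Borel determinacy then gives the answer for $\exists$.

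The core is computing $R^{i+1}_q$ from $(R^i_{q'})_{q'}$ via a top-down tree automaton $\mathcal{A}_{q,P}$. Its states are pairs $(p,c)$ with $c$ a vertex of the restricted Cayley graph $\mathcal{C}^{\le n}$ (including $c^\trap$), tracking the group element at stack height $0$. On group letters it moves along $\mathcal{C}^{\le n}$. A pop is accepted as a leaf iff $c\neq 1$, since then it is invalid; at height $0$ every pop is invalid anyway, so any pop is a leaf. Reaching a $P$-state with $c=1$ is an accepting leaf. A push of $\gamma$ from $p$ into $p'$ is replaced by a shortcut: guess $P'\in R^i_{p'}$ and branch to all states of $P'$ with the same $c$. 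This is sound because the inner tree is rooted at a fresh entry with element $1$, and after it returns the height-$0$ element is unchanged. $\forall$-owned states branch universally and $\exists$-owned states branch over all transitions. Reaching $c^\trap$ in state $p$ is accepted iff $p$ is \emph{$(i+1)$-eligible}. Here $p$ is $(i+1)$-eligible if from $p$, using only group moves and shortcuts, $\forall$ can force every branch to a pop. This is a finite alternating reachability question over $Q$ with shortcut transitions, solvable in time polynomial in $n$ and $\sum_{q'}|R^i_{q'}|$. The automaton has $n\cdot|\mathcal{C}^{\le n}|\le n\cdot(2|W|+1)^{n}$ states and exponentially many transitions. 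As noted in \cref{sec:FICEG}, $\mathcal{C}^{\le n}$ is computable in $\EXPTIME$ because the word problem is in polynomial time. Emptiness of the automaton is decidable in polynomial time in its size, and there are $n\cdot 2^n$ pairs $(q,P)$. So each saturation round is exponential, and so is the total.

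The main obstacle is correctness of the trap rule, i.e.\ the generalisation of \cref{big-elements-zero} to level $i+1$. Suppose $p$ is $(i+1)$-eligible and $\|g\|>n$. We must show that from $(p,g)$ there is an $(\emptyset,i+1)$-tree. The plan is to take an eligibility witness over the finite ``shortcut arena'' on $Q$. A positional (attractor) strategy yields a witness of depth $\le n$ in which each branch multiplies at most $n$ letters of $W\cup\bar W$ at height $0$. Shortcut steps contribute no group letters at height $0$ and are expanded back into genuine $R^i$-trees. Each branch therefore ends in a pop applied to $g[v]$ with $|v|\le n$, hence $g[v]\ne1$, so the pop is invalid. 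Conversely, for completeness of the trap rule we must show that any $(P,i+1)$-tree passing through a big element at height $0$ forces eligibility at that node. A pop in state $p$ with element $1$ is only reachable from a big element through subsequent group moves. Here we would argue that the subtree of a genuine $(P,i+1)$-tree below such a node cannot reach a $P$-leaf with element $1$ within at most $n$ letters. We would then prune repeated states to show that the remaining leaves are invalid pops. This yields a witness of $(i+1)$-eligibility.
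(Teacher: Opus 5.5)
Your overall architecture is the paper's: you saturate the sets $R^i_q$, replace each push $p\xrightarrow{\gamma}p'$ by a shortcut to some $P'\in R^i_{p'}$, track the height-$0$ group element in $\mathcal{C}^{\le|Q|}$, accept the trap only at eligible states, and prove the trap sound via a witness of depth at most $|Q|$ in the shortcut arena. That last step is precisely the paper's lemma on $i$-eligibility.

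The genuine problem is your definition of $(P,i)$-trees, which has to play two incompatible roles. As written, the root is $(q,[\varepsilon])$, every pop below the root is invalid, and $P$-leaves are nodes of relative content $[\varepsilon]$ in a $P$-state. Under this definition the pop that closes a pushed entry is declared invalid, and the root is itself a $P$-leaf whenever $q\in P$. So the shortcut is unsound. For instance, let the only transitions be $q_0\xrightarrow{\gamma}q_1$ and $q_1\xrightarrow{\bar\gamma}q_0$. Then $\exists$ wins from $(q_0,[\varepsilon])$. Yet your definition yields $\emptyset\in R^0_{q_1}$, and hence, through the shortcut, $\emptyset\in R^1_{q_0}$.

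What the shortcut needs is the semantics of the paper's appendix. The tree is rooted at a fresh entry $(q,[\gamma])$. Its $P$-leaves are exactly the nodes $(q',[\varepsilon])$ with $q'\in P$ reached by a \emph{valid pop} of that entry. In your automaton this means: a pop with $c\ne 1$ is an invalid leaf, a pop with $c=1$ is an accepting leaf iff its target is in $P$, and merely reaching a $P$-state with $c=1$ must not close a branch.

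With this semantics, however, your criterion $\emptyset\in R_{q_\init}$ becomes wrong. A height-$0$ pop with element $1$ is now a valid return, whereas from the genuinely empty stack it is invalid. For example, if the only move from $q_\init$ is a pop into a sink with an $\varepsilon$-loop, then $\forall$ wins, yet $\emptyset\notin R_{q_\init}$. You must therefore read off the answer separately. One option is a final automaton in which every height-$0$ pop is a leaf; equivalently, test $Q\in R_{q_\init}$, i.e.\ eligibility of $q_\init$. Your remark that ``at height $0$ every pop is invalid'' belongs only to this last check.

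With that repair, the rest goes through. Two smaller points:
\begin{itemize}
\item Completeness of the trap rule is immediate, with no pruning or length argument. The subtree of a $(P,i+1)$-tree below any height-$0$ node already witnesses eligibility, because each of its branches ends in a height-$0$ pop or inside an expanded push. For this, branches closed by an empty shortcut $P'=\emptyset$ must count as reaching the target.
\item In the automaton, it is the $\forall$-states that choose a single successor, while the $\exists$-states branch over all transitions.
\end{itemize}
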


We remark that we will be using the notations $\FICEG(\gr)$ and $\UICEG(\gr)$ throughout the appendix.

 \begin{remark}
    By slight abuse of notation, we denote the instance $\gr$ of the decision problems $\FICEG$ and $\UICEG$ as $\FICEG(\gr)$ and $\UICEG(\gr)$, respectively. That is, $\exists$-player wins $\FICEG(\gr)$ for an initial credit $\lambda$ if and only if she wins $(q_{init}, [\lambda])$ in the game $(\gr, \RIO)$ where $q_{init}$ is the initial state of $\gr$. Similarly, $\exists$-player wins $\UICEG(\gr)$ if and only if there exists an initial credit $\lambda$ for which she wins $\FICEG(\gr)$.
 \end{remark}

  Before we start presenting the $\EXPTIME$ decision procedure, we will need some preliminaries. 

\subsection{Preliminiaries}
We start by showing that we can without loss of generality assume that the pushdown alphabet of $\Gamma = (V, E)$ has a single letter.

\lemSingleUnloopedNode*

\begin{proof}[\cref{single-unlooped-node}] Let
    $V = \{\gamma_1, \ldots, \gamma_m\} \cup W$ where $\gamma_i$ are independent vertices (i.e. the pushdown alphabet) and $(W, E)$ is a non-empty group graph. Let $g \in W$. We show that for any valence game arena $\gr$ over $\Gamma$, there is an equivalent valence game arena $\gr'$ over $\Gamma' = (\{\gamma\} \cup W, E)$ that is obtained by the following translation.  

    Every push transition $p \xrightarrow{\gamma_i} q$ in $\gr$ is turned into  $p \xrightarrow{\gamma^i g \gamma^i} q$ in $\gr'$, and similarly, every pop transition $p \xrightarrow{\bar{\gamma_i}} q$ is turned into $p \xrightarrow{{\bar{\gamma}^i}{\bar{g}}{\bar{\gamma}}^i} q$. The translation is clearly polynomial and the equivalence is easy to verify.
\end{proof}

From now on we take $V = \{\gamma\} \cup W$ where $\gamma$ is an indepent vertex, and $\Delta = (W, E)$ is a group graph. 

We have seen in~\cref{sec:FICEG} that the configurations of $\Pushdown(\Group)$ can bew written as $[w_{r+1} \gamma w_r \cdots \gamma w_1]$ and viewed as a stack, where $w_i \in \Mon_{\Delta}$.
Referring to this perspective, we will sometimes call the configurations, \emph{stack content} and $\gamma$, the \emph{pushdown letter}. Whenever it creates no confusion, we will further call the pushes ($\gamma$) and pops ($\bar{\gamma}$) of $\gamma$, simply \emph{pushes} and \emph{pops}. We will also refer to the removal of a suffix $\texttt{suf} = \gamma w_i \cdots \gamma w_1$ of the stack content as \emph{popping} $\texttt{suf}$, and denote it with $\bar{\texttt{suf}}$.

\begin{observation} \label{obs: push-on-universal} For every game arena $\gr$, there is an equivalent polynomially larger game arena $\gr'$ where the push ($\gamma$) transitions originate solely from universal player states. 
\end{observation}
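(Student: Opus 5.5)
The plan is a local gadget transformation. Each push transition that leaves an existential state is rerouted through a fresh universal state, and I will check that the winner from every configuration is unchanged. Concretely, for every transition $p\xrightarrow{\gamma}q$ with $p\in Q_\exists$, I add a fresh state $p_{\gamma,q}\in Q_\forall$. I then replace the transition by $p\xrightarrow{\varepsilon}p_{\gamma,q}$ and $p_{\gamma,q}\xrightarrow{\gamma}q$. Since the arena has at most one transition between any two states, there is at most one fresh state per original transition, so $\gr'$ has $O(|Q|^2)$ states. The ``at most one transition between two states'' condition is preserved, because every fresh state has exactly one incoming and one outgoing transition. Every state still has an outgoing transition.

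If transitions carry longer words $w$ containing $\gamma$, I would first split each transition into single-letter steps through fresh intermediate states. This is the normalization ``all transitions carry a letter in $V\cup\bar V$ or $\varepsilon$'' already assumed after \cref{single-unlooped-node}. Each intermediate state is given to the owner of the source state, so that it has a single successor. This keeps both the game and right-invertibility of every intermediate prefix faithful. Strictly, splitting adds the intermediate configurations as new checkpoints. Any invalid prefix can never be repaired later, because $\RInv{\Mon_\Gamma}$ is prefix-closed: if $xy$ is right-invertible, then so is $x$. A play is winning for $\exists$ in $\gr$ iff all prefixes are valid, so the new checkpoints change nothing.

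For the correctness proof, I would define a map from plays of $\gr'$ to plays of $\gr$ by deleting the visits to the fresh states. The fresh universal state has a unique successor, so $\forall$ has no real choice there. Strategies therefore translate in both directions: an $\exists$-strategy in $\gr$ choosing $p\to q$ via $\gamma$ corresponds to choosing $p\to p_{\gamma,q}$ in $\gr'$, and conversely. The multiplied words along corresponding plays agree, up to inserting $\varepsilon$-steps. Hence the sequence of storage elements visited in $\gr'$ is that of $\gr$, with some elements repeated, and $\RIO$ holds for one play iff it holds for the other. Infinitude is preserved as well, because between two consecutive fresh states there is always an original transition. It follows that $\exists$ wins $(q,x)$ in $\gr$ iff she wins $(q,x)$ in $\gr'$, for every $q\in Q$ and every $x$. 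This gives equivalence for both the fixed and the unknown initial credit settings.

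The only delicate point, and the one I expect to need care, is the bookkeeping of the strategy translation. Strategies are maps $Q^*Q_\iota\to Q$ on state histories, so the history projection that deletes fresh states must be well defined. It is, because each fresh state is determined by its predecessor and successor. Beyond that, everything is routine.
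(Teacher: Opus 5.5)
Your construction is exactly the paper's: each push $p\xrightarrow{\gamma}q$ with $p\in Q_\exists$ is routed through a fresh universal state as $p\xrightarrow{\varepsilon}r\xrightarrow{\gamma}q$. Your additional checks are correct and fill in details the paper leaves implicit. These are the preservation of the one-transition-per-pair condition, the strategy translation via the unique lift of histories, and the use of prefix-closure of $\RInv{\Mon_\Gamma}$ when splitting multi-letter transitions.
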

To obtain $\gr'$ it is sufficient to add an additional universal state $r$ for each push transition $(p, \gamma, q)$ with $p \in Q_\exists$ in $\gr$, and replace $(p, \gamma, q)$ with the transitions $(p, \varepsilon, r)$ and $(r, \gamma, q)$ in $\gr'$.

From now on, we will assume the push transitions originate from universal player nodes.

\subparagraph*{Notation on trees} We call a finite tree $t$ a \emph{finite prefix subtree} of a tree $\tau$ if $t$ can be obtained from $\tau$ by removing a suffix of every branch. So every branch of $t$ is a prefix of a branch in $\tau$.

By the \emph{height of a finite tree}, we refer to the number of nodes on its longest branch. 

\vspace{2mm}
Next we introduce strategy trees.
An $\iota$-strategy tree is a tree-shaped representation of a $\iota$-strategy that resolves a $\iota$-player's choices locally at every history.

\subparagraph*{Strategy trees}
Let a game arena $\gr = (\Gamma, Q = Q_\exists \sqcup Q_\forall, \delta, q_{init})$. For $\iota\in \{\exists, \forall\}$, an $\iota$-strategy tree over $\gr$ is a labelled tree $\tau$ rooted at a node $(\texttt{rt}, [u]) \in Q \times \Mon_\Gamma$ such that:
    \begin{itemize} 
        \item The nodes of $\tau$ are configurations in $Q \times \Mon_\Gamma$, \label{it:strtree-one}
        \item The edges are labelled with $V \cup \bar{V} \cup \{\varepsilon\}$, \label{it:strtree-two}
        \item If a node $(q, [u])$ has $q \in Q_{1-\iota}$, then for every transition $q\xrightarrow{v} q'$ in $\delta$, the node $(q, [u])$ has a child $(q', [uv])$ and the edge is labelled with $v$,\label{it:strtree-three}
        \item If a node $(q, [u])$ has $q \in Q_{\iota}$, then the node has a single child $(q', [uv])$ where $q\xrightarrow{v} q'$ is in $\delta$ and the edge is labelled with $v$.\label{it:strtree-four}
    \end{itemize}

Intuitively, an $\iota$-strategy tree $\tau$ unfolds an $\iota$-strategy $\sigma_\iota$ into a tree that assigns a unique move to each $\iota$-node, while leaving the opponent's choices unconstrained. %
Clearly, every play consistent with $\sigma_\iota$ corresponds to a branch in the corresponding strategy tree. Thus, $\sigma_\iota$ is winning if and only if every branch of $\tau$ is an winning play for $\iota$ in the game. Such trees are called \emph{winning} $\iota$-strategy trees. 

We can represent winning $\forall$-strategy trees via finite trees. In a winning $\forall$-strategy, every branch visits a node $(q, [u])$ where $u \not \in \RInv{\Mon_\Gamma}$. We consider finite prefix subtrees of winning $\forall$-strategy where each branch has its first non-right-invertible configuration as a leaf. We call these trees (finite) winning $\forall$-strategy trees and constrain our attention these trees.
We call the label of the last edge of a finite branch, the \emph{leaf transition} of the branch.
Note that, all leaf transition in a (finite) winning $\forall$-strategy are pops ($\bar{\gamma}$).

Let $\tau$ be a $\forall$-strategy tree rooted at $(q, [u])$. We denote by $\tau|_{Q}$ the projection of $\tau$ to states $Q$. So $\tau|_{Q}$ is a tree rooted at $q$ that has states in $Q$ as nodes instead of configurations in $Q \times \Mon_\Gamma$.
Similarly, for a projection tree $\tau$ rooted at $q$ with nodes over $Q$, we denote by $\tau_{\gets (q, [u])}$ it's extension to configurations as nodes. So, for $\tau' = \tau|_{Q}$, the extension  $\tau'_{\gets (q, [u])}$ is exactly $\tau$.

 \begin{definition} The stack height of a (finite or infinite) branch of a tree is the maximum number of unmatched pushes\footnote{a push $\gamma$ not neutralized by a pop $\bar{\gamma}$.} a prefix of the branch contains. \end{definition}
 For example, the finite branch $$\texttt{rt} \xrightarrow{\gamma} p_1 \xrightarrow{g^-_1} p_2 \xrightarrow{\gamma} p_3 \xrightarrow{\bar{\gamma}} p_4 \xrightarrow{g_1} p_5 \xrightarrow{\bar{\gamma}} p_6 \xrightarrow{\gamma} p_7 \xrightarrow{\bar{\gamma}} p_8 \xrightarrow{g_2}$$ has stack height $2$, due to its prefix until $p_3$ having $2$ unmatched $\gamma$s. 
 
 We say a winning $\forall$-strategy (starting from some configuration $(q, [u])$) has stack height bounded by $i$ if in the (finite) winning $\forall$-strategy tree it corresponds to (rooted at $(q, [u])$), every branch has stack height $\leq i$.

\noindent \subparagraph{Valid and invalid pops} A pop ($\bar{\gamma}$) in a play (or, branch), is called \emph{invalid} if it is reached with a stack content $[\varepsilon]$ or $[w_{r+1} \gamma \cdots \gamma w_1]$ where $w_1 \ne 1$. All other pops are called \emph{valid}. Further, a valid pop is called $0$-valid, if it is reached with the stack content $[\gamma]$.

\noindent \subparagraph{Tree automata over a game arena}
We build a (top-down) tree automaton $\mathcal{A}_{\gr}= (Q \cup \{q_\emptyset\}, \texttt{rt}, \Lambda = \bigcup_{q \in Q} (\Lambda_q \cup \Lambda_q^+))$ which accepts all $\forall$-strategy trees (restricted to states) over the game arena $\gr$ rooted at $\texttt{rt} \in Q$ -- and, possibly more trees.
\begin{itemize}
\item We take $\texttt{rt}$ as the root state.
\item For each $q\in Q_\exists$, fix once and for all an enumeration $succ(q) = (p_1, \ldots, p_k)$ of its successors and an enumeration of transition labels $w_1, \ldots, w_k$ leading to the successors; i.e. $(q, w_i, p_i) \in \delta$ for all $i$. 
\item For $q \in Q_\exists$ with $succ(q) = (p_1, \ldots, p_k)$, we set a \emph{single} transition
\begin{equation}
    \Lambda_q(q) = ((w_1, p_1), \ldots, (w_k, p_k)), \quad \quad  \Lambda^+_q = \emptyset.
\end{equation}
    So, in state $q$, the automaton branches to \emph{all} successors of $q$. %
\item For $q \in Q_\forall$, we set $\Lambda_q(q) = \{(m_i, p_i) \mid (q, m_i, p_i) \in \delta \}$. So, in state $q$, the automaton can nondeterministically choose \emph{one} successor of $q$. This corresponds to the universal player's choice at a universal node, with a given history of the game. Furthermore, the automaton can nondeterministically choose the \emph{succesors} of $q$ from the set $\Lambda_q^+$ which sends $q$ to a subset of $Q$ with $\varepsilon$-transitions, i.e. $(\varepsilon, Q)^i$ where $i \leq |Q|$; or to $\bot$ with $\emptyset$-transition, i.e. $(\emptyset, q_\emptyset)$. This additional set $\Lambda_q^+$ intuitively allows the universal player to pick a subset of nodes of the game as the successor of $q$. Here, the distinguished state $q_\emptyset$ led with the $\emptyset$-transition represents the case where $q$ is mapped to the empty successor set; introducing a separate state and transition label for this case is a purely technical device that simplifies the subsequent construction. The state $q_\emptyset$ has no outgoing transitions. 
\item  We allow $\mathcal{A}_{\gr}$ to accept both finite and infinite trees, and assign all nodes in $Q \cup \{q_{\emptyset}\}$ as accepting leaf nodes for the finite case. So, $\mathcal{A}_{\gr}$ accepts all trees rooted at $\texttt{rt}$, finite or infinite, whose branching rules comply with $\Lambda$.
\end{itemize}

We can show a finite branch of a tree accepted by $\mathcal{A}_{\gr}$ as a sequence of transitions $\texttt{rt} \xrightarrow{w_1} q_1 \xrightarrow{w_2} \cdots \xrightarrow{w_l} q_l $. Here, we call $q_l$ the \emph{leaf node} and $w_l$ the \emph{leaf transition} of the branch. 

If $\Lambda_q^+ =\emptyset$, then $\mathcal{A}_{\gr}$ accepts exactly the $\forall$-strategy trees rooted at $\texttt{rt}$ and their subtrees. By adding more transitions to $\Lambda_q^+$, one allows \emph{short-cuts} from universal player nodes to certain subset of nodes, without any change in the stack content. 

Whenever convenient, we give the choice of the root node as an input to $\mathcal{A}_\gr$, i.e. $\mathcal{A}_\gr(q)$ is $\mathcal{A}_\gr$ with $\texttt{rt} = q$.

\subsection{The idea behind the saturation}

Our aim in a this decision algorithm, is to figure out for each $q \in Q$, whether $\forall$-player has a winning strategy from $(q, \varepsilon)$. For this, we want to understand the set of $\forall$-strategy trees rooted at $(q, \varepsilon)$, and classify them as winning or losing. If there exists a winning $\forall$-strategy tree rooted at $(q, \varepsilon)$, we declare $(q, \varepsilon)$ to be winning for $\forall$, and if $(q_{init}, \varepsilon)$ is won by $\forall$, we declare the game $\FICEG(\gr)$ to be won by him. Otherwise, the game is won by $\exists$ due to determinacy.

   One of the main challanges in deciding whether a $\forall$-strategy tree is winning is that, in order for a pop in a branch to qualify $\forall$-player to win (an invalid pop) in principle, one needs to keep track of arbitrarily high stack content. This would require infinite computational power, and would not provide a decidable algorithm. To bypass this need, we will abstract the winning $\forall$-strategy trees by a fixpoint computation on a finite set. Each iteration of this fixpoint computation will take exponential time, and the fixpoint will be reached within exponentially many steps; providing an $\EXPTIME$ algorithm. 

   Here, we present these finite sets.

\subparagraph*{$P$- and $(P, i)$-trees} A $P$-tree is a finite prefix subtree of a $\forall$-strategy tree rooted at node $(q, [w])$ such that the leaf node in each branch is either (i) the first $(q, [u])$ for $u \in \RInv{\Mon_\Gamma}$ (as in the definition of a winning tree), or (ii) %
 $(q', [\varepsilon])$ for $q' \in P$ and the leaf transition is $\bar{\gamma}$. 

A $P$-tree is intuitively a ``relaxed version'' of a $\forall$-winning strategy tree as every branch is either won by $\forall$, or reaches at a node in $P$ with empty stack -- intuitively, can be won from $P$ in the future. Clearly, $\forall$ wins the configuration $(q, \gamma)$ in $\gr$, iff there exists a $\emptyset$-tree rooted at $(q, \gamma)$.

A $(P, i)$-tree is a $P$-tree in which the stack height of every branch is bounded by $i$.

Then, $\forall$ wins the configuration $(q, \gamma)$ in $\gr$ with a strategy of stack height bounded by $i$, iff there exists a $(\emptyset, i)$-tree rooted at $(q, \gamma)$.

\subparagraph*{Saturating along stack height} We will have a saturation procedure over the finite abstraction sets $R_q^i$: the set of all $P \subseteq Q$ such that there is a ``relaxed $\forall$-strategy tree of bounded stack height $i$'' that reaches a subset of $P$ with empty-stack. Formally,
 \begin{align*}R^i_q := &\{ P \subseteq Q \mid \text{there exists a $P, i$-tree } \tau \text{ rooted at }(q, \gamma).\}
\end{align*} 

Clearly, $\emptyset \in R_q^i$ if and only if the universal player has a winning strategy from $(q, \gamma)$ that has stack height bounded by $i$. 
We claim that if the universal player wins $(q, \gamma)$, then there is a large enough $i$. 

\begin{claim}\label{claim:exists-Ri} Universal player wins $(q, \gamma)$ if and only if there exists an $i$ s.t. $\emptyset \in R^i_q$.
\end{claim}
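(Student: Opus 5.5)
The plan is to prove the two directions separately. The substance lies in the forward direction: from a winning strategy for $\forall$ we must extract a bound on stack height.

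\emph{($\Leftarrow$).} Suppose $\emptyset\in R^i_q$ for some $i$. Then there is an $(\emptyset,i)$-tree rooted at $(q,\gamma)$. By definition, every branch of this tree ends in a leaf that is the first non-right-invertible configuration on that branch; option (ii) of the definition cannot occur, since $P=\emptyset$. This tree is therefore a finite prefix of a $\forall$-strategy tree in which every branch visits a configuration outside $\RInv{\Mon_\Gamma}$. We extend the tree arbitrarily below the leaves, which is possible because every state has an outgoing transition. The result is a winning $\forall$-strategy from $(q,\gamma)$, since $\RIO$ is violated on every play consistent with it.

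\emph{($\Rightarrow$).} Suppose $\forall$ wins $(q,\gamma)$. We fix a winning $\forall$-strategy and its strategy tree $\tau$, and form the finite winning $\forall$-strategy tree by cutting every branch at its first invalid configuration. The key step is to show that this cut tree $t$ is finite. We use König's lemma for this. The tree is finitely branching: at $\forall$-nodes it has one child, and at $\exists$-nodes it has at most $|\delta|$ children. If $t$ were infinite, it would have an infinite branch. Every configuration on that branch would be right-invertible, because the branch is never cut. That branch would then be a play consistent with $\forall$'s strategy that lies in $\RIO$, contradicting that the strategy is winning. Hence $t$ is finite.

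A finite tree has finitely many branches, each of finite length. Let $i$ be the maximum stack height over all branches of $t$; this $i$ is finite. Then $t$ is an $(\emptyset,i)$-tree rooted at $(q,\gamma)$, so $\emptyset\in R^i_q$.

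The only delicate point is matching the formal definitions, since the König argument itself is routine. First, we must check that $t$ satisfies the $P$-tree requirement for $P=\emptyset$: every leaf must be the \emph{first} invalid configuration. (The definition as typed says "$u\in\RInv{\Mon_\Gamma}$", but it clearly intends the first configuration \emph{not} in $\RInv{\Mon_\Gamma}$, as in the definition of winning trees; I would read it that way.) Second, we must confirm that leaves never arise via clause (ii). This holds automatically, because $P$ is empty.
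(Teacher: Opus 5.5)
Your proof is correct and takes the same route as the paper: for one direction, take the finite (cut) winning $\forall$-strategy tree and let $i$ be its maximal stack height; for the other, read an $(\emptyset,i)$-tree as a winning $\forall$-strategy tree. The only difference is that you use K\"onig's lemma to prove the cut tree is finite, which the paper simply takes for granted. You also correctly read the leaf condition in the $P$-tree definition as ``first configuration \emph{not} right-invertible''.
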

\begin{claimproof} Since the universal player wins $(q, \gamma)$, there exists a (finite) winning $\forall$-strategy tree $\tau$. 
Let $i$ be the maximum stack height achieved amongst the branches of $\tau^q$. Then $\emptyset \in R^i_q$. On the other hand, if $\emptyset \in R^i_q$ then the finite tree $\tau$ that witnesses $\emptyset$ is a winning $\forall$-strategy tree rooted at $(q, \gamma)$. \end{claimproof}

Clearly, $\{R^i_q\}_{i \in \mathbb{N}}$ is a non-decreasing sequence as $i$ tends to $\infty$. As the set is bounded, this sequence reaches a fixpoint. 
We denote this fixpoint by $R_q = \bigcup_{ i \in \mathbb{N}} R^i_q$. By~\cref{claim:exists-Ri}, we know that this fixpoint contains sufficient information to identify the nodes in $Q$ such that $(q, \gamma)$ is $\forall$-winning. That is, %
\begin{lemma}\label{lemma:Rq}
$\emptyset \in R_q \quad \text{if and only if} \quad \text{there exists a winning } \forall\text{-strategy from } (q,\gamma)$.
\end{lemma}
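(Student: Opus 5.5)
The statement follows by combining \cref{claim:exists-Ri} with the definition of $R_q$ as the union $\bigcup_{i\in\N}R^i_q$. The plan is to prove the two directions separately, each time passing through a single level $i$ of the saturation.

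For the direction from left to right, suppose $\emptyset\in R_q$. By definition of the union, there is some $i$ with $\emptyset\in R^i_q$. Then \cref{claim:exists-Ri} yields that $\forall$ wins $(q,\gamma)$. To be explicit: a witnessing $(\emptyset,i)$-tree rooted at $(q,\gamma)$ is a finite prefix subtree of a $\forall$-strategy tree. Since $P=\emptyset$, no branch may end in option~(ii), so every branch ends in its first non-right-invertible configuration. Each remaining $\forall$-node on a branch can then be completed by arbitrary choices, because the outcome after the leaf is irrelevant. This gives a winning $\forall$-strategy.

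For the converse, take a winning $\forall$-strategy from $(q,\gamma)$ and its finite winning $\forall$-strategy tree $\tau$, in which each branch is cut at its first invalid configuration. The tree $\tau$ is finitely branching, since $\delta$ is finite, and every branch is finite. By K\"onig's lemma, $\tau$ is therefore finite. Its maximal stack height $i$ is thus a finite number, and $\tau$ is an $(\emptyset,i)$-tree. It follows that $\emptyset\in R^i_q\subseteq R_q$.

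The only point requiring care is this finiteness step in the converse, namely that the winning tree has bounded stack height. K\"onig's lemma settles it, given that every branch of a $\forall$-winning tree reaches an invalid configuration after finitely many steps and the arena has finite out-degree.

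I would also check the convention that $P$-tree leaves of type~(i) are non-right-invertible configurations; the definition as printed says ``$u\in\RInv{\Mon_\Gamma}$'', which I read as a typo for $u\notin\RInv{\Mon_\Gamma}$. Finally, monotonicity $R^i_q\subseteq R^{i+1}_q$ is immediate, since every $(P,i)$-tree is also a $(P,i+1)$-tree. This guarantees that the fixpoint exists and equals the union, but it is not otherwise needed for the equivalence.
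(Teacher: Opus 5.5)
Your proof is correct and follows the paper's route: the paper obtains the lemma directly by combining \cref{claim:exists-Ri} with $R_q=\bigcup_{i}R^i_q$, and that claim is proved by taking the maximal stack height of a finite winning $\forall$-strategy tree, resp.\ reading an $(\emptyset,i)$-tree as a winning strategy tree. Your K\"onig's-lemma argument spells out the finiteness of winning $\forall$-strategy trees, which the paper takes for granted. You are also right that ``$u\in\RInv{\Mon_\Gamma}$'' in the definition of $P$-trees is a typo for $u\notin\RInv{\Mon_\Gamma}$.
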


Thus, we aim to construct an algorithm to saturate $R^i_q$ to reach $R_q$.

\subsection{Calculating the saturation sets via tree automata (Part 1)}
Our idea is the obtain the sets $P \in R_q^i$ from the leaves of trees accepted by a tree automaton. 
In particular, for each $i$, and some special element $\top$, we want to have a tree automata $\mathcal{T}^i_q$ such that the following set calculates $R_q^i$.
$$\{P \subseteq Q \mid \text{ there exists a finite tree } \tau \text{ accepted by }\mathcal{T}^i_q \text{ s.t. if $p \times \top $ is a leaf of }\tau, \text{ then }p \in P \}$$

It is clear that, we want $\mathcal{T}^i$ to accept trees that are somehow equivalent to $(P, i)$-trees rooted at $(q, \gamma)$. 

 Rougly, the algorithm will proceed as follows: Recall that $\mathcal{A}_\gr$ with $\Lambda^+ = \emptyset$ accepts all $\forall$-strategy trees (and all finite prefices). We will let $\mathcal{A}^0_\gr = \mathcal{A}_\gr$ and use it to accept $(P, 0)$-trees rooted at $(q, \gamma)$. Then for each push transition $p \xrightarrow{\gamma} q$ in $\delta$ and every $P \in R_p^0$, we will add to $\mathcal{A}^0_\gr$ an $\varepsilon$ transition from $p$ to $P$ using $\Lambda^+$. We will therefore obtain $\mathcal{A}^1_\gr$ which will be used to $(P, 1)$-trees rooted at $(q, \gamma)$, and so on.

 As $(P, 0)$-trees have no pushes, we need to get rid of the trees accepted by $\mathcal{A}^0_\gr $ that contain pushes. Furthermore, 
 as $\mathcal{A}^0_\gr$ accepts trees with game states as nodes instead of configurations, we also need to make sure that the 
 leaves of the accepted trees indeed correspond to (i) the first non-right-invertible configuration of the branch or (ii) $(q', [\varepsilon])$ with the leaf transition $\bar{\gamma}$. 
 This is ensured by requiring that the leaf transitions of every branch is the first pop of the branch -- and checking if the pop is valid. 

 We use a small automaton called `push-pop' automaton, to ensure that the accepted trees do not contain pushes, and the leaf transitions are the first pops of the branch:
\subparagraph*{`Push-pop' automaton} Consider a simple automaton $\pp$ that accepts %
    sequences of group elements that end with a pop or an $\emptyset$-transition, i.e. $\{ w \bar{\gamma},  w \emptyset \mid w \in (\Mon_{\Delta})^* \} $. When taken product with $\mathcal{A}^0_{\gr}$ this automaton eliminates infinite trees, trees that contain $\gamma$, that do not contain a $\bar{\gamma}$ as a leaf transition, and trees that contain a $\bar{\gamma}$ anywhere else. With slight abuse of notation, we ignore the $\pp$ component while talking about $\mathcal{A}^0_{\gr} \times \pp$, treating $\pp$ only as an acceptor. That is, the language of $\mathcal{A}^0_{\gr} \times \pp$ is a subset of the language of $\mathcal{A}^0_{\gr}$.

The language of $\mathcal{A}^0_{\gr} \times \pp$ is the set of all finite prefix subtrees of $\forall$-strategy trees that contain \emph{no pushes}, and each branch has exactly one pop -- that is as the leaf transition. Still, we have no idea whether these pops at the leaf transitions are valid or invalid. To understand this, we need to keep track of group elements accumulated during the branches. Once more, doing this naively would require infinite computational power, and not be decidable. 

Next, we explain how we tackle this problem. 

\subsection{Main challange: Group elements}
Another main challenge in our algorithm is to perform the saturation in the presence of group elements.
Our crucial observation is that, we can classify the group elements as ``big” or ``small”. While we need to store the small elements, whenever we come across a big element, we can simply declare victory and forget about the group element.

For a group element $g \in \Mon_\Delta$, we denote by $\|g\|$ the \emph{word length} of $g$, i.e. the length $|w|$ of the shortest word $w \in (W \cup\bar{W})^*$ that satisfies $[g] = [w]$. Concretely, we call $g$ ``big'' if $\|g\| > |Q|$ and ``small'' otherwise.

Our crucial observation that we can ``declare victory'' whenever we come accross a ``big'' group element relies on~\cref{lemma:i-eligible}, that we will prove later.

\subparagraph*{Eligibility} A state $q \in Q$ is called $i$\emph{-eligible} if there exists a $(P,i)$-tree rooted at $(q, \gamma)$. 
Observe that if a state $q$ is not $i$-eligible, every $\forall$-strategy tree $\tau$ rooted at state $q$  has a branch that is either (i) bounded by stack height $i$ and winning for $\exists$-player (so, has no invalid pops), or (ii) exceeds stack height $i$. %

\begin{lemma} \label{lemma:i-eligible} Suppose $p \in Q$ is $i$-eligible and $g \in \Mon_\Delta$ has $\|g\|>|Q|$. Then $(p, w g)$ is the root of some $(\emptyset, i)$-tree for every $w \in \{[\varepsilon]\} \cup \{[u \gamma] \mid u \in (V \cup \bar{V})^*\}$. 
\end{lemma}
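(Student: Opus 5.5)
We generalize the argument sketched for \cref{big-elements-zero}: take a witness for $i$-eligibility, shrink it to a tree of bounded \emph{top-level} group accumulation, and then transplant it onto the root $(p,wg)$. The hypothesis gives a $(P,i)$-tree $\tau$ rooted at $(p,\gamma)$ for some $P\subseteq Q$. Its projection $\tau|_Q$ is what we transplant, i.e.\ we consider $(\tau|_Q)_{\gets(p,wg)}$. Since stack heights depend only on the push/pop labels along branches, which are unchanged, the stack height of every branch stays $\le i$. What we must show is that every branch now ends in an invalid configuration before (or at) its first pop at the bottom level. So no branch can reach an empty-stack leaf in $P$, and we obtain an $(\emptyset,i)$-tree.

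\textbf{Step 1 (normalisation).} We first make $\tau$ ``short at the bottom level''. Consider the nodes at stack level $0$ relative to the root, i.e.\ nodes where every push so far on the branch has been matched. Each such node sits on a branch segment that alternates between group letters at level $0$ and excursions above it (a push, a subtree, and the matching pop).

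If some branch repeats a control state at two level-$0$ nodes $\nu\prec\nu'$, we replace the subtree at $\nu$ by the subtree at $\nu'$. This is the standard pumping step on strategy trees. It preserves the $(P,i)$-property, because excursions above level $0$ only depend on the top entry being fresh, i.e.\ on relative stack content. Moreover, the leaf condition at level $0$ (a $\bar\gamma$ reaching the empty stack, or an invalid pop) is judged relative to the stack, and cutting out the segment only changes the accumulated level-$0$ group element.

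We do \emph{not} need validity to be preserved here; we only need the tree shape and the invariant that every branch eventually performs a level-$0$ pop or an invalid pop at a higher level. Iterating, we obtain a tree in which each branch visits at most $|Q|$ distinct states at level $0$. Hence each branch multiplies at level $0$ a word $v$ over $W\cup\bar W$ with $|v|\le |Q|$ before its level-$0$ pop. Transitions carry single letters or $\varepsilon$, and in the $i$-level automaton setting shortcut transitions carry no group letters.

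\textbf{Step 2 (transplanting).} Put the normalised tree under the root $(p,wg)$.

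Consider first a branch that performs an invalid pop above level $0$. That pop remains invalid, since above level $0$ the stack suffix created by the branch is identical.

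Now consider a branch that reaches its level-$0$ pop. There the top entry is $g[v]$ with $|v|\le|Q|$. If $g[v]=1$, then $g=[\bar v]$ and $\|g\|\le|Q|$, a contradiction; so $g[v]\neq 1$ and the pop is invalid. This covers both choices of $w$: for $w=[\varepsilon]$ the pop is invalid anyway, and for $w=[u\gamma]$ it is invalid because the top group element is nontrivial. Such an invalid leaf, or possibly an earlier first invalid configuration, where we truncate, is a legitimate leaf of an $(\emptyset,i)$-tree.

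\textbf{Main obstacle.} The delicate part is Step 1. We must justify that cutting at repeated level-$0$ states yields again a legal $\forall$-strategy tree whose higher-level behaviour is unchanged. This needs the subtrees at $\nu$ and $\nu'$ to be interchangeable as functions of the relative stack, which holds because both nodes sit at level $0$ with the same state. We also need that pruning preserves that every branch ends in a pop; otherwise, after transplanting, an infinite branch might survive. Here we use that the original $\tau$ is finite, so every branch ends in one of the two leaf types, and pumping out a segment of a finite branch keeps it finite.
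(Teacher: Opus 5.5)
Your argument is correct. Its core is the same as the paper's: pump out repeated control states at the bottom level so that each branch multiplies a bottom-level group word of length at most $|Q|$ before its bottom pop, and then use $\|g\|>|Q|$ to make every such pop invalid.

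The execution differs from the paper's, though.
\begin{itemize}
\item The paper first collapses each excursion above the bottom level into a shortcut transition of the saturated automaton $\mathcal{A}^{i+1}_\gr$. This is an $\varepsilon$-fan-out to the return set $P'\in R^i_{r'}$, or an $\emptyset$-transition if no branch returns. The step needs the inductive hypothesis that the saturation has computed the true sets $R^i$. The paper then pumps the resulting push-free tree, accepted by $\mathcal{A}^{i+1}_\gr\times\pp$, down to height $\le|Q|+1$, and re-expands the shortcuts under the new root $(p,wg)$.
\item You instead pump in place between level-$0$ nodes that carry the same state. You observe that cuts at level-$0$ nodes never split an excursion. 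Hence validity or invalidity of every pop above level $0$ is copied verbatim, and nothing has to be re-expanded.
\end{itemize}
Your route is self-contained and uniform in $i$: it needs no induction and does not depend on the saturation algorithm being correct. The paper's route stays inside the automaton framework, which it reuses later, e.g.\ in the proof of \cref{thm:initial-credit-structure}.

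One small inaccuracy: pumping does not in general preserve the $(P,i)$-property. The level-$0$ group element changes, so a previously invalid bottom pop can become valid and reach a state outside $P$. As you note yourself, only the shape invariant is used, so this does not affect the argument.
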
 

The lemma implies that every configuration $(q, [u])$ where $q$ is $i$-eligible for some $i$, and the top-of-the-stack group element of $u$ is ``big'', is won by $\forall$. So, whenever we come accross such a configuration, we can declare victory!

As explained in~\cref{sec:FICEG}, we use \emph{Cayley automaton}, an automaton version of \emph{Cayley graph}, to track the ``small'' group elements, and  declare victory on ``big'' ones. Here, we introduce this automaton formally. 

\subparagraph*{Cayley automaton} \emph{Cayley automaton} $\mathcal{C}^{\leq |Q|} = (C, \Sigma^{\mathcal{C}}, \delta^{\mathcal{C}}, [\varepsilon])$ is defined over the state set $C$ and the alphabet $\Sigma^\mathcal{C} = W \cup \bar{W}$. The state set $C$ consists of the set of congruence classes $[g]$ s.t. $g \in \Mon_{\Delta}$ and $\|g\| \leq |Q|$ together with a trap state $c^{trap}$. The initial state 
    of the automaton is $[\varepsilon]$, the congruence class of $\varepsilon$.
    
    For every $u, u' \in (W \cup \bar{W})^*$ with $\|u\| \leq |Q|$ and every $w \in (W \cup \bar{W} \cup {\varepsilon})$ s.t. $[uw] = [u']$, if $\|u'\| \leq |Q|$, then there is a transition $[u] \xrightarrow{w} [u']$ in $\delta^{\mathcal{C}}$; otherwise, there is a transition $[u] \xrightarrow{w} c^{trap}$.

    In other words, the Cayley automata starts from the empty word, and keeps track of group elements of word length at most $ |Q|$.
     If the word length exceeds $ |Q|$, it enters a trap state. All states of $\mathcal{C}$ are accepting. For convenience, we make the Cayley automaton ignore the $\bar{\gamma}$ and $\emptyset$ transitions, i.e. $c \xrightarrow{\bar{\gamma}} c$ and $c \xrightarrow{\emptyset} c$ for every $c \in C$.
    
\subsection{Calculating the saturation sets via tree automata (Part 2)}
With the introduction of Cayley automaton, we are ready to calculate $\mathcal{T}^i_q$, which is the product automaton $\mathcal{A}^i_\gr(q) \times \pp \times \mathcal{C}^{\leq |Q|}$, and the special element $\top$ is $[\varepsilon] \in C$. 

So we claim~\cref{lemma:R-sets-equivalence} holds for the set $R(\mathcal{T}^i_q) $ defined below.
\begin{align*}R(\mathcal{T}^i_q) = \{P \subseteq Q \mid &\text{ there exists a finite tree } \tau \text{ accepted by }\,\mathcal{T}^i_q \,\,\text{ such that if } p \times [\varepsilon] \\ &\text{ is a leaf of }\tau, \text{ then }p \in P \}
\end{align*}

\begin{lemma} \label{lemma:R-sets-equivalence} $R(\mathcal{T}^i) = R_q^i$ for every $q \in Q$. 
\end{lemma}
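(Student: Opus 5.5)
We prove $R(\mathcal{T}^i_q)=R^i_q$ by induction on $i$. Each inclusion is handled by a translation between $(P,i)$-trees rooted at $(q,\gamma)$ and finite trees accepted by $\mathcal{T}^i_q=\mathcal{A}^i_\gr(q)\times\pp\times\mathcal{C}^{\le|Q|}$. Throughout, the invariant is that along any branch of an accepted tree, the Cayley component equals the top-of-stack group element whenever that element is small, and equals $c^{trap}$ once it has become big. Leaves whose Cayley component is $[\varepsilon]$ and whose leaf transition is $\bar\gamma$ are exactly the valid $0$-level pops; they land in $(p,[\varepsilon])$ and must have $p\in P$. Every other leaf is an invalid pop, or a trap leaf that we interpret as a "declared victory".

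\textbf{Base case $i=0$.} Here $\Lambda^+=\emptyset$.

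For ($\supseteq$), take a $(P,0)$-tree $\tau$ rooted at $(q,\gamma)$. Since the tree has no pushes, every branch ends at its first pop, so the projection $\tau|_Q$ is accepted by $\mathcal{A}^0_\gr(q)\times\pp$. Running $\mathcal{C}^{\le|Q|}$ along each branch tracks the top element exactly until it becomes big. A leaf whose Cayley state is $[\varepsilon]$ is a valid pop, so its state lies in $P$. Hence $P\in R(\mathcal{T}^0_q)$.

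For ($\subseteq$), take an accepted tree $t$ witnessing $P$ and lift it via $t_{\gets(q,\gamma)}$. Branches whose leaf has a small nonneutral Cayley state end with an invalid pop, and leaves with Cayley state $[\varepsilon]$ end in $P$. The remaining branches are those that passed through $c^{trap}$. On such a branch, cut at the first node $(p,[u\gamma]g)$ with $\|g\|>|Q|$. Every state occurring in an accepted tree is $0$-eligible: the remainder of $t$ below that node, reread with Cayley tracking started afresh, witnesses a $(P',0)$-tree. So \cref{lemma:i-eligible} applies and yields a $(\emptyset,0)$-tree rooted at $(p,[u\gamma]g)$, which we graft in at the cut. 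The result is a $(P,0)$-tree.

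The delicate point is making eligibility of such nodes precise. Two facts must be checked, and I would isolate both in a small claim. First, any suffix subtree of an accepted tree is itself accepted from a different Cayley start state. Second, acceptance from an arbitrary start state $[h]$ still yields some $(P',0)$-tree from $(p,\gamma)$, by restarting at $[\varepsilon]$. The second fact needs care, because changing the start state changes which leaves are valid. It suffices, however, that eligibility requires only the existence of some $P'$, and with start $[\varepsilon]$ every leaf is either valid (and then goes into $P'$) or invalid.

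\textbf{Inductive step $i\to i+1$.} The automaton $\mathcal{A}^{i+1}_\gr$ has $\Lambda^+_{p}\ni P'$ for every push $p\xrightarrow{\gamma}p'$ and every $P'\in R^i_{p'}=R(\mathcal{T}^i_{p'})$ (by induction).

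For ($\supseteq$), take a $(P,i+1)$-tree and repeatedly shortcut each maximal excursion starting at a push $p\xrightarrow{\gamma}p'$. The set $P'$ of return states of that excursion is in $R^i_{p'}$, because subtracting the base element beneath the push gives a $(P',i)$-tree rooted at $(p',\gamma)$. Crucially, the group element below the push is preserved on return, so the Cayley tracking stays correct across a shortcut. Since pushes originate only from universal states (\cref{obs: push-on-universal}), $\Lambda^+$ is available at $p$.

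For ($\subseteq$), expand each shortcut edge labelled $P'$ into a $(P',i)$-tree rooted at $(p',\gamma)$ pushed on top of the current stack. Its $P'$-leaves reattach to the corresponding subtrees of $t$, and trap leaves are again completed using \cref{lemma:i-eligible} at level $i+1$. The main obstacle is confirming that the size bound $|Q|$ in \cref{lemma:i-eligible} survives once $\Lambda^+$ shortcuts are present, that is, that the eligibility witnesses can still be pumped down to height at most $|Q|$. I would argue this on the shortcut-form trees: they have only $|Q|$ automaton states, so any witness can be cut to height at most $|Q|$, and shortcut edges multiply no group element.
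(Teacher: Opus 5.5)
Your proposal is correct and follows essentially the same route as the paper: induction on $i$, shortcutting push-excursions into $\Lambda^+$ transitions justified by $R^i$, Cayley tracking of the top group element (unchanged across $\varepsilon$-shortcuts), grafting $(\emptyset,i)$-trees from Lemma~\ref{lemma:i-eligible} at trap nodes, and recovering the $|Q|$ bound by pumping down shortcut-form trees. Your explicit check that the state at each trap node is eligible, obtained by reading the subtree below it from $(p,\gamma)$ without using the Cayley component, makes precise a step the paper only uses implicitly.
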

\subsection{The Base Case}
We will first prove~\cref{lemma:i-eligible} and then~\cref{lemma:R-sets-equivalence} for $i = 0$.

\begin{lemma*}[\cref{lemma:i-eligible} for $i = 0$]\label{lemma:0-eligible}
    Suppose $p \in Q$ is $0$-eligible and $g \in \Mon_\Delta$ has $\|g\|>|Q|$. Then $(p, w g)$ is the root of some $(\emptyset, 0)$-tree for every $w \in \{[\varepsilon]\} \cup \{[u \gamma] \mid u \in (V \cup \bar{V})^*\}$.%
\end{lemma*}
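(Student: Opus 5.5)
The plan is to take a witness of $0$-eligibility, shrink it to height at most $|Q|$ by a pumping argument on control states, and then re-root it at $(p,wg)$. There, each branch multiplies a word of length at most $|Q|$ onto $g$ and then pops. This pop must be invalid because $g$ is big. The argument works purely at the level of state-projected trees $\tau|_Q$ and their extensions $\tau_{\gets(p,wg)}$, since the transitions of a $\forall$-strategy tree do not depend on the stack content.

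\emph{Step 1: a witness of small height.} By $0$-eligibility there is a $(P,0)$-tree $\tau$ rooted at $(p,\gamma)$, i.e.\ a finite prefix subtree of a $\forall$-strategy tree with no pushes in which every branch ends at its first pop. We may assume every leaf transition is a pop $\bar\gamma$ (the first invalid configuration can only arise from a pop, since group letters keep right-invertibility and pushes do not occur). Consider $\tau|_Q$. If some branch contains a repeated state $q$ at nodes $\nu$ above $\nu'$, we replace the subtree at $\nu$ by the subtree at $\nu'$. This is legal because at every node the branching rules (all successors at $\exists$-states, one chosen successor at $\forall$-states) depend only on the state. The result is again a push-free tree whose branches each end in exactly one pop, and the number of nodes strictly decreases. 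Iterating, we get a tree $t$ on states in which no branch repeats a state, hence every branch has at most $|Q|$ nodes. Consequently, the word $v_\beta\in(W\cup\bar W\cup\{\varepsilon\})^*$ read on branch $\beta$ before its final $\bar\gamma$ has length at most $|Q|-1\le |Q|$.

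\emph{Step 2: re-rooting.} Let $\sigma=t_{\gets(p,wg)}$. On branch $\beta$, the configuration just before the leaf pop is $w\,g[v_\beta]$, with no intermediate pushes or pops. Because all intermediate letters are group letters, every intermediate configuration $wg[u]$ stays right-invertible whenever $wg$ is, so no earlier leaf is forced (and if $wg$ itself is not right-invertible, then the root is already invalid and we are done trivially). If $g[v_\beta]=1$, then $g=[\bar v_\beta]$ and $\|g\|\le|v_\beta|\le|Q|$, a contradiction. Hence the top group element is $\ne 1$ when $\bar\gamma$ is applied, so $wg[v_\beta]\bar\gamma$ is not right-invertible. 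For $w=[\varepsilon]$ this also covers the empty-stack case. For $w=[u\gamma]$, right-invertibility fails because the top entry $g[v_\beta]\neq 1$, using the normal form $[w_{r+1}\gamma\cdots\gamma w_1]$ described for $\Pushdown(\Group)$. So every branch of $\sigma$ ends in its first invalid configuration and has stack height $0$; that is, $\sigma$ is an $(\emptyset,0)$-tree.

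\emph{Main obstacle.} The delicate point is Step 1. We must check that the cut-and-paste preserves the defining properties: leaves are exactly first pops, there are no pushes, and the tree is a prefix of a genuine $\forall$-strategy tree (the $\forall$-choice may differ between the two occurrences of $q$, which is allowed since strategy trees need not be positional). We also need the fact that a pop applied with top element $\neq 1$ yields a non-right-invertible element of $\Mon_\Gamma$. I would justify this last fact via the standard normal form, noting that $\gamma$ is isolated and unlooped.
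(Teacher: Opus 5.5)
Your proposal is correct and follows the paper's proof essentially step for step. You shrink a $0$-eligibility witness by cutting at repeated states, re-root the resulting push-free tree at $(p,wg)$, and use $\|g\|>|Q|$ to show that the group word read before each branch's single pop cannot cancel $g$.

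There is one small slip in Step~1. You may only cut at repeated \emph{inner} states. If the lower occurrence of the repeated state is the leaf, the replacement makes that branch end on a group or $\varepsilon$ edge instead of a pop, so your claim that every branch still ends in exactly one pop fails. With the restriction, branches can have up to $|Q|+1$ nodes, exactly as the paper remarks. This still gives $|v_\beta|\le|Q|-1$, so your Step~2 goes through unchanged.
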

\begin{proof}
    Let $\tau$ be a $(P,0)$-tree rooted at $(p, \gamma)$. Every leaf of $\tau$ is either the first non-right-invertible configuration $(q, [u])$ of the branch, or $(q', [\varepsilon])$ with $q' \in P$ led by a pop transition. As the stack height is bounded by $0$, the leaf transitions of both kind of leaves are the first pop transitions of their branches. Then the projection tree $\tau|_{Q}$ is accepted by $\mathcal{A}^0_\gr(q) \times \pp$. As $\mathcal{A}^0_\gr(p)$ has $|Q|$-many states (plus $q_\emptyset$, but that can only appear on the leaf nodes due to $\pp$), we can remove repeating occurrences of inner nodes in $\tau|_{Q}$ by cutting a branch at the first occurrence of a node $q$ and attaching the subtree rooted at the last occurrence of an inner node $q$. We thereby obtain a small witness tree $t \in \mathcal{A}^0_\gr(p) \times \pp$ where each branch has length $\leq |Q|+1$. Note that we get the length $|Q|+1$ and not $|Q|$ because we need to preserve the leaf transitions, as $\pp$ is accepting trees based on leaf transitions. 

    The extension tree $t' = t_{\gets (p, wg)}$ is a $(P',0)$-tree of height $|Q|+1$. As leaf transitions of $t'$ are pops, any branch of $t'$ can neutralise a group element of word length at most $|Q|$. As $\|g\| > |Q|$, for every leaf $(q, [u])$ of $t'$ it holds that $[u] \not \in \RInv{\Mon_{\Gamma}}$. Therefore, $t'$ is a $(\emptyset, 0)$-tree rooted at $(p, wg)$.

\end{proof}

\begin{lemma*}[\cref{lemma:R-sets-equivalence} for $i = 0$] $R(\mathcal{T}_q^0) = R_q^0$ for every $q \in Q$.
\end{lemma*}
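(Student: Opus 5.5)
We prove the two inclusions $R(\mathcal{T}^0_q)\subseteq R^0_q$ and $R^0_q\subseteq R(\mathcal{T}^0_q)$ by translating between accepted trees of $\mathcal{T}^0_q=\mathcal{A}^0_\gr(q)\times\pp\times\mathcal{C}^{\le|Q|}$ and $(P,0)$-trees rooted at $(q,\gamma)$. The guiding invariant is this: along every branch of a tree accepted by $\mathcal{T}^0_q$, the Cayley component records exactly the group element $h$ obtained by multiplying the group labels read so far, as long as $\|h\|\le|Q|$. Since the configuration at that node is $[\gamma h]$, a leaf pop is valid (leading to $[\varepsilon]$) if and only if the Cayley component is $[\varepsilon]$ when the pop is read. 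Note that the Cayley automaton ignores $\bar\gamma$, so its state at the leaf equals its state just before the pop.

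For $R^0_q\subseteq R(\mathcal{T}^0_q)$, take a $(P,0)$-tree $\tau$ rooted at $(q,\gamma)$. As in the proof of \cref{lemma:i-eligible} for $i=0$, its projection $\tau|_Q$ contains no pushes, and each branch ends with its first pop. Hence $\tau|_Q$ is accepted by $\mathcal{A}^0_\gr(q)\times\pp$.

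We then run $\mathcal{C}^{\le|Q|}$ along each branch. If a branch never reaches the trap, the invariant holds. A leaf whose Cayley state is $[\varepsilon]$ is a valid pop, so the leaf is $(p,[\varepsilon])$ with $p\in P$. Any other leaf is an invalid pop and carries a non-$[\varepsilon]$ label, so it imposes no constraint. If a branch does enter $c^{trap}$, the leaf label on that branch is $c^{trap}\neq[\varepsilon]$, so again no constraint is imposed. Since the transitions of $\mathcal{T}^0_q$ are just those of the product, the decorated tree is accepted. It witnesses $P$, because every leaf labelled $[\varepsilon]$ has its state in $P$.

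For $R(\mathcal{T}^0_q)\subseteq R^0_q$, take an accepted finite tree $\tau$ all of whose $[\varepsilon]$-labelled leaves lie in $P$, and extend it to configurations via $\tau_{\gets(q,[\gamma])}$. Because of $\pp$, there are no pushes and each branch ends at its first pop; the automaton never uses $\Lambda^+_q$ at level $0$. On branches whose Cayley state never becomes trap, the invariant shows that the leaf is either $(p,[\varepsilon])$ with $p\in P$ or an invalid pop, which is exactly what a $(P,0)$-tree allows.

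The main obstacle is branches that enter $c^{trap}$ at some node $\nu=(p,[\gamma g])$ with $\|g\|>|Q|$. Past $\nu$ the automaton no longer knows the group element, and the remaining subtree might in principle contain a valid pop (after cancellation) leading to some state not in $P$. To handle this, I would replace the subtree below $\nu$. The existing subtree below $\nu$ witnesses that $p$ is $0$-eligible: it contains no pushes and each of its branches ends with a pop. By \cref{lemma:i-eligible} (case $i=0$, with $w=[\gamma]$), $(p,[\gamma g])$ is then the root of a $(\emptyset,0)$-tree, and we graft that tree in place of the old subtree. After doing this at the first trap node of every branch, we obtain a genuine $(P,0)$-tree rooted at $(q,\gamma)$, which gives $P\in R^0_q$.

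The one delicate point to check is that eligibility really is witnessed by the subtree below $\nu$. That subtree is rooted with top group element $g$ rather than $1$, but the proof of \cref{lemma:i-eligible} only uses the shape of the tree (no pushes, every branch ending with a pop) and then re-roots it at $(p,[\gamma g])$, so the argument goes through.
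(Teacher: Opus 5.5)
Your proposal is correct and follows the paper's proof essentially step for step. Both inclusions go through the invariant that the Cayley component tracks the top group element exactly until the trap is entered. For $R(\mathcal{T}^0_q)\subseteq R^0_q$, you, like the paper, graft the $(\emptyset,0)$-tree from \cref{lemma:i-eligible} (case $i=0$) at the first trap node of each branch. Your explicit check that the state at that node is $0$-eligible, via the push-free subtree below it re-rooted at $(p,\gamma)$, supplies a hypothesis the paper invokes without comment.
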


The nodes of trees accepted by $\mathcal{T}_q^0 = \mathcal{A}^0_q \times \pp \times \mathcal{C}^{\leq |Q|}$ are of the form $q \times c \in (Q \cup \{q_\emptyset\}) \times C $. Recall that $\mathcal{A}^0_\gr(q) \times \pp $ accepts the set of all finite prefix subtrees of $\forall$-strategy trees rooted at $q$, that contain \emph{no pushes}, and each branch has exactly one pop -- that is as the leaf transition. Then the leaf transitions $p \times c \xrightarrow{\bar{\gamma}} r \times c$ of trees accepted by $\mathcal{T}_q^0 $ are also pops, the Cayley component $c$ does not change at the lead transition, as the Cayley automaton skips pops. As the inner transitions (i.e. not leaf transitions) of the tree are labelled with $g \in (W \cup \bar{W} \cup \{\varepsilon\})$, (i) $c = c^{trap}$ if there exists a (topmost) ancestor node $(q', [u'])$ with $\|u'\| > |Q|$, (ii) $c = [u] \ne [\varepsilon]$ if the word length has not exceeded $|Q|$ and the leaf transition is an invalid pop, and (iii) $c = [\varepsilon]$ if the leaf transition is a valid pop. By~\cref{lemma:i-eligible} for the base case, we know that there is a $(\emptyset, 0)$-tree $t$ rooted at $(q', [u'])$. For every branch (i), we replace the subtree rooted at $(q', [u'])$ with $t$. We thereby obtain a tree $\tau$ in $\mathcal{T}_q^0$ where every leaf is of form (ii) or (iii). Clearly, $\tau$ extends to a $(P, 0)$-tree rooted at $(q, \gamma)$ such that if $p \times [\varepsilon]$ is a leaf of $\tau$, then $p \in P$. This shows $R(\mathcal{T}_q^0) \subseteq R_q^0$. 

The other direction is similar. Let $P \in R_q^0$. Then, there is a $(P, 0)$-tree $\tau$ rooted at $(q, \gamma)$. The leaves of $\tau$ are either (I) the first non-right-invertible configuration of the branch, or (II) a leaf node $(q',[\emptyset])$ for $q' \in P$ led by a pop transition. As the stack height is bounded by $0$, all leaf transitions of $\tau$ are pop transitions -- and furthermore, they are the first pops of their branches. (I) corresponds to invalid pops and (II) corresponds to valid pops. Then $\tau|_{Q}$ is accepted by $\mathcal{T}^i_q$ and the leaves corresponding to valid pops are of the form $q' \times [\varepsilon]$ and the leaves corresponding to invalid pops are of the form $q' \times c^{trap}$ or $q' \times [u] \neq [\varepsilon]$. Then clearly, $P \in R(\mathcal{T}_q^0) $.

\subsection{The Inductive Step}

We describe how to obtain $\mathcal{T}^{i+1}_q$ from $\mathcal{T}^{i}_q$ for every $q \in Q$. This boils down to obtaining
$\mathcal{A}_\gr^{i+1}$ from $\mathcal{A}_\gr^{i}$. 

Here is how we obtain $\mathcal{A}_\gr^{i+1}$ from $\mathcal{A}_\gr^{i}$: For every push transition $(p, \gamma, r) \in \delta$ (recall that, by our assumption $p \in Q_\forall$), add the following new transitions to $\mathcal{A}^{i}_{\gr} $ to obtain $\mathcal{A}^{i+1}_{\gr}$: 
For every $\emptyset \neq R \in R^{i}_r$, add to $\Lambda^+$ a new $\varepsilon$-transition from $p$ to $R$, i.e. add $\Lambda^+_p(p) = (\varepsilon, r_1) \times \ldots \times (\varepsilon, r_m)$ where $R = \{r_1, \ldots, r_m\}$.
This transition allows the trees accepted by the tree automaton to branch out at $p$ to $r_1, \ldots, r_m$ via $\varepsilon$-edges. Intuitively, if there exists a strategy $\tau \in \mathcal{A}^{i}_{\gr}(r)$ that allows $(r, \gamma)$ to reach a set $R' \subseteq R$ via 0-valid pops, then we allow $p$ to arrive at all nodes in $R$ without changing its stack content. This transition corresponds to the universal player strategy from $p$ of taking the edge $p \xrightarrow{\gamma} r$ and then following the $\forall$-strategy $\tau$ of stack height bounded by $i$ until he reaches the nodes in $R$. 
If $\emptyset \in R^{i}_r$, then there is a winning $\forall$-strategy from $r$ stack height bounded by $i$. In this case, add the transition $\Lambda^+_p(p) = (\emptyset, q_\emptyset)$ to $\Lambda$. This will allow $\pp$ to recognise this state as ``already won'' by $\forall$-player and allow a branch to end with the leaf transition $r \xrightarrow{\emptyset} q_\emptyset$.

We assign $\mathcal{T}^{i+1}_q = \mathcal{A}^{i+1}_\gr(q)\times \pp \times\mathcal{C}^{\leq |Q|}$ for every $q \in Q$. 

With the inductive hypothesis that~\cref{lemma:i-eligible,lemma:R-sets-equivalence} holds for all $k \leq i$, we prove them for $i+1$. 

\begin{lemma*}[{\protect\cref{lemma:i-eligible} for $i+1$}]
    Suppose $p \in Q$ is $(i+1)$-eligible and $g \in \Mon_\Delta$ has $\|g\|>|Q|$. Then $(p, w g)$ is the root of some $(\emptyset, i+1)$-tree for every $w \in \{[\varepsilon]\} \cup \{[u \gamma] \mid u \in (V \cup \bar{V})^*\}$.
\end{lemma*}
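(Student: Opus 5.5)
The plan mirrors the base case, now working with the automaton $\mathcal{A}^{i+1}_\gr \times \pp$ instead of $\mathcal{A}^0_\gr\times\pp$. Since $p$ is $(i+1)$-eligible, there is some $(P,i+1)$-tree $\tau$ rooted at $(p,\gamma)$. First I will shortcut $\tau$ to stack height $0$. Take each maximal subtree that starts at a push $p'\xrightarrow{\gamma} r$ from stack height $0$ and runs until each branch either returns to height $0$ or hits an invalid pop. Its set $R$ of return states satisfies $R\in R^i_r$, and $R=\emptyset$ is possible. Replace that subtree by the $\Lambda^+$-transition $p'\to R$, or $p'\xrightarrow{\emptyset} q_\emptyset$ when $R=\emptyset$. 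Iterating top-down gives a tree $\tau'$ accepted by $\mathcal{A}^{i+1}_\gr(p)\times\pp$. By \cref{lemma:R-sets-equivalence} for $i$, the added transitions are exactly those that exist in $\mathcal{A}^{i+1}_\gr$.

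Next I shrink $\tau'$ by pumping. $\mathcal{A}^{i+1}_\gr$ has only the states $Q\cup\{q_\emptyset\}$, and $q_\emptyset$ appears only at leaves. So whenever an inner state repeats on a branch, I replace the subtree at the upper occurrence by the one at the lower occurrence. This gives a tree $t$, still accepted by $\mathcal{A}^{i+1}_\gr(p)\times\pp$, in which every branch has at most $|Q|$ inner transitions before its leaf transition. Along any branch, the group letters multiplied at stack height $0$ then form a word $v$ with $|v|\le|Q|$. Shortcut edges are labelled $\varepsilon$ and contribute no group element at the current level.

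Finally I re-expand $t$, rooted at $(p,wg)$, into a genuine $\forall$-strategy tree. At each shortcut edge $p'\to R$, I reinsert a witness $(R',i)$-tree with $R'\subseteq R$, rooted at $(r,\cdot)$ after the push. This witness is placed on top of the current configuration $x\gamma$, where $x$ is the current stack content with top entry $g[v']$. The inserted subtree sits entirely above the fresh $\gamma$, so its behaviour does not depend on $x$. Each of its branches either performs an invalid pop, which is still invalid there, or returns to height $0$ in some state of $R'\subseteq R$ with stack content $x$ unchanged. Those states are continued by the corresponding children in $t$. A branch ending in $q_\emptyset$ is expanded by a winning $(\emptyset,i)$-tree for the push target, which is again independent of the lower stack. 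Every remaining branch ends with a pop at height $0$ applied to top entry $g[v]$ with $|v|\le|Q|$. We have $g[v]\neq 1$, since otherwise $\|g\|=\|[\bar v]\|\le|Q|$, so the pop is invalid and the result lies outside $\RInv{\Mon_\Gamma}$. The same holds when $w=[\varepsilon]$ (the stack is then just $g$, and popping is invalid outright) and when $w=[u\gamma]$ (the top group element is non-trivial). Hence the expanded tree is an $(\emptyset,i+1)$-tree rooted at $(p,wg)$.

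I expect the main obstacle to be the first, shortcutting step. Its correctness depends on the subtree above a push being rooted at $(r,\gamma)$ as in the definition of $R^i_r$, while in $\tau$ it is rooted at $(r,x\gamma)$. I would handle this by showing that $(R,i)$-trees are invariant under changing the content below the pushed $\gamma$, using the normal form $[w_{r+1}\gamma\cdots\gamma w_1]$ and the fact that validity of pops above the push depends only on the entries above it. The length accounting for the leaf transition, $|Q|$ inner letters plus the final pop, must also match the base case exactly.
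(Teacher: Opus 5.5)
Your proposal is correct and follows the paper's proof. You shortcut each push from height $0$ into a $\Lambda^+$-transition of $\mathcal{A}^{i+1}_\gr$, pump the resulting push-free tree accepted by $\mathcal{A}^{i+1}_\gr\times\pp$ down to height at most $|Q|+1$, and replay it from $(p,wg)$, so every height-$0$ pop meets a top entry $g[v]\neq 1$. Your explicit re-expansion of the shortcut edges into witness $(R',i)$-trees above a fresh $\gamma$, with the check that their validity does not depend on the content below that $\gamma$, spells out what the paper leaves implicit in ``as before''.
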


\begin{proof}
    Let $\tau$ be a $(P, i+1)$-tree rooted at $(p, \gamma)$. Every leaf of $\tau$ is either the first non-right-invertible configuration $(q, [u])$ of the branch (i.e. an invalid pop), or $(q', [\varepsilon])$ with $q' \in P$ led with a pop transition (i.e. a 0-valid pop). Once more, all leaf transitions of $\tau$ are pops. However this time, they do not need to be the first pops of their branches. Due to the bounded stack height of $i+1$, the branches are allowed to have $(i+1)$ push and matching pops before arriving at an invalid, or a 0-valid pop. 
    Consider a subtree $\tau'$ rooted at the node reached via the first push of a branch of $\tau$. So, $\tau'$ is rooted at some $(r', [u\gamma])$ with the parent $(r, [u])$. Then consider its finite prefix subtree that is a $(P', i)$ tree rooted at $(r', \gamma)$. That is, each leaf is either an invalid pop reached within a pop depth of $i$, or a $0$-valid pop from $(r', \gamma)$, reaching a node in $P'\subseteq Q$. Due to the I.H., the set of nodes $P'$ the nodes in $\tau'$ reached via 0-valid pops, is in $R^i_q$. Thus, there is an $\varepsilon$-transition from $r$ to $P'$ in $\mathcal{A}^{i+1}_\gr$. If $P' = \emptyset$, the transition is $r \xrightarrow{\emptyset} q_\emptyset$. 
    
    We go through the branches of $\tau$ from top to bottom, and repeat the following procedure until there are no pushes left. Find the first $(r, [u])$ leading to some $(r', [u\gamma])$. Replace the subtree rooted at $(r, [u])$ and ends at the leaves of the $(P',i)$-subtree rooted at $(r', [\gamma])$, with the transition rooted at $(r, [u])$ and has children $(p_1, [u]), \ldots, (p_\ell, [u])$ for $P' = \{p_1, \ldots, p_l\}$, where each edge is labelled with $\varepsilon$; if $P'=\emptyset$, replace it with the transition $(r, [u]) \xrightarrow{\emptyset} (q_{\emptyset}, [u])$.

    The resulting tree $\tilde{\tau}$ has no pushes. Furthermore, all leaf transitions are either (i) $\emptyset$-transitions, or (ii) the first pop of the branch. (i) is the case if the branch has an invalid pop with stack height $< i+1$. (ii) is the case if either the leaf is a node $(q', [\varepsilon])$ (a 0-valid pop), or the leaf transition is an invalid-pop and the branch has stack height exactly $i+1$.
    In any case, ${\tilde{\tau}}|_{Q}$ is accepted by $\mathcal{A}^{i+1}_\gr(q) \times \pp$ as, $\tilde{\tau}$ has no pushes, every leaf transition is either $\emptyset$-transition, or the first pop of the branch; and the transitions we used to replace subtrees of $\tau$ exist in $\mathcal{A}^{i+1}_\gr$. 

    Now we are ready to replicate the argument from the proof of the lemma for the base case. As the automaton $\mathcal{A}^{i+1}_\gr$ has $|Q|$-many states (used as inner nodes of a tree), and there is a tree $\tilde{\tau}|_{Q}$ rooted at $p$ that accepted by $\mathcal{A}^{i+1}_\gr(q) \times \pp$, there is a small witness tree $t$ rooted at $p$  of height $\leq |Q|+1$ that is also accepted by $\mathcal{A}^{i+1}_\gr(q) \times \pp$. Applying the $\forall$-strategy $t$ from root configuration $(p, wg)$ gives an $(\emptyset, i+1)$-tree, as before. 
\end{proof}

\cref{lemma:i-eligible} for $i+1$ shows that in the product $\mathcal{T}^{i+1}_q  = \mathcal{A}^{i+1}_\gr \times \pp \times \mathcal{C}^{\leq |Q|}$ the nodes with Cayley component $c^\trap$ are indeed nodes from which $\forall$ has a winning strategy. So, the Cayley component works correctly in distinguishing ``small'' and ``big'' elements at iteration $i+1$. 

\begin{lemma*}[\cref{lemma:R-sets-equivalence} for $i+1$] $R(\mathcal{T}_q^{i+1}) = R_q^{i+1}$ for every $q \in Q$.
\end{lemma*}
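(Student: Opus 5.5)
We prove the two inclusions separately, following the base case but now handling the shortcut transitions of $\mathcal{A}^{i+1}_\gr$. The induction hypothesis gives \cref{lemma:R-sets-equivalence} for all $k\le i$, so that $R(\mathcal{T}^i_r)=R^i_r$ for all $r$. We also have \cref{lemma:i-eligible} for $i+1$, just established.

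\emph{Inclusion $R_q^{i+1}\subseteq R(\mathcal{T}^{i+1}_q)$.} Take a $(P,i+1)$-tree $\tau$ rooted at $(q,\gamma)$. We apply the shortcutting procedure from the proof of \cref{lemma:i-eligible} for $i+1$. Going top-down, we take each first push $(r,[u])\xrightarrow{\gamma}(r',[u\gamma])$ and cut the part below it at the first return to the height of $(r,[u])$ or at the first invalid pop. This part is a $(P',i)$-tree rooted at $(r',\gamma)$, so $P'\in R^i_{r'}$. We replace it by the $\Lambda^+$-transition from $r$ to $P'$, or by $r\xrightarrow{\emptyset}q_\emptyset$ if $P'=\emptyset$. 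The subtrees that $\tau$ hangs below the states of $P'$ after the return are reattached under the corresponding $\varepsilon$-children. Here we must check that these subtrees start with stack content $[u]$, since a valid return pop restores exactly $[u]$. If several returning branches reach the same $p\in P'$ with different continuations, we pick one of them. This is legitimate because the continuation depends only on the configuration $(p,[u])$ and any choice is a valid subtree.

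The resulting tree has no pushes, and each leaf transition is either an $\emptyset$-transition or the first pop of its branch, so its projection is accepted by $\mathcal{A}^{i+1}_\gr(q)\times\pp$. We then run $\mathcal{C}^{\le|Q|}$ along each branch. Group letters are applied only to the top entry, and shortcuts leave that entry unchanged. Hence a leaf reached by a valid final pop from $(\cdot,\gamma)$ carries Cayley component $[\varepsilon]$, and therefore lies in $P$. Every other leaf is invalid, or it ends in $q_\emptyset$ or $c^\trap$, and none of these leaves has component $[\varepsilon]$ on a pop. This gives $P\in R(\mathcal{T}^{i+1}_q)$.

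\emph{Inclusion $R(\mathcal{T}^{i+1}_q)\subseteq R_q^{i+1}$.} Take an accepted finite tree whose $[\varepsilon]$-leaves lie in $P$. First, every node whose Cayley component is $c^\trap$ carries an $i+1$-eligible state, because the accepted tree below it witnesses this. By \cref{lemma:i-eligible} for $i+1$, we replace the subtree at the topmost such node by an $(\emptyset,i+1)$-tree from the actual configuration.

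Next we expand each shortcut $p\to R$, where $R=\{r_1,\dots,r_m\}\in R^i_r$ for a push $(p,\gamma,r)$, at a node with configuration $(p,[u])$. Since $p\in Q_\forall$, $\forall$ may play the push to reach $(r,[u\gamma])$, and then follow a $(R',i)$-tree with $R'\subseteq R$, which exists by the induction hypothesis. Because this tree is a relaxed strategy tree from $(r,\gamma)$, running it from $(r,[u\gamma])$ behaves identically up to its leaves: its 0-valid pops lead to $(r_j,[u])$, and its invalid pops remain invalid, since the local top entries coincide. Under each such leaf we attach the accepted subtree for $r_j$. The $\emptyset$-shortcut is expanded in the same way with a winning $(\emptyset,i)$-tree.

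The stack height after expansion is at most $1+i$. Every leaf is then either invalid or a 0-valid pop from height zero into a state of $P$, so the expanded tree is a $(P,i+1)$-tree rooted at $(q,\gamma)$.

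The main obstacle is the correctness of the expansion with respect to the stack below. A relaxed tree built from $(r,\gamma)$ must transfer to $(r,[u\gamma])$: invalidity has to be preserved, and the only new effects may occur at pops that go below the pushed entry, which are exactly the 0-valid leaves. We must also make sure that expanded shortcuts nested inside other expansions do not push the stack height beyond $i+1$. We handle this by expanding the topmost shortcut first and arguing inductively on the nesting depth of expansions, using that each shortcut in $\mathcal{A}^{i+1}_\gr$ comes from $R^i$ and therefore contributes height at most $i+1$ relative to its own base.
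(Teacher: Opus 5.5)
Your proposal is correct and follows essentially the same route as the paper. For $\supseteq$ you shortcut each push-excursion of a $(P,i+1)$-tree by the $\Lambda^+$-transition coming from $R^i_{r'}$ and track the top entry with $\mathcal{C}^{\le|Q|}$. For $\subseteq$ you first replace the subtrees at the topmost $c^{\trap}$-nodes using \cref{lemma:i-eligible} for $i+1$, and then expand each shortcut into its witnessing $(R,i)$-tree, transplanted onto $(r,[u\gamma])$.

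You also make explicit some points the paper glosses over, such as choosing one continuation per returned state and why invalid pops stay invalid under the transplant. One claim is slightly off but harmless: a $0$-valid leaf can carry $c^{\trap}$ rather than $[\varepsilon]$ if the group element becomes big and then shrinks again. This does not matter, since only $[\varepsilon]$-leaves are required to lie in $P$.
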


\begin{proof}%
Let us first show $R(\mathcal{T}^{i+1}_q) \subseteq R^{i+1}_q$.
Let $\tau$ be a tree accepted by $\mathcal{T}^{i+1}_q$ and let $P$ be the set of leaves $p \times [\varepsilon]$ of $\tau$ s.t. $p \in P$. We will show that $P \in R^{i+1}_q$. By I.H., every $\varepsilon$-transition from $p$ to $R$ in $\mathcal{A}^{i+1}_\gr$ represents a push transition $p \xrightarrow{\gamma} r \in \delta$ and a $(R, i)$-subtree rooted at $(r, \gamma)$. Similarly, a $\emptyset$-transition represents a $(\emptyset, i)$-subtree rooted at $(r, \gamma)$.
Replacing the $\emptyset$- and $\varepsilon$-transitions in $\tau$ with the respective subtrees, we obtain a $(P', i+1)$-tree rooted at $(q, \gamma)$. Now we need to show $P'= P$. But this follows from the facts that (i) $\varepsilon$-transitions indeed represent subtrees where the leaves are reached from the root with the same stack content, (ii) by \cref{lemma:i-eligible} that the Cayley component distinguishes ``big'' configurations as intended in iteration $i+1$.  

Now let us show $R(\mathcal{T}^{i+1}_q) \supseteq R^{i+1}_q$. Let $P \in R^{i+1}_q$ and let $\tau$ be the $(P, i+1)$-tree rooted at $(q, \gamma)$. The leaves of $\tau$ are either (I) the first non-right-invertible configuration of the branch (so, an invalid pop), or (II) a leaf node $(q',[\varepsilon])$ for $q' \in P$ led by a pop transition (so, a 0-valid pop).
Similar to the proof of~\cref{lemma:i-eligible} for $i+1$, we go over the branches of $\tau$ from top to bottom repeatedly, and for each push $(r, [u]) \xrightarrow{\gamma} (r, [u\gamma])$ replace the subtree starting at $(r, [u])$ and ending at the leaves of the
$(P, i)$-subtree rooted at $(r', \gamma)$, with $\varepsilon$-transitions from $(r, [u])$ to $(p_i, [u])$ for every $p_i \in P$. Replace it with $(r, [u]) \xrightarrow{\emptyset} (q_\emptyset, [u])$ is $P = \emptyset$. These transitions clearly exist in $\mathcal{A}^{i+1}_\gr$.
The resulting tree $\tilde{\tau}$ has no pushes, and as it is obtained from a $(P, i+1)$-tree $\tau$, each leaf transition of  $\tilde{\tau}$ is either $\emptyset$, or the first pop of the branch. As the $\varepsilon$- and $\emptyset$-transitions used to replace subtrees all exist in $\mathcal{A}^{i+1}_\gr$, $\tilde{\tau}|_{Q} \in \mathcal{A}^{i+1}_\gr(q) \times \pp$. 

Let $\tilde{\tau}|_Q \times C$ be $\tilde{\tau}|_Q$ extended with the Cayley components in $\mathcal{T}^{i+1}_q$. Lastly, we need to ensure that the set $P$ of nodes reached via valid pop transitions in $\tilde{\tau}$ are exactly the set of nodes $p \times [\varepsilon]$ in $\tilde{\tau}|_Q \times C$. 
The fact that $\emptyset$-transitions correspond to invalid pops with stack height comes $\leq i$ from the I.H. So we only need to ensure that the Cayley automaton distinguishes the ``big'' configurations correctly in iteration $i+1$. However, this comes from the proof of~\cref{lemma:i-eligible} for $i+1$, as explained above. This concludes the proof.

\end{proof}\subsection{Complexity and Correctness}
The saturation algorithm saturates the tree automaton $\mathcal{A}^0_\gr$ on $|Q|+1$ states by adding transitions. As the number of transitions that can be added is exponential, the saturation takes at most exponentially many steps. At each step, 
a calculation is performed on the exponential sized automaton $\mathcal{A}^0_\gr \times \pp \times \mathcal{C}^{\leq |Q|}$. As explained in Section~\ref{sec:FICEG}, each step requires at most exponential time, as we can find the minimal representative of a word over $\Delta$ in $\EXPTIME$. So, the overall complexity of the saturation procedure is $\EXPTIME$. 

At the end of the saturation, we obtain the set $R_q$ for each $q \in Q$. We know by~\cref{lemma:Rq} that 
$\emptyset \not \in R_{q_i}$ if and only if $(q_{init}, \gamma)$ is won by $\exists$-player. To compute the winner of $(q_{init}, \varepsilon)$, it is sufficient to adjust the game arena slightly by adding an initial pop ($\bar{\gamma}$) transition preceeding every outgoing transition of $q_{init}$ (WLOG assuming $q_{init}$ has no self-loops). Then $(q_{init}, \gamma)$ is won by $\exists$ in the adjusted game arena if and only if $(q_{init}, \varepsilon)$ is won by $\exists$ in the original game arena.

\section{Additional Material for Section~\ref{sec:UICEG}}\label{sec:app-UICEG}
In this section, we provide the proofs missing in~\cref{sec:UICEG}.

\initialCreditUb*

We will prove~\cref{thm:initial-credit-ub} in two parts. In~\cref{thm:initial-credit-length} we will show that the $\Pushdown$-length of $\lambda$ can be bounded by $2^n$ and in~\cref{thm:initial-credit-structure} we will show that the word length of a group element inbetween two pushes of $\lambda$ can be bounded by $n$. 

\begin{restatable}{theorem}{initialCreditLength}\label{thm:initial-credit-length} If $\exists$ wins $(q_{init}, [\lambda'])$ for some initial credit $\lambda'$, she also wins
   $(q_{init}, [\lambda])$ for some initial credit $\lambda$ where $|\lambda|_{\gamma} < 2^n$.
\end{restatable}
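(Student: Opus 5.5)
We argue by contraposition on the height: take a winning $\exists$-strategy tree $\tau$ from $(q_\init,[\lambda'])$ with $\lambda'=w_{r+1}\gamma w_r\cdots\gamma w_1$ and $r\ge 2^n$. We show that some infix $\gamma w_\ell\cdots\gamma w_{j+1}$ with $j<\ell\le 2^n$ can be deleted while keeping $\exists$ winning. Iterating then yields $|\lambda|_\gamma<2^n$, since each step strictly decreases $r$.

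\textbf{Signatures.} For each branch $\beta$ of $\tau$ (an infinite play avoiding invalid configurations), define its popping depth $d(\beta)$ as the largest $s$ such that the suffix $\gamma w_s\cdots\gamma w_1$ is entirely popped at some point along $\beta$; this may be $0$, and it is at most $r$. For $i\in[1,2^n]$, define $S_i\subseteq Q$ as follows.
\begin{itemize}
\item For each branch with $d(\beta)\ge i$, include the state reached immediately after the first time the stack is reduced to $w_{r+1}\gamma\cdots\gamma w_{i+1}$.
\item For each branch with $d(\beta)<i$, include a state $p$ that $\beta$ visits infinitely often at its eventual minimal stack height. Such a state exists because the minimum height is attained infinitely often along an infinite play and $Q$ is finite.
\end{itemize}
Every branch contributes, so each $S_i$ is nonempty. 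Since there are only $2^n-1$ nonempty subsets of $Q$ but $2^n$ indices, pigeonhole gives $j<\ell$ with $S_j=S_\ell$. I expect to need only $S_j\supseteq S_\ell$, but equality is what pigeonhole directly provides, and it is convenient.

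\textbf{Surgery.} Build $\tau'$ from $(q_\init,[\lambda])$ with $\lambda=w_{r+1}\gamma\cdots\gamma w_{\ell+1}\gamma w_j\cdots\gamma w_1$. The first phase of $\tau$, up to the first time each branch pops down to depth $j$, is executed literally. Up to that moment the stack below level $j$ is untouched, so every transition has identical validity, because validity of a pop depends only on the top entry and the letter. This holds under the identification of the kept suffix.

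When a branch reaches state $p\in S_j$ at stack $w_{r+1}\gamma\cdots\gamma w_{j+1}$, the modified stack is $w_{r+1}\gamma\cdots\gamma w_{\ell+1}$. Since $p\in S_\ell$, there is a branch of $\tau$ and a node on it with state $p$ and stack $w_{r+1}\gamma\cdots \gamma w_{\ell+1}$. We then graft the subtree of $\tau$ below that node. For strategies to be well defined on histories, $\exists$ simply switches to following $\tau$ from that node, which is a history-dependent strategy and is allowed.

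Branches whose signature state came from the "infinitely often at minimum height" clause need care, and this is the main obstacle. If $d(\beta)<j$, the branch never pops into the removed region, so it is copied unchanged with the lower stack relabelled. A grafted subtree then continues from a node where either a genuine first pop occurred, or the node lies at the minimal height $h$ of a branch that never goes below $h$. In the latter case the content below $h$ is never inspected, so replacing it by $w_{r+1}\gamma\cdots\gamma w_{\ell+1}$ preserves validity of all future moves, provided the stack at that node equals $w_{r+1}\gamma\cdots\gamma w_{\ell+1}$ below height $h$.

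To make this hold, I would refine the second clause. Choose $p$ at a visit to the minimal height after which the branch never returns below it, and use the stack content at that visit as it stands in $\tau$. The suffix below that height is exactly the untouched part of $\lambda'$ down to level $d(\beta)+1$. I would refine the signatures to record only the state, and argue that the grafted continuation never reads below its starting height. It therefore remains valid on any stack, in particular on $w_{r+1}\gamma\cdots\gamma w_{\ell+1}$ followed by whatever the grafted node had above that level. This needs the above-part at the graft point to match, so I would define the graft nodes at heights exactly equal to the cut levels, where the above-part is empty.

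Checking this height bookkeeping carefully, so that every node in $S_\ell$ genuinely sits at level $\ell$ with nothing above, is the delicate step. Once it holds, every branch of $\tau'$ is a concatenation of a valid prefix and a valid suffix of $\tau$, so it is infinite and valid, and $\tau'$ is winning.
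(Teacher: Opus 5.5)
\textbf{Verdict: genuine gap.} Your architecture is the paper's: signature sets $S_i$, pigeonhole over nonempty subsets of $Q$, running $\tau$ until the first pop down to level $j$, then grafting. The case where the witness for $p\in S_\ell$ comes from a branch that really pops down to level $\ell$ is fine. That branch yields a node $(p,[w_{r+1}\gamma\cdots\gamma w_{\ell+1}])$ of $\tau$, which is exactly the configuration reached in the new game.

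The gap is your second clause. Suppose $p\in S_j$ is reached by a deep branch, but $p\in S_\ell$ only because of a branch $\beta'$ with $d(\beta')<\ell$. Then $\tau$ has no node with state $p$ and stack $w_{r+1}\gamma\cdots\gamma w_{\ell+1}$. On $\beta'$ the $\gamma$ directly below $w_\ell$ is never popped, so every node of $\beta'$ lies strictly above level $\ell$. Your proposed refinement, placing graft nodes exactly at the cut levels with empty above-part, is therefore impossible for precisely the branches it is meant to handle. The fallback that ``the grafted continuation never reads below its starting height'' does not rescue it either, for two reasons. It is a property of the single branch $\beta'$, not of the subtree of $\tau$ you would graft, whose other $\forall$-choices may pop lower. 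Also, the top group element at the graft point is $w_{\ell+1}$, not the one on $\beta'$. So nothing shows that $\exists$ wins $(p,[w_{r+1}\gamma\cdots\gamma w_{\ell+1}])$. Separately, the inclusion you actually use is $S_j\subseteq S_\ell$, not $\supseteq$.

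The paper's appendix tries to close this gap as follows:
\begin{itemize}
\item It additionally demands that the padding state $p$ satisfies ``$\exists$ wins $(p,[\varepsilon])$''.
\item It then transfers this win to $(p,[v])$ for any right-invertible $v$, using that a product of right-invertible elements is right-invertible.
\end{itemize}
However, such a $p$ need not exist on a shallow branch. Take a $\forall$-state $q$ with an $\varepsilon$-self-loop and an edge $q\xrightarrow{\bar\gamma}s$ to an $\exists$-sink $s$ with an $\varepsilon$-loop. Then $\exists$ wins $(q,[\gamma^r])$ for every $r\ge 1$. Yet the branch $q^\omega$ visits only $q$, and $\forall$ wins $(q,[\varepsilon])$ by popping immediately.

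\textbf{Repair.} Drop the second clause altogether, and let $S_i$ collect only the states reached at the first pop down to level $i$. Then $S_i=\emptyset$ implies $S_{i'}=\emptyset$ for all $i'>i$.
\begin{itemize}
\item If some $S_i$ with $i\le 2^n$ is empty, no branch ever pops the $\gamma$ directly below $w_i$. The part of the stack beneath it is never inspected, so the same strategy wins from $w_i\gamma w_{i-1}\cdots\gamma w_1$, which has $i-1<2^n$ occurrences of $\gamma$.
\item Otherwise $S_1,\dots,S_{2^n}$ are all nonempty, and pigeonhole gives $j<\ell$ with $S_j=S_\ell$. Every state used for grafting now has a genuine witness node at level $\ell$, and your surgery goes through as written.
\end{itemize}
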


\begin{proof}[Proof of \cref{thm:initial-credit-length}]
   We show that whenever $\exists$ wins $\gr$ with the initial credit $\lambda'$ with $|\lambda'|_{\gamma} > 2^n$, we can find some $\lambda$ with $|\lambda|_{\gamma} < |\lambda'|_{\gamma}$ such that she wins the game with initial credit $\lambda$ too. Take $\lambda' = [w_{r+1} \gamma w_r \cdots \gamma w_1]$ where $ r > 2^n$ and fix a winning $\exists$-strategy tree $\tau$ rooted at $(q_{init}, \lambda')$. 
 We first make some observations: 
 \begin{enumerate}
    \item In every branch of $\tau$, a maximal suffix $\gamma w_{s}\cdots \gamma w_1$  of $\lambda'$ is popped. We call $s$ the \emph{popping depth} of this branch. %
    \item If a branch of $\tau$ reaches $(q, [u])$ after popping a suffix $\gamma w_{d}\cdots \gamma w_1$ of $\lambda'$, then\\ $(q, [w_{r+1}  \gamma w_{r} \cdots \gamma  w_{d+1}] )$ is won by $\exists$.  \label{it:two}
    \item Every branch of $\tau$ eventually reaches some $(q, [u])$ s.t. $(q, [\varepsilon])$ is won by $\exists$.  \label{it:three}
    \item If $(q, [u])$ is won by $\exists$ then so is $(q, [vu])$ for any $v,u \in (V \cup \bar{V})^*$. \label{it:four}
 \end{enumerate}

    We represent the branches of $\tau$ by the following finite representation: 
    $$ b^i := q^i_0 \xrightarrow{\bar{\gamma}} q^i_1 \xrightarrow{\bar{\gamma}} \ldots \xrightarrow{\bar{\gamma}} q^i_{s^i}$$ where for all $i$, $q^i_0 = q_{init}$, and for $j \in [1, s^i]$, each edge $q^i_{j-1} \xrightarrow{\bar{\gamma}} q^i_{j}$ represents consecutive sections in the branch where $q^i_{j-1} \xrightarrow{[\overline{w_{j-1}}]}^* p \xrightarrow{\bar{\gamma}} q^i_{j}$. Furthermore, $q^i_{s^i}$ is the first state in the branch where $(q^i_{s^i}, [\varepsilon])$ is won by $\exists$ (which exists by Obs.~\ref{it:three}). 

     As edges represent consecutive pops in $\lambda'$, for all $i$, $s^i \leq r$. We further restrict the branches to their first $2^n$-many pops ($\bar{\gamma}$ ) of $\lambda'$, whereas for $i$ where the popping depth $s^i < 2^n$, we pad the branch with $\varepsilon$ edges on a state $p$ visited infinitely often in the rest of the branch where $(p, [\varepsilon])$ is won by $\exists$; until the length is exactly $2^n$ (such a $p$ exists due to Obs.~\ref{it:three} and pigeonhole principle).  
      We will denote this restriction of each branch $b^i$ by $\mathbf{b}^i$, that is,
    \begin{align} \mathbf{b}^i := \begin{cases} q^i_0 \xrightarrow{\bar{\gamma}} q^i_1 \xrightarrow{\bar{\gamma}} \ldots \xrightarrow{\bar{\gamma}} q^i_{2^n} &\text{ if }s^i \geq 2^n \text{ and }\\
    q^i_0 \xrightarrow{\bar{\gamma}} q^i_1 \xrightarrow{\bar{\gamma}} \ldots \xrightarrow{\bar{\gamma}} q^i_{s^i} \xrightarrow{\varepsilon} q^i_{s^i+1} = p \xrightarrow{\varepsilon} \ldots \xrightarrow{\varepsilon} q^i_{2^n} = p &\text{ if }s^i < 2^n %
    \end{cases}
    \end{align}

    Note that in the case where $s^i < 2^n$, both $(q^{s^i}, [\varepsilon])$ and $(p, [\varepsilon])$ are won by $\exists$. The $\varepsilon$-transitions inbetween them do not imply that $p$ can be reached from $q^{s^i}$ with the same stack content; they simply denote that fact that we do not pop the stack any further to win from these nodes. 

    We will denote the set of all $\mathbf{b}^i$ by $B$. Note that $B$ is a finite set. We claim the following.

  \textbf{Claim:} There exist two indices $j < \ell \in [1, 2^n]$ such that for every $k \leq |B|$ there exists a $k' \leq |B|$ such that $q^k_j = q^{k'}_{\ell}$. 
  
  We first argue that if the claim holds, then~\cref{thm:initial-credit-length} is proven. 
  
  Let $\lambda'|_{< i} = \gamma w_{i-1}  \cdots \gamma w_1$ and $\lambda'|_{\geq i} = w_{r+1} \gamma w_r \cdots \gamma w_i$. Then the claim implies that $(q_{init}, \lambda)$ is won by $\exists$, for
  $\lambda = [\lambda'|_{\geq \ell} \cdot \lambda'|_{< j}]$. In particular, it implies that for every branch of $\tau$, there exists a state $p$ and a suffix $\texttt{suf} = \gamma w_{d}  \cdots \gamma w_1$ of $\lambda'|_{< j}$ s.t. $q_{init} \xrightarrow{\overline{\texttt{suf}}}{\phantom{.}^*} \,{p}$ \footnote{Here, the edge relation is restricted to the chosen branch.} and $(p, \lambda'|_{\geq{\ell}} \cdot \texttt{pre})$ is won by $\exists$. Here $\texttt{pre}$ is the prefix of $\lambda'|_{< j}$ that satisfies $\texttt{pre}\cdot \texttt{suf} = \lambda'|_{< j}$. 

   In particular, for the branch $\mathbf{b}^k$, if $s^k \geq j$, then $p$ is $q^k_j$, and $\texttt{suf}=\lambda'|_{< j}$ (therefore, $\texttt{pre} = \varepsilon$); otherwise, $p$ is $q^k_{s^k}$ and $\texttt{suf} = \gamma w_{s^k-1}  \cdots \gamma w_1$ since the branch $\mathbf{b}^k$ only pops this suffix of $\lambda'$. In the latter case, $(p = q^k_{s^k}, [\varepsilon])$ is won by $\exists$, then so is $(v, [\lambda'|_{\geq{\ell}} \cdot \texttt{pre}])$ by Obs.~\ref{it:four}. In the former case, branch $\mathbf{b}^{k'}$ will witness that $(p, [\lambda'|_{\geq{\ell}}]) = (p, [\lambda'|_{\geq{\ell}} \cdot \texttt{pre}])$ is won by $\exists$: If $s^{k'} \geq \ell$, then $\delta^{k'}$ reaches $p = q^k_j = q^{k'}_{\ell}$ by popping $\lambda|_{<\ell}$; thus, $(p, [\lambda'|_{\geq{\ell}}])$ is won by $\exists$ by Obs.~\ref{it:two}. Otherwise, if $s^{k'} < \ell$, then $(p = q^k_j = q^{k'}_{\ell}, [\varepsilon])$ is won by $\exists$, then so is $(p, [\lambda'|_{\geq{\ell}}])$ by Obs.~\ref{it:four}. This shows that every branch of $\tau$ reaches a state $p$ via popping a suffix $\texttt{suf}$ of $\lambda'|_{< j}$ such that $(p, [\lambda'|_{\geq{\ell}} \cdot \texttt{pre}])$ is won by $\exists$. Therefore, every branch of $\tau$ wins $q_{init}$ with the initial credit $[\lambda'|_{\geq{\ell}} \cdot \texttt{pre}\cdot \texttt{suf}] = \lambda'|_{\geq{\ell}} \cdot \lambda'|_{< j} = \lambda$. 
   So, $(q_{init}, \lambda)$ is won by $\exists$, which proves~\cref{thm:initial-credit-length} as $|\lambda|_{\gamma} < |\lambda'|_{\gamma}$ due to $j < \ell$.

  Now we prove the claim. 
  For $i \in [1, 2^n]$, let $S^i$ be the set of all states $q \in Q$ that appear as $q_i$ for some branch, i.e. $S^i : = \{ q^j_i \mid j \in [1, |B|]\}$. Clearly $ 1 \leq |S^i| \leq |Q|$ for all $i$.
  Then $S^i$ is a non-empty subset of $Q$, therefore can get $2^n - 1$ many different values. It follows by pigeonhole principle that there are two different indices $j < \ell \in [1, 2^n]$ such that $S^j = S^{\ell}$. Clearly, the statement of the claim holds over $j$ and $\ell$, which concludes the proof of the claim.

\end{proof}

In~\cref{thm:initial-credit-length}, we have given an upper bound on the number of pushdown letters (or, pushes) in an initial credit. Next, we give an upper bound on the word length of the group elements between two pushdown letters.

\begin{restatable}{theorem}{initialCreditStructure}\label{thm:initial-credit-structure} 
   If $\exists$ wins $(q_{init}, [\lambda \gamma w \gamma  \lambda'])$ where $w$ is a group element with $\|w\| > n$, she also wins 
   $(q_{init}, [\gamma \lambda'])$.
\end{restatable}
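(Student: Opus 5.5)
The plan is to transfer $\exists$'s winning strategy from $(q_{init},[\lambda\gamma w\gamma\lambda'])$ to $(q_{init},[\gamma\lambda'])$. I would split each play according to whether it ever exposes the bottom part $\lambda\gamma w$. The only extra ingredient needed is a ``local'' claim: whenever the original strategy reaches $(q,[\lambda\gamma w])$, player $\exists$ also wins $(q,[\varepsilon])$. I expect that claim to follow from \cref{lemma:i-eligible} by contraposition, using determinacy.

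\textbf{Step 1 (simulation).} Fix a winning $\exists$-strategy $\sigma$ from $(q_{init},[\lambda\gamma w\gamma\lambda'])$. Player $\exists$ plays $\sigma$ from $(q_{init},[\gamma\lambda'])$ for as long as the segment $\gamma\lambda'$ has not been popped completely. Until then, the validity of every instruction depends only on the part of the stack above the lower segment. This holds because a pop is valid iff the top-most group element is $1$ and the popped letter is $\gamma$, and that is unaffected by replacing $\lambda\gamma w$ by $\varepsilon$. Hence the two games are in lockstep, with configurations $[\lambda\gamma w\cdot x]$ and $[x]$ respectively. In particular, branches that never pop all of $\gamma\lambda'$ remain winning.

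Consider a branch that does pop all of $\gamma\lambda'$, first reaching state $q$. In the original game it is now in $(q,[\lambda\gamma w])$, which is won by $\exists$ because it lies on a play consistent with $\sigma$. In the new game the branch is in $(q,[\varepsilon])$. From there $\exists$ switches to a winning strategy for $(q,[\varepsilon])$, which exists by the claim below.

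\textbf{Step 2 (the claim).} Suppose towards a contradiction that $\exists$ does not win $(q,[\varepsilon])$. By Borel determinacy, $\forall$ then has a finite winning strategy tree $\tau$ from $(q,[\varepsilon])$; let $i$ be its stack height. Each leaf of $\tau$ is an invalid pop, of one of two kinds. Either it pops an empty stack at the bottom level, or it is invalid for ordinary reasons (wrong top element at some level, or invalid after pushes).

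Now rerun the state projection $\tau|_Q$ from $(q,\gamma)$. The bottom-level empty pops become $0$-valid pops reaching $[\varepsilon]$; collect their target states into $P$. All other leaves remain invalid pops, since the contents above the bottom are identical. Hence $\tau|_Q$ extended from $(q,\gamma)$ is a $(P,i)$-tree, so $q$ is $i$-eligible.

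The configuration $[\lambda\gamma w]$ has the form $[u\gamma]\cdot g$ with $g=[w]$ and $\|g\|>n=|Q|$. \Cref{lemma:i-eligible} therefore gives an $(\emptyset,i)$-tree rooted at $(q,[\lambda\gamma w])$, that is, a winning $\forall$-strategy there. This contradicts the fact that $\exists$ wins $(q,[\lambda\gamma w])$.

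\textbf{Main obstacle.} The delicate point is the bookkeeping in Step 2. I must check that moving the root from $(q,[\varepsilon])$ to $(q,\gamma)$ exactly turns bottom-level invalid pops into $0$-valid pops and changes nothing else. I must also check that the assumptions behind \cref{lemma:i-eligible} still hold after the normalisation made earlier, namely that pushes only leave $\forall$-states. I would verify both by induction along the branches of $\tau$, comparing stack contents level by level. The lockstep argument in Step 1 is routine by comparison.
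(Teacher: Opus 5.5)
Your proof is correct and essentially the paper's argument. Both prove the same claim, that $(p,[\varepsilon])$ is won by $\exists$ whenever the simulated play exposes $(p,[\lambda\gamma w])$, by contradiction from a finite winning $\forall$-tree of stack height $i$ together with the fact that short witness trees cannot cancel a group element of word length $>n$. The paper reruns the shortcut and small-witness argument inline; you instead reroot at $(p,\gamma)$ to get $i$-eligibility and cite \cref{lemma:i-eligible}, and you also treat branches that never expose $\lambda\gamma w$, which the paper glosses over.

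One small correction: after rerooting, a bottom-level pop becomes a $0$-valid pop only if the bottom group element is trivial at that moment, and otherwise remains invalid. Either way the rerooted tree is still a $(P,i)$-tree, so nothing changes.
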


\noindent \textit{Proof Sketch of~\cref{thm:initial-credit-structure}}. 
 Let $\tau$ be a winning $\exists$-strategy tree rooted at $(q_{init}, [\lambda  \gamma  w \gamma  \lambda'])$. Every branch of $\tau$ reaches some $(p, [\lambda  \gamma  w])$, which is won by $\exists$ (as all nodes of $\tau$ are). We claim that $(p, [\varepsilon])$ is also won by $\exists$ because otherwise, there would be a winning $\forall$-strategy rooted at $(p, [\varepsilon])$. Then there would also one of height $\leq |Q|+1$ where every leaf transition is a $\bar{\gamma}$ (recall from the small witness trees in the proof of~\cref{lemma:i-eligible}). This small winning $\forall$-strategy tree would allow $\forall$ to win $(p, [\lambda  \gamma  w])$ as non of its branches is long enough to pop $w$ before reaching a pop $\bar{\gamma}$.
 By this contradiction, we conclude that indeed, for all such $p$, $(p, [\varepsilon])$ is won by $\exists$. It then follows that $(q_{init}, [\gamma  \lambda'])$ is also won by $\exists$. Hence, $\exists$ wins $(q_{init}, [\gamma \lambda'])$. \qed

\begin{proof}[Proof of \cref{thm:initial-credit-structure}]
 Let $\tau$ be a winning $\exists$-strategy tree rooted at  $(q_{init}, [\lambda \gamma w \gamma \lambda'])$. Every branch of $\tau$ reaches some $(p, [\lambda  \gamma  w])$, which is won by $\exists$ (as all nodes of $\tau$ are). We claim that $(p, [\varepsilon])$ is also won by $\exists$. The claim implies $(q_{init}, [\gamma \lambda'])$ is won by $\exists$ -- i.e. $\exists$ wins $(q_{init}, [\gamma \lambda'])$
 -- and ends the proof. 

 Assume to the contrary that some $(p, [\varepsilon])$ is won by $\forall$. Then, there exists a winning $\forall$-strategy tree rooted at $(p, [\varepsilon])$. Let $\tau^p$ be this (finite) $\forall$-strategy tree, where every branch stops at their first invalid pop $\bar{\gamma}$. 
 Let the stack height of $\tau^p$ be bounded by $i$ (such an $i$ exists by~\cref{claim:exists-Ri}). Then $\tau^p$ is a $(\emptyset, i)$-tree rooted at $(p, [\varepsilon])$. In particular, note that all leaf transitions of $\tau^p$ are (invalid) pop. 
 We obtain a tree $\tilde{\tau}^p$ from $\tau^p$ by iteratively replacing the subtrees starting with pushes by transitions in $\mathcal{A}^{i}_{\gr}$ as in the proofs of~\cref{lemma:i-eligible,lemma:R-sets-equivalence} for $i+1$. Then  $\tilde{\tau}^p$ has no pushes left, and have pops only at the leaf transitions, i.e. $\tilde{\tau}^p|_Q \in \mathcal{A}^{i}_{\gr} \times \pp$. As the tree automaton has $|Q|$-many states, there exists a small witness tree $t$ rooted at $p$ with height $\leq |Q|+1$ accepted by $\mathcal{A}^{i}_{\gr} \times \pp$. Using the $\forall$-strategy represented by $t$ rooted at $(p, [\lambda \gamma w])$ allows $\forall$ to win, as the branches of $t$ are not long enough to consume the group element $w$ before the pop at the leaf transition -- rendering all pops invalid. This gives us the required contradiction, and concludes the proof. 
 
\end{proof}

Now we bring everything together for the proof of \cref{thm:initial-credit-ub}.

\begin{proof}[Proof of \cref{thm:initial-credit-ub}]
For an initial credit $\lambda \in (V \cup \bar{V})^*$ \cref{thm:initial-credit-length} proves that $|\lambda|_{\gamma} < 2^n$ and \cref{thm:initial-credit-structure} proves that $\|\lambda\| \leq n$. 
\end{proof}

\subsection{Reducing $\UICEG(\gr)$ to $\FICEG(\gr)$}%
In this section, we will give a polynomial reduction from $\UICEG(\gr)$ to $\FICEG(\gr)$. Given a $\Pushdown(\Group)$ game arena $\gr$, we construct a polynomially sized $\Pushdown(\Group)$ game arena $\gr^\expp$ that allows $\exists$-player to pick any initial credit $\lambda$ that respects the structure given in~\cref{thm:initial-credit-ub}, and then enter the game $\gr$. It follows from~\cref{thm:initial-credit-ub} that $\FICEG(\gr)$ and $\UICEG(\gr)$ are won by the same player. As $\gr^\expp$ is polynomially sized, this gives an $\EXPTIME$ decision procedure for $\UICEG(\gr)$.

In order to construct the game $\gr^\expp$ for every $\gr$, we need a subgame that allows the existential player to collect any initial credit $\lambda$ s.t. $|\lambda|_{\gamma} < 2^n$ and $\|\lambda\|\leq n$. The difficulty in constructing such a subgame is, if we do it naively and add a loop on an $\exists$-controlled subgame with group elements or $\gamma$, the existential player takes this loop forever to win. So, we need a machinery that allows the universal player to punish the existential player if $\lambda$ violates the structure -- in particular, if $|\lambda|_{\gamma} \geq 2^n$. 

In order to count up to $2^n$ with a polynomially sized subgame, we will adopt a technique from DFAs. 

\DFAs*
\begin{proof}[Proof of \cref{lemma:DFAs}]
We explicity define the DFA set $(D_i)_{i \in [1, m]}$. For the sake of simplicity in explaining the acceptance condition, we will picture imaginary $\bot$ symbols at the start and the end of every word. Then, verbally $D_1$ accepts all words over $\Sigma$ that have at most one $\gamma_1$ inbetween each $\bot$ (so, at most one $\gamma_1$ in total). $D_2$ accepts all words that have at most one $\gamma_2$ inbetween each $\gamma \in \{\gamma_1, \bot\}$. $D_3$ accepts all words that have at most one $\gamma_3$ inbetween each $\gamma \in \{\gamma_1, \gamma_2, \bot\}$, and so on. 
 
 $D_i$ is given in~\cref{fig:DFA-i}. The accepted language of $D_i$ is the set of all words over $\Sigma$ that visit at most one $\gamma_i$ inbetween each pair of letters from the set $\{\gamma_1, \ldots, \gamma_{i-1}, \bot\}$. 
Let $w$ be a word of maximal length in $\bigcap_{i \in [1,m]} L(D_i)$. Then, $w$ contains one $\gamma_1$. The number of $\gamma_2$s it can contain is one less than the number of $\{\bot, \gamma_1\}$ it contains, as it has one $\gamma_2$ inbetween each character from this set. In general, for each $i \in [1, m]$, $w$ contains $2^{i-1}$-many $\gamma_i$s. So the length of $w$ is $\Sigma_{i \in [1, m]}2^{i-1} = 2^{m} -1$, ignoring the $\bot$ symbols.
It is easy to see that $w$ is the unique word of this length.

Lastly, let us observe that $L$ is suffix-closed. Let $w'' \cdot w' = w \in L$. We want to show that $w' \in L$. As we will be investigating the membership of $w$ and $w'$ in $L$, we will again imagine $\bot$ symbols at the start and the end of these words. Assume $w' \not \in L$. Then there exists a $D_i$, $w'$ is not in the language of. Then, $w'$ has at least two $\gamma_i$ inbetween some characters from the set $\{\gamma_1, \ldots, \gamma_{i-1}, \bot\}$. Let $w' = \bot u \gamma_i u' \gamma_i u'' \bot$. 
If $u$ contains a $\gamma_j$ with $j \in [1, i-1]$, then $w$ is also not in the language of $D_i$. If $u$ does not contain any such $\gamma_j$, $w = \bot \cdot w'' \cdot w'$ and either $w''$ contains such a $\gamma_j$, or the $\bot$ in the beginning serves as this violating character. In any case, $w$ cannot be in the language of $D_i$; which causes a contradiction.

\end{proof}

\begin{restatable}{theorem}{Gexp}\label{thm:Gexp}
There exists a polynomial sized $\Pushdown(\Group)$ game arena $\gr^{\expp}$ such that $\exists$-player wins $\UICEG(\gr)$ if and only if she wins $\FICEG(\gr^\expp)$.
\end{restatable}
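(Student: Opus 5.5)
We construct $\gr^\expp$ as in \cref{fig:G-expp}, with some care about the letters. We use the graph with pushdown letters $\gamma_0,\gamma_1,\ldots,\gamma_n$ and group vertices $W$ (afterwards \cref{single-unlooped-node} brings this back to a single unlooped vertex in polynomial time). The arena has three parts.
\begin{enumerate}
\item A guessing phase. From $p_{in}$, $\exists$ pushes the bottom marker $\gamma_0$. She then loops through $p_0,\ldots,p_n$, multiplying at most $n$ generators, and pushes some $\gamma_i$ with $i\in[1,n]$; that push edge leads to $\forall$'s state $\texttt{ch}$. At $p_n$ she may also exit into $\gr[\gamma\gets\gamma_{[1,n]}]$. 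There every push of $\gamma$ becomes a choice, handed to $\forall$ or expanded as a small gadget, of any $\gamma_i$. Every pop of $\gamma$ becomes a gadget that pops any $\gamma_i$; $\forall$ may not cheat with $\gamma_0$.
\item A challenge point. At $\texttt{ch}$, $\forall$ either returns to $p_0$ or enters one of the checker subgames $\texttt{D}_i$.
\item The checkers. In $\texttt{D}_i$, $\exists$ pops the guessed stack from the top while tracking the state of $D_i$. Between pops she must cancel the group element using at most $n$ generators, via the $(n{+}1)$-state chains in $\mathbf d^i$ and $\tilde{\mathbf d}^i$. Popping $\gamma_0$ from an accepting position reaches $\Smiley$. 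A violation leads to $\Sadey$, where she is forced to pop $\gamma_0$ again.
\end{enumerate}
Since $D_i$ reads the stack from the top, the relevant language is the reversal; I would use that the maximal word of \cref{lemma:DFAs} is a palindrome, or simply reverse the DFAs. The arena clearly has polynomial size.

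To show that a $\UICEG(\gr)$ win yields a $\FICEG(\gr^\expp)$ win, take by \cref{thm:initial-credit-ub} a winning credit $\lambda=w_{r+1}\gamma w_r\cdots\gamma w_1$ with $r<2^n$ and all $\|w_i\|\le n$. Then $\exists$ pushes $w_{r+1}$, followed by the letters of the length-$r$ suffix $s$ of the maximal word $w_\cap$ interleaved with the short $w_i$, and finally enters $\gr[\gamma\gets\gamma_{[1,n]}]$. At every challenge the pushed word is a suffix of a suffix of $w_\cap$, so by suffix-closedness it lies in every $L(D_i)$, and every group entry is cancellable with $\le n$ letters; thus she wins every $\texttt{D}_i$. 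In the simulation she copies her winning strategy from $(q_\init,[\lambda])$. Pops of her own choice use the right $\gamma_i$. The letter identity is irrelevant for the viability of pops, since only the group entry and the unlooped-letter match matter.

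For the converse, suppose $\exists$ wins $\gr^\expp$. She cannot loop forever. If the stack reaches $2^n$ letters above $\gamma_0$, its word is not in $\bigcap_i L(D_i)$ by \cref{lemma:DFAs}. Using suffix-closedness, $\forall$ challenges exactly at the first push that makes the stack word leave the intersection, and then enters the violated $\texttt{D}_i$. There $\exists$ is driven to $\Sadey$, or runs out of allowed cancellation, and must make an invalid pop. So on every play compliant with her strategy she enters $\gr[\gamma\gets\gamma_{[1,n]}]$ with some stack $\lambda'$. Replacing every $\gamma_i$ by $\gamma$ gives a credit from which she wins $\gr$, since validity is preserved under the letter identification once all pushes and pops in the simulation are uniform over $\gamma_{[1,n]}$.

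The main obstacle is the checker gadget. It must let $\exists$ win precisely when the guessed stack conforms, must never let her escape by popping the bottom marker early, and must allow no invalid pop that punishes her unfairly. Specifically, the $\Sadey$ branch and the tracking of the group element with at most $n$ letters need an exact invariant. I would formulate one lemma: $\exists$ wins $\texttt{D}_i$ from the current stack iff that stack, read top-down, is accepted by $D_i$ and every entry has word length $\le n$. I would prove it by induction on the number of pushed letters.
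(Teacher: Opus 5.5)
Your proposal is correct and is essentially the paper's own proof. It uses the same arena: the guessing loop $p_0$--$p_n$ with the challenge state $\texttt{ch}$ after each push, the checker subgames $\texttt{D}_i$ with $(n{+}1)$-state cancellation chains, and the relabelled copy $\gr[\gamma\gets\gamma_{[1,n]}]$. It also uses the same argument: \cref{thm:initial-credit-ub} plus suffix-closedness of $\bigcap_i L(D_i)$ for the forward direction, and the forced challenge once the stack reaches $2^n$ letters for the converse.

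Two details should be made explicit, as the paper does. First, every simulated pop, including pops at $\forall$-states of $\gr$, must pass through a fresh $\exists$-state so that $\exists$ chooses which $\gamma_i$ is popped; otherwise $\forall$ could force an invalid pop by naming the wrong letter. Second, in the converse the claim that $\exists$ enters $\gr[\gamma\gets\gamma_{[1,n]}]$ holds for the play in which $\forall$ never challenges, not for every compliant play.
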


\begin{proof}[Proof of \cref{thm:Gexp}]
Recall that $\gr = (\Gamma, Q = Q_{\exists}\sqcup Q_\forall, \delta, q_{init})$ where $\Gamma \in \Pushdown(\Group)$ where the pushdown alphabet is $\{\gamma\}$ and the group alphabet is $W$. Then formally, $\gr^\expp = (\Gamma^\expp, Q^{\expp}  = Q^{\expp}_{\exists} \sqcup Q^{\expp}_\forall, \delta^{\expp}, p_{init})$ where $\Gamma^\expp \in \Pushdown (\Group)$  where the pushdown alphabet is $\{\gamma_0, \ldots, \gamma_n\}$ and the group alphabet is $W$, $Q^\expp_\exists = Q_\exists \cup \{p_{init}\} \cup \{p_j, p'_j, d^i_j, \tilde{d}^i_j \mid j \in [0, n], i \in [1, n]\} \cup \{d_i^1, d_i^2 \mid i \in [1, n]\} \cup \{\Smiley, \Sadey\}$ and $Q^\expp_{\forall} = Q_\forall \cup \{\texttt{ch}\}$. 

The game is given in~\cref{fig:G-expp}. The initial state $p_{init}$ leads to $p_0$ by pushing a new pushdown letter, $\gamma_0$, which is not pushed anywhere else in the game, and is popped by the existential player in the subgames $\texttt{D}_i$ to check the emptiness of the stack (in~\cref{fig:DFA-i} (b)).
Intuitively, at the $p_0 - p_n$ loop, the existential player tries to accumulate an initial credit that allows her to win the game $\UICEG(\gr)$. By~\cref{thm:initial-credit-ub}, we know that if such an initial credit exists, then there exists one such that $\lambda = w_{r+1} \gamma w_{r} \cdots \gamma w_1$ with $r < 2^n$ and $\|\lambda\| \leq n$.  
So, the existential player will try to accumulate this stack content $\lambda$, with one ceveat: She will choose the $\gamma$'s to fit the unique longest word $w_\cap$ accepted by $\cap_{i \in [1,n]} D_i$. Then the stack content $\lambda'$ that the existential player aims to accumulate is $w_{r+1} \gamma^{r} w_r \gamma^{r-1} \cdots \gamma^1 w_1$ such that $\gamma^{r} \cdots \gamma^1$ (called the push-sequence of $\lambda'$) is the $r$-length suffix of $w_{\cap}$\footnote{Actually, here we need the reverse of $w_{\cap}$ as the initial credit will be popped by $\bigcap \texttt{D}_i$, but we simply use $w_{\cap}$ as it is a palindrome.}. She accumulates $\lambda'$ by taking the loop on $p_{n}$ $r$-times by and pushing $\lambda'$ in reverse. The last time she is at $p_0$, it's stack has $w_{r+1}\gamma^{r} w_{r} \gamma^{r-1} \cdots w_2 \gamma^1$ which she multiplies with $w_1$ as she reaches $p_{n}$, then she proceeds to the initial state of the game arena $\gr[\gamma \gets \gamma_{[1,n]}]$: The game arena $\gr$ where each push transition $p \xrightarrow{\gamma} r$ is replaced by $n$-many push transitions $p \xrightarrow{\gamma_i} r$ for each $i \in [1, n]$. Similarly each pop transition $p \xrightarrow{\bar{\gamma}} r$ is replaced by $n$-many pop transitions $p' \xrightarrow{\bar{\gamma}_i} r$ for each $i \in [1, n]$ where $p' = p$ if $p \in Q_\exists$; and otherwise, $p'$ is a new existential state introduced together with the transition $p \xrightarrow{\varepsilon} p'$.
It is not hard to see $\exists$-player wins $\gr[\gamma \gets \gamma_{[1,n]}]$ with the initial credit $\lambda'$ if and only if she wins $\gr$ with the initial credit $\lambda$.

While accumulating the initial credit $\lambda'$, the universal player can challange the existential player after every push $\gamma_i$, at his state $\texttt{ch}$, and claim that the push-sequence of the accumulated initial credit is not in $\bigcap_{i \in [1, n]} L(D_i)$. 
He does so by taking the edge from $\texttt{ch}$ to $\cap_{i \in [1, n]} \texttt{D}_i$: this is a subgame whose initial states are $\mathbf{d}^i$ in $\texttt{D}_i$ for each $i \in [1, n]$, where $\texttt{D}_i$ is the game-version of the DFA $D_i$ given in~\cref{fig:DFA-i} (b). 
When the universal player enters the challange, he proceeds to pick the DFA $D_i$ he claims the initial credit violates, and moves to the initial state $\mathbf{d}^i$ of $\texttt{D}_i$. 
The game-versions $\texttt{D}_i$ are different from the DFAs $D_i$ in two ways.
First, the DFAs push $\gamma_i$ whereas the game-versions pop them, as they are trying to \emph{read} the initial credit. As $w_\cap$ is a palindrome, $\cap_{i \in [1,n]} D_i$ pushes it and $\cap_{i \in [1,n]} \texttt{D}_i$ pops it. 
Secondly, as the initial credit is allowed to have group elements of word length at most $n$ inbetween each push, each state of $\texttt{D}_i$ allows $n$-word length group elements to be consumed inbetween each pop (which is hidden in the figure). We assure this by putting a sequence of $n$-states --a copy of the sequence $p_0\to \cdots \to p_n$-- inside ``states'' $\mathbf{d}^i$ and $\tilde{\mathbf{d}}^i$. So the ``state'' $\mathbf{d}^i$ actually consists of $n$-states $d^i_0, \ldots, d^i_n$ such that there is an edge $d^i_{j-1} \xrightarrow{g} d^i_{j}$ for each $j \in [1,n]$, and for each $g \in W \cup \bar{W} \cup \{\varepsilon\}$. Similarly, the ``state'' $\mathbf{\tilde{d}}^i$ consists of $n$-states $\tilde{d}^i_0, \ldots, \tilde{d}^i_n$. Every edge depicted as entering $\mathbf{d}^i$ and $\mathbf{\tilde{d}}^i$ in the figure, actually enters $d^i_0$ and $\tilde{d}^i_0$; and every edge depicted as leaving $\mathbf{d}^i$ and $\mathbf{\tilde{d}}^i$ actually leaves $d^i_n$ and $\tilde{d}^i_n$, respectively. 

If the accumulated initial created has a push-sequence that violates the language of $D_i$, or contains a group element of word length larger than $n$, then $\texttt{D}_i$ enters the state $\Sadey$, or pops a wrong $\gamma_j$ at $\mathbf{d}^i$ or $\mathbf{\tilde{d}}^i$, which causes the existential player to lose the game. Otherwise, the existential player correctly pops all the $\gamma_j$s, and reaches the first symbol $\gamma_0$ of the stack without reaching $\Sadey$. Then she pops $\gamma_0$, enters $\Smiley$ and wins the game. 

Assume the existential player wins $\UICEG(\gr)$. Let $\lambda$ be an initial credit with $|\lambda|_\gamma < 2^n$ and $\|\lambda\| \leq n$ such that 
 the existential player wins $\gr$ with the initial credit $\lambda$. If the universal player does not challenge the exitential player, she wins the game by accumulating the $\lambda'$ described above, and entering $\gr[\gamma \gets \gamma_{[1, n]}]$. If the universal player challanges, then the play enters a subgame $\texttt{D}_i$ where the stack content is a prefix of $\lambda'$.  The existential player wins the subgame if the stack content is exactly  $\lambda'$, as (i) $|\lambda'|_{\gamma} < 2^n$ and $\|\lambda\|\leq n$, and (ii) the push-sequence of $\lambda'$ is $w_\cap$. She also wins $\texttt{D}_i$ entered with any prefix of $\lambda'$ in the stack, as $\cap_{i \in [1, n]}{L(D_i)}$ is suffix-closed (therefore, $\cap_{i \in [1, n]}{\texttt{D}_i}$ is won by a prefix-closed set of stack content); and (i) naturally holds for prefices of $\lambda'$. So, the existential player wins $\gr^\expp$ with the initial credit $\varepsilon$ if she wins $\UICEG(\gr)$.

Now assume the universal player wins $\UICEG(\gr)$. Then there exists no initial credit $\lambda$ the existential player wins $\gr$, and therefore $\gr^\expp$, with. Then the only way for the existential player to win in $\gr^\expp$ is to avoid entering the game arena $\gr[\gamma \gets \gamma_{[1,n]}]$. For this, she nxeeds to take the loop $p_0 - p_n$ forever. This strategy of $\exists$-player allows the $\forall$-player wait until the accumulated stack content $\lambda'$ achieves the $\Pushdown$-length $2^n$ and challange at his state $\texttt{ch}$. As 
the push-sequence of $\lambda'$ is longer than $2^n$, it is not accepted by 
$\cap_{i \in [1,n]} D_i$. Let $D_i$ be a DFA that rejects the push-sequence of $\lambda'$. Then the universal player wins by entering the subgame $\texttt{D}_i$, as the existential player will not be able to pop the push-sequence of $\lambda'$ correctly in this subgame. 

We conclude that the existential player wins $\UICEG(\gr)$ if and only if she wins $\FICEG(\gr^\expp)$.
\end{proof}

\section{Additional Material for Section~\ref{sec:undecidability-results}}\label{sec:app-undecidability-results}
This section consists of undecidability proofs for the illegal graphs given in~\cref{fig:illegal-graphs}. In the first subsection, we provideundecidability proof for illegal graphs (iii) and (iv), reducing from pushdown energy games. In the next two subsections we provide the undecidability of illegal graphs (ii) and (i), respectively, using a novel gadget construction. As reductions in~\cref{app:undec-ii,app:undec-i} both use counter machines, we introduce counter machines in the beginning of~\cref{app:undec-ii}. 
The results of the last two subsections together prove~\cref{thm:undec-i-ii}. 

\subsection{Undecidability of illegal graphs (iii) and (iv)}\label{app:undec-iii-iv}
We first show the equivalence of pushdown energy games(\cite{AbdullaAHMKT14}) and \nrgs~ on $\Pushdown \times \VASS$ graphs. Then using this equivalence, we will prove~\cref{thm:undec-iii-iv}.

\subparagraph*{Equivalence of pushdown energy games(\cite{AbdullaAHMKT14}) and \nrgs~ on $\Pushdown \times \VASS$ graphs}
An $m$-dimensional pushdown energy game is a tuple $(Q^\texttt{E} = Q^\texttt{E} _\exists \cup Q^\texttt{E} _\forall, \Sigma^\texttt{E} , \delta^\texttt{E}, m)$ where $Q^\texttt{E} $ and $\Sigma^\texttt{E} $ are as in non-termination games on pushdown game graphs, $m$ is the dimension of energy and $\delta^\texttt{E}  \subseteq Q^\texttt{E}  \times \Sigma^\texttt{E} \times Q^\texttt{E}  \times (\Sigma^\texttt{E})^* \times \{-1, 0, 1\}^m $ introduces a game graph over $Q^\texttt{E}  \times (\Sigma^\texttt{E})^* \times \Z^m$ as follows:
If $(p, \gamma, q, w, c) \in \delta^\texttt{E}$, then $(p, \gamma v, E) \to (q, w v, E')$ where $E' = E + c$. 

Here, the universal player wins every play in which the $\Z$-component reaches a negative value. The existential player wins every other play. 
We assume WLOG that every \emph{configuration} $(q, w, E)$ has an outgoing transition.
This assumption can be made, since every state can be enriched with dummy outgoing pop transitions for every pushdown letter $\gamma$ leading to losing or winning trap states that have self loops of $-1$ or $0$ accordingly.

We will show that an $m$-dimensional pushdown energy game is equivalent to a \nrg~ on a graph $\Gamma \in \Pushdown \times \VASS$ graph such that the $\VASS$ subgraph is an $m$-clique, i.e. $\Gamma = (U \cup V, E)$ where there are no edges on the set $U$ and $(V,E)$ is a clique on $m$-nodes. Note that the reduction is not polynomial, but we do not need it to be, since we use it to show that \nrg~ games on $\Pushdown \times \VASS$ inherit the undecidability of pushdown energy games.

First, we reduce a \nrg~ on graph $\Gamma = (U \cup V, E)$ to a $m$-dimensional pushdown energy game with stack alphabet $\Sigma^\texttt{E} = U$.
We use the counters to mimic the increment and decrement of letters in $V = \{v_1, \ldots, v_m\}$ in a straighforward manner: We use counter $i$ increments to mimic $v_i$ transitions, and counter $i$ decrements to mimic $\bar{v}_i$ transitions. What is slightly less obvious is how to mimic the \emph{forbidden pop transitions}, i.e. pop transitions that give rise to non-right-invertible pushdown configurations. 
These transitions are allowed for the universal player in~\nrgs, letting him win the game; but are disallowed in the pushdown energy game. 
We raplace every pop $p \xrightarrow{\bar{\gamma}} q$ in~\nrg~where $p$ is a universal node, with three transitions $p \xrightarrow{\varepsilon} p' \xrightarrow{\bar{\gamma}} q$ where $p'$ is a new existential state. As before, this shifts the burden of the non-right-invertibility to the existential player. Then, we add additional $\varepsilon$-transition from $p'$ to a trap state that is $\forall$-winning and has an $(-1)$-self loop. This makes sure that every configuration has an outgoing transition, and still the universal player wins in both games, if the state $p$ is reached with stack content with top-of-stack other than $\gamma$. We do not need to replace pop transitions on existential states, as the existential player already avoids these states in ~\nrgs.

Finally, we reduce an $m$-dimensional pushdown energy game to a \nrg~ on $\Gamma$. We use the letters of the $\VASS$ alphabet $V$ to mimic the $\Z$ counters in the pushdown energy game in a straightforward manner.
To mimic forbidden transitions, we apply a gadget similar to the one from the reduction of non-termination games on pushdown graphs.  
 Whenever there is a pop transition $p \xrightarrow{\bar{\gamma}} q$ for $\gamma \in \Sigma^\texttt{E}$ defined on a universal state $p$ in the pushdown energy game, we replace it with the transitions $p \xrightarrow{\varepsilon} p' \xrightarrow{\bar{\gamma}} q$ and the transitions $p' \xrightarrow{\bar{\gamma}_i} \Smiley$ for every $\gamma_i \in \Sigma^\texttt{E} \setminus \{\gamma\}$ where $p'$ is a fresh existential state and $\Smiley$ is an $\exists$-winning state with a $0$-self loop. This makes sure that the universal player takes the pop transitions in the reduced \nrg~ only when it is reached with $\gamma$ as top-of-stack-symbol.
 Note that again, we do not need to alter pop transitions originating from $\exists$-player nodes, as $\exists$-player already avoids them in the ~\nrg.

\undeciiiANDiv*

\begin{proof}[Proof of~\cref{thm:undec-iii-iv}]
The undecidability (both $\FICEG$ and $\UICEG$ variants) of (iii), which corresponds to $2\text{-}\Pushdown \times 1\text{-}\VASS$, follows from the work of Abdulla et al~\cite{AbdullaAHMKT14}.  The authors show \enquote{For pushdown energy games, both the fixed and the unknown initial credit problem are undecidable, even if the energy is just single-dimensional}. As explained in~\cref{app:undec-iii-iv}, pushdown energy games with 2 pushdown letters and single-dimensional energy corresponds to $2$-$\Pushdown \times 1$-$\VASS$ in our framework.

The undecidability of (iv) follows from a reduction from (iii) to (iv): Assume that the vertices in graph (iii) and (iv) represent letters $a,b,c$ from left-to-right.
For a given valence game arena $\gr$ over the graph (iii), we will construct an equivalent valence game arena $\gr'$ whose over the graph (iv); both over the alphabet $V \cup \bar{V}$ where $V = \{a, b, c\}$.
Both arenas are on the same finite game graph, have the same initial state, and the transitions of $\gr$ are inherited directly from $\gr'$ except for the ones labelled $c$ or $\bar{c}$. The $c$ transitions of $\gr$ are turned into $cac$ transitions in $\gr'$; and similarly, the $\bar{c}$ transitions of $\gr$ are turned into $\bar{c}\bar{a}\bar{c}$ transitions in $\gr'$. Then the existential player wins $\gr$ if and only if she wins $\gr'$. This is intuitively because, we mimic the lack of right-invertibility of the letter $c$ in (iii) with the lac of right-invertibility of the compound element $cac$ in (iv), while preserving $c$'s commutativity with $b$, and lack of commutativity with $a$.
\end{proof}

\subsection{Undecidability of illegal graphs (ii)}\label{app:undec-ii}

As the results of this subsection, as well as the next, rely on a reduction from counter machines, we first introduce those.

\noindent \subparagraph*{Counter Machines} A counter machine is a finite state machine, equipped with a fixed number of non-negative counters. We consider 2-counter machines, which already capture the Turing-complete behaviour we desire. 

A 2-counter machine $\Minsky = (S, s_0, \delta_{\Minsky})$ consists of a finite set of control states $S$, an initial state $s_0$ and a finite set of transitions $\delta_{M}$ of the form $s \xrightarrow{\mu} s'$ where $s, s' \in S$ and $\mu \in \{\texttt{inc}(i), \texttt{dec}(i), \texttt{zero}(i) \mid i \in \{1, 2\} \,\}$, where $i$ denotes the counter being incremented, decremented, or being tested for zero. A \emph{configuration} of $\Minsky$ is a triple $(s, c_1, c_2) \in Q \times \mathbb{N}^2$. From a configuration $(s, c_1, c_2)$, the transition $w \xrightarrow{\texttt{dec}(i)} w'$ is enabled iff $s = w$ and $c_i \neq 0$; the transition  $w \xrightarrow{\texttt{zero}(i)} w'$ is enabled  iff $s = w$ and $c_i = 0$; and the transition  $w \xrightarrow{\texttt{inc}(i)} w'$ is enabled iff $s = w$. A \emph{run} of $\Minsky$ is a (finite or infinite) sequence of enabled transitions that start at the initial configuration $(s_0, 0, 0)$.  Without loss of generality, we assume $\delta_{\Minsky}$ to have an transition $s \to s'$ from each control state $s \in S$.

2-counter machines are Turing-complete models; therefore all non-trivial properties of them are undecidable, including the existence of an infinite run (or non-termination) problem. In our proofs, we exploit this source of undecidability.

\begin{restatable}{theorem}{undecii}\label{thm:undec-ii}  Let $\Gamma$ be one of the valence graphs represented by (ii). Then $\FICEG(\Gamma)$ and $\UICEG(\Gamma)$ are both undecidable. \end{restatable}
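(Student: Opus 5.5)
We reduce from the non-termination problem for deterministic 2-counter machines $\Minsky=(S,s_0,\delta_\Minsky)$, following the sketch in \cref{sec:undecidability-results}. Graph (ii) has vertices $a,b,c$, where $b,c$ are unlooped and adjacent, $a$ is unlooped and adjacent to neither, and $a$ may or may not carry a self-loop. We treat both variants uniformly. Counter values $(m_1,m_2)$ are encoded by the storage element $[b\,a\,b^{m_1}c^{m_2}]$. The bottom entry holds a single $b$, then comes the pushdown letter $a$, and the top entry holds the two VASS counters.

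The arena has an initial $\exists$-state whose only move is labelled $ba$, leading to (a copy of) $s_0$. All states copied from $S$ belong to $\exists$. Transitions $\texttt{inc}(1)$, $\texttt{inc}(2)$, $\texttt{dec}(1)$, $\texttt{dec}(2)$ become edges labelled $b,c,\bar b,\bar c$. Because $b,c$ commute and are unlooped, a decrement is right-invertible exactly when the counter is positive, so an illegal decrement loses for $\exists$. Each $\texttt{zero}(k)$ edge $s_i\to s_j$ is replaced by the gadget of \cref{fig:gadgets-ii}. It consists of a $\forall$-state $q^\forall_{i,j}$ that either proceeds to $q_j$ or challenges by moving to the $\exists$-state $q^\exists_{i,j}$. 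There $\exists$ has a self-loop decrementing the untested counter ($\bar c$ for $k=1$, $\bar b$ for $k=2$), and an exit edge labelled $\bar a\bar b$ to a winning sink $\Smiley$ with an $\varepsilon$-loop.

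The key correctness lemma concerns the configuration $[bab^{m_1}c^{m_2}]$. We show that $[bab^{m_1}c^{m_2}\bar a\bar b]$ is right-invertible iff $m_1=m_2=0$, and otherwise it is not, regardless of whether $a$ is looped. For the ``if'' direction, $[ba\bar a\bar b]=1$. For the other direction, we argue via the normal forms of $\Mon_\Gamma$ (trace monoid modulo cancellation). Since $\bar a$ does not commute with $b,c$, the factor $\bar a$ is blocked by the nonzero counter letters and cannot cancel with $a$. The trailing $\bar b$ commutes with $c$, but it can only cancel a $b$ lying to the right of the blocked $\bar a$, and none exists. So a letter $\bar a$ or $\bar b$ remains unmatched after a $b$ or $a$ it cannot commute with. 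An unmatched barred letter with no matching unbarred letter reachable on its left makes the element non-right-invertible. The bottom $b$ is what prevents $a\bar a$ issues when $a$ is looped: if $a$ is looped, $a$ and $\bar a$ cancel in either order, but $b$ (unlooped, not adjacent to $a$) still blocks $\bar b$. I expect this normal-form argument to be the main technical point, and would prove it via the standard characterisation of $\RInv{\Mon_\Gamma}$ from \cite{DBLP:conf/lics/DOsualdoMZ16}: an element is right-invertible iff it has a representative without letters from $\bar V$ after reduction.

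With this lemma, we argue the game equivalence. If $\Minsky$ runs forever, $\exists$ follows the run. Decrements are always legal. When $\forall$ challenges a correct zero test, $\exists$ empties the untested counter with her loop and then exits to $\Smiley$ legally, so she wins. Conversely, suppose the run halts or would require a decrement of a zero counter (we may assume halting means reaching a state with no enabled move). Then $\exists$ either makes an illegal decrement or cheats at some zero test with the tested counter positive. At that point $\forall$ challenges. Looping forever in $q^\exists_{i,j}$ eventually performs an illegal decrement, and exiting yields a non-right-invertible element by the lemma, so $\forall$ wins. Hence $\exists$ wins $\FICEG$ with empty credit iff $\Minsky$ does not terminate.

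For $\UICEG$, we make the first move push a fresh bottom marker, i.e.\ the initial edge is labelled $ba$ from any credit. To this end we argue that winning is independent of the credit: prepending a credit $[u]$ never makes a right-invertible element non-right-invertible, and never makes the lemma's bad elements right-invertible, because $\bar a\bar b$ is blocked by the nonempty top entry before reaching $u$. In the $m_1=m_2=0$ case the credit $u$ is untouched. Thus the existence of a winning credit is equivalent to winning from the empty credit, and both problems are undecidable.
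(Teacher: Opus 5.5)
Your proposal is correct and follows essentially the same route as the paper. It uses the same encoding $[bab^{m_1}c^{m_2}]$, with the bottom $b$ there to handle a looped $a$, and the same zero-test gadget exiting via $\bar a\bar b$. It also relies on the same observation that any credit below the initial $ba$ is irrelevant, which the paper phrases as transferring a winning strategy from credit $\lambda$ to $\varepsilon$.

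There are two small points to tighten, neither of which breaks the argument:
\begin{itemize}
\item The characterisation of $\RInv{\Mon_\Gamma}$ you invoke should only forbid barred letters of \emph{unlooped} vertices. For looped $a$, $[\bar a]$ is a unit, so an unmatched $\bar a$ alone does not make an element non-right-invertible.
\item In the $\UICEG$ part you should also note that illegal decrements, including those in the gadget loop, stay illegal when a credit is prepended.
\end{itemize}
Both follow from an unlooped barred letter ($\bar b$ or $\bar c$) being trapped behind $a$ or $\bar a$, so your proof goes through unchanged.
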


\begin{proof}[Proof of~\cref{thm:undec-ii}]
 To show~\cref{thm:undec-ii}, we will reduce from the non-termination problem in 2-counter machines. %
We reduce a 2-counter machine $\mathcal{M}$ on the state set $V = \{v_0, \ldots, v_k\}$ with the initial state $v_{0}$ to a \nrg ~over game arena $\gr = (\Gamma, Q = Q_\exists \sqcup Q_\forall, \delta, q_{init})$ for $\Gamma = (V, E)$ given by (ii).
The proof works for both graphs: with and without the self-loop on $a$.

The game arena $\gr$ is on the state set $Q = \{q_0, \ldots, q_k\} \cup \{q_{init}\} \cup \{\Smiley\} \subseteq Q_\exists$ together with a set $Q' \subseteq Q_\forall \cup Q_\exists$ which comes from the gadgets given in~\cref{fig:gadgets-ii}, and will be detailed in the proof. There is a transition from $q_{init}$ to $q_0$ with the compound element $ba$, as well as an $\varepsilon$ self-loop on $\Smiley$. Clearly, every play that reaches $(\Smiley, [u])$ with a right-invertible $u$ is won by the existential player. For every transition $v_i \xrightarrow{\mu} v_j$ in $\mathcal{M}$, where $\mu \in \{\texttt{inc}(i), \texttt{dec}(i) \mid i \in \{1, 2\} \,\}$, there is a transition $q_i \xrightarrow{\mu'} q_j$ in $\gr$ where $\mu'$ is equal to $b$, $c$, $\bar{b}$ and $\bar{c}$ for $\mu$ equal to $\texttt{inc}(1)$, $\texttt{inc}(2)$, $\texttt{dec}(1)$ and $\texttt{dec}(2)$, respectively. After mimicing the increments of decrements of $\mathcal{M}$ easily using $b$ and $c$, we now need some machinery to mimic the zero tests. For the transition $v_i \xrightarrow{\texttt{zero}(k)} v_j$ in $\mathcal{M}$, we will add the gadget in~\cref{fig:gadgets-ii} between $q_i$ and $q_j$. %
The gadget introduces a universal player state $q^{\forall}_{i, j}$, whose role is to either accept the claim of the zero test by allowing the existential player to move to $q_j$; or to challange the claim by sending her to the state $q^{\exists}_{i, j}$. The existential player wins from $q^{\exists}_{i, j}$ if and only if the configuration is free of the letter mimicing the counter value checked for zero -- i.e. the play reaches $(q^{\exists}_{i, j}, [w])$ where $ w = bac^m$ for $\texttt{zero}(1)$-gadget and $w = bab^m$ for $\texttt{zero}(2)$-gadget, for some $m \in \N$. In $\texttt{zero}(1)$-gadget, the self-loop on $q^{\exists}_{i, j}$ allows the existential player to decrement as many $c$'s as desired; and in $\texttt{zero}(2)$-gadget, as many $b$'s as desired.

Notice that the existential player cannot take this self-loop forever, as the number of $c$s and $b$s cannot go negative (since $\bar{c}$ and $\bar{b}$ are non-right-invertible). Then eventually, she needs to take the $\bar{a}\bar{b}$ transition to $\Smiley$. However, this transition can be taken only if the configuration does not contain any $b$ or $c$s after the $ba$, as $a$ does not commute with neither $b$ nor $c$, meaning, $b\bar{a}\bar{b}$ and $c\bar{a}\bar{b}$ are both non-right-invertible\footnote{Note that $b\bar{a}$ and $c\bar{a}$ are already non-right-invertible if the first vertex $a$ in (ii) does not have a self-loop. Therefore, the initial transition of $a$ (instead of $ba$) and the popping transition of $\bar{a}$ between $q^\exists_{i, j}$ and $\Smiley$ (instead of $\bar{a}\bar{b}$) is sufficient in this case. However, when this state has a self-loop, we need the compound element $ba$ to have the same blocking effect.}. So, this transition can only be taken if $w\bar{a}\bar{b}$ is right invertible, which requires $w$ to have the suffix $ba$. In $\texttt{zero}(1)$ case, if the first counter was zero at $q_{i}$ in $\mathcal{M}$ as claimed, then in the $\texttt{zero}(1)$-gadget the existential player arrives at the configuration $(q^{\exists}_{i, j}, [w])$ with $w = ba\cdot c^m$, takes the self-loop $m$-many times, and takes the transition to $\Smiley$ by removing $ba$, and wins the game. On the other hand, if the first counter was non-zero at $q_{i}$ in $\mathcal{M}$, then she reaches $(q^{\exists}_{i, j}, [w'])$ with $w'= ba\cdot b^k c^m$ and can never achieve a configuration $(q^{\exists}_{i, j}, [w])$ where $w$ has a suffix $ba$ to allow the transition to $\Smiley$; therefore losing the game. 

There exists an infinite path in $\mathcal{M}$ if and only if $\exists$-player wins $\FICEG(\gr)$: Let $\pi$ be an infinite path in $\mathcal{M}$. Then as $b$ and $c$ commute in (i), all increment and decrements of counter $1$ and counter $2$ in $\pi$ are easily mimiced by the existential player in $\gr$ by the push and pops of $b$ and $c$. The zero tests $v_i \xrightarrow{\texttt{zero}(k)} v_j$ for $k \in \{1, 2\}$ in $\pi$ branch into two plays in $\gr$: one that is challanged by the universal player and immediately reaches $\Smiley$, and another that is unchallenged that mimic $\pi$ through the gadget states $q_i \xrightarrow{\varepsilon} q^\forall_{i, j} \xrightarrow{\varepsilon} q_j$. If there exist no infinite paths in $\mathcal{M}$, then\footnote{As we assume no dead-ends.} every path in $\mathcal{M}$ ends with an invalid zero-test. Look at such a path $\pi = p^1 \ldots p^m$. Then $p^m$ has some outgoing edges, each with an invalid zero-test, that is $p^m \xrightarrow{\texttt{zero}(k)} p$ for some $p \in V$. Let $\Pi$ be the set of all such paths in $\mathcal{M}$. 
The universal player has the following winning strategy in $\FICEG(\gr)$: On every history that is a strict prefix $\pi'$ of a play $\pi \in \Pi$, take the non-challenging decision on the gadget state $q^\forall_{i, j}$, and mimic the path $\pi'$. Whenever the history of a play becomes equivalent to $\pi$ for some $\pi \in \Pi$ (minus the gadget states), take the challange decision on the gadget state $q^\forall_{i, j}$, forcing the play to reach $q^\exists_{i, j}$. As the number of $b$ and $c$s mimic the values of counters $1$ and $2$, respectively, the state $q^\exists_{i, j}$ is reached with a wrong number of $b$s or $c$s (depending on the zero-test) after the initial $ba$, and therefore the existential player loses the game.

Observe that the fixed initial empty credit plays no role in the proof. 
That is, existential player wins $\FICEG(\gr)$ if and only if she wins $\UICEG(\gr)$. The \enquote{if} direction is trivial. To see the \enquote{only if} direction, assume the existential player loses from $q_{init}$ with an empty initial credit, but wins with some initial credit $\lambda$. %
Let $\sigma$ be an existential strategy in $\gr$ that wins $(q_{init}, [\lambda])$. We will show that $\sigma$ also wins $(q_{init}, [\varepsilon])$. The only places in the game that might cause a $\sigma$-play $\pi$ starting from $(q_{init}, [\lambda])$ to be winning, and a $\sigma$-play $\pi'$ starting from $(q_{init}, [\varepsilon])$ to be losing are the branches of zero-test gadgets where the universal player challenges the claim, and sends the play to $q_{i, j}^\exists$. The play $\pi'$ reaches $(q_{i, j}^\exists, [w])$ with $ w = ba b^r c^m$, if and only if $\pi$ reaches $(q_{i, j}^\exists, [w'])$ with $w' = \lambda ba b^r c^m$. WLOG, assume the gadget is a $\texttt{zero}(1)$ gadget. Then as $\pi$ is winning, $r = 0$ and $\pi$ reaches $\Smiley$ by taking the self-loop $m$-times, and proceeding to $\Smiley$ by removing $ba$. Then, $\pi'$ reaches $\Smiley$ with the same strategy. Therefore, the reduction works to show the undecidability of $\UICEG(\gr)$ too. 
\end{proof}

\subsection{Undecidability of illegal graphs (i)}\label{app:undec-i}

\begin{restatable}{theorem}{undeci}\label{thm:undec-i} Let $\Gamma$ be one of the valence graph represented by (i). Then $\FICEG(\Gamma)$ and $\UICEG(\Gamma)$ are both undecidable. \end{restatable}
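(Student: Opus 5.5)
We reduce from non-termination of 2-counter machines, as in the proof of \cref{thm:undec-ii}, using the gadgets of \cref{fig:gadgets-i}. In graph (i), $a$ is the unlooped pushdown letter and is not adjacent to $b$ or $c$; $b$ is unlooped and $c$ is looped, and $b$ and $c$ commute. A machine configuration $(s_i,m_1,m_2)$ is encoded as $(q_i,[bab^{m_1+m_2}c^{m_2}])$. The initial transition $q_\init\xrightarrow{ba}q_0$ places the bottom marker $ba$, and $\Smiley$ carries an $\varepsilon$ self-loop.

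The simulation works as follows. $\texttt{inc}(1)$ becomes $b$ and $\texttt{inc}(2)$ becomes $bc$. $\texttt{dec}(1)$ and $\texttt{dec}(2)$ use the left gadget of \cref{fig:gadgets-i} with labels $\bar b$ and $\bar b\bar c$ respectively. After the decrement, $\forall$ may challenge by moving to $q^\exists_{i,j}$. There, $\exists$ may loop on $\bar b$ or $\bar b\bar c$ and then take $\bar a\bar b$ to $\Smiley$.

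The key invariant concerns a stored element $bab^{x}c^{y}$ with $y\in\Z$, which is right-invertible iff $x\ge 0$. Claim: from $(q^\exists,[bab^xc^y])$ with $x\ge0$, $\exists$ wins iff $y\ge0$ and $x-y\ge0$.

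For the \emph{if} direction, $\exists$ takes the $\bar b\bar c$ loop $y$ times and the $\bar b$ loop $x-y$ times. This reaches $ba$, and $\bar a\bar b$ then yields $1$.

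For the \emph{only if} direction, each loop decrements $b$, so $\exists$ can loop only finitely often before an invalid decrement. Taking $\bar a\bar b$ is valid only if the remaining $b$-exponent is $0$ and the $c$-exponent is $0$. The reason is that $c^{k}\bar a$ with $k\neq0$ is not right-invertible, because $a$ is unlooped and does not commute with $c$. Any mix of loops changes $(x,y)$ by $(-1,0)$ or $(-1,-1)$, so reaching $(0,0)$ requires exactly $0\le y\le x$. Hence a challenge after a decrement lets $\forall$ win precisely when the decrement drove $m_1$ or $m_2$ negative, and honest decrements survive.

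Zero tests use the right gadget. For $\texttt{zero}(1)$ the $\exists$-loop is $\bar b\bar c$; for $\texttt{zero}(2)$ it is $\bar b$. With loop $\bar b\bar c$ only, from $bab^{m_1+m_2}c^{m_2}$, $\exists$ reaches $ba$ iff $m_1+m_2=m_2$, i.e.\ $m_1=0$. With loop $\bar b$ only, she needs $c$-exponent $0$, i.e.\ $m_2=0$, and then depletes $b^{m_1}$. Termination of the loop again follows because every loop contains $\bar b$. This is exactly the obstacle noted in the text: looping $\bar c$ alone would let $\exists$ stall forever.

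Correctness then follows as in \cref{thm:undec-ii}. If $\mathcal{M}$ has an infinite run, $\exists$ follows it, and every challenge is lost by $\forall$ by the claim. If $\mathcal{M}$ has no infinite run, $\exists$ must eventually either perform a decrement to a negative value or take a false zero test. In that situation $\forall$ challenges, and by the claim $\exists$ cannot reach $\Smiley$. $\exists$ also cannot stay in $q^\exists$ forever, and outside gadgets the arena mirrors $\mathcal{M}$.

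The main obstacle is verifying that $\exists$ has no other escape: no premature or alternative validity is possible once $c$-values are in $\Z$. This rests on the right-invertibility characterisation in $\Mon_\Gamma$ for graph (i). An element $[w_{2}aw_1]$ with $w_1$ over $\{b,c\}$ is right-invertible iff the $b$-projection of $w_1$ is a valid counter word. Applying $\bar a$ is valid iff $[w_1]=1$.

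For $\UICEG$, we argue as in \cref{thm:undec-ii}. An initial credit $\lambda$ sits below the marker $ba$, and every test only inspects the part above $ba$. Therefore a strategy winning from $[\lambda]$ also wins from $[\varepsilon]$, so both problems are undecidable.
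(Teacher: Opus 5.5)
\textbf{The proof covers only one of the two graphs represented by (i).} Your encoding $bab^{m_1+m_2}c^{m_2}$ and your gadgets are the same as the paper's. Your $(x,y)$-bookkeeping is also the same: each loop changes $(x,y)$ by $(-1,0)$ or $(-1,-1)$, so the exit is reachable iff $0\le y\le x$.

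In \cref{fig:illegal-graphs} the self-loop on $a$ in (i) is \emph{dotted}. So the statement also covers the variant where $a$ is looped. That variant is really used: it is the triple $y,u,x$ in Claim~2 of the proof of \cref{thm:without-illegal-subgraph}, where the vertex playing $a$ is looped.

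You assume $a$ is unlooped, and the justification behind your ``no other escape'' step fails otherwise. If $a$ is looped, then $[c^k\bar a]$ is right-invertible, and so is $[bab^xc^y\bar a]$ for every $x\ge 0$ and $y\in\Z$. Hence ``applying $\bar a$ is valid iff $[w_1]=1$'' is false in that case. The conclusion that the exit $\bar a\bar b$ is valid only at $(x,y)=(0,0)$ is still true, but not for the reason you give.

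\textbf{How to fix it.} The construction works for both variants because of the trailing $\bar b$ in the exit label. This is exactly why the marker is $ba$ rather than $a$; see the footnote in the proof of \cref{thm:undec-ii}, and the paper's remark that $[b\bar a\bar b]$ and $[c\bar a\bar b]$ are not right-invertible.

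Since $a$ is adjacent to neither $b$ nor $c$, $\Mon_\Gamma$ is the free product of $\Mon_{\{a\}}$ and $\Mon_{\{b,c\}}$. In a free product of monoids, an element is right-invertible iff every syllable of its reduced alternating form is right-invertible in its factor.

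Take $x\ge 0$ and $(x,y)\neq(0,0)$. The reduced form of $[bab^xc^y\bar a\bar b]$ is $b\cdot a\cdot b^xc^y\cdot\bar a\cdot\bar b$. Its last syllable $\bar b$ is not right-invertible in $\Mon_{\{b,c\}}$; to see this, project onto the bicyclic monoid of $b$, which is a homomorphism because $c$ is looped and commutes with $b$. For $(x,y)=(0,0)$ the element is $1$. Neither fact depends on whether $a$ is looped.

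The same reasoning justifies your $\UICEG$ transfer in the looped case. Let $[\lambda]$ be right-invertible and $v\in\{b,\bar b,c,\bar c\}^*$. Then $[\lambda bav]$ is right-invertible iff $[bav]$ is, and $[\lambda bav\bar a\bar b]$ is right-invertible iff $[bav\bar a\bar b]$ is.

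With this argument in place of your unlooped-$a$ justification, the rest of your proof goes through unchanged for both variants.
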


\begin{proof}[Proof of \cref{thm:undec-i}]

    We reduce a 2-counter machine $\mathcal{M}$ on the state set $V = \{v_0, \ldots, v_k\}$ with the initial state $v_{0}$ to a \nrg over game arena $\gr = (\Gamma, Q = Q_\exists \sqcup Q_\forall, \delta, q_{init})$ for $\Gamma = (V, E)$ given by (i).
As before, the reduced game $\gr$ is on the state set $Q = \{q_0, \ldots, q_k\} \cup \{q_{init}\} \cup \{\Smiley\} \subseteq Q_\exists$ together with a set $Q' \subseteq Q_\forall \cup Q_\exists$ which comes from the gadgets given in~\cref{fig:gadgets-i}. Once more, we have the initial transition $q_{init} \xrightarrow{ba} q_0$ in $\gr$, and the $\Smiley$ has an $\varepsilon$-self-loop, ensuring every play reaching $(\Smiley, [u])$ with a right-invertible $u$ is won by the existential player.

The gadgets differ from~\cref{fig:gadgets-ii} in two main aspects. Firstly, it is not sufficient to use one letter to mimic one counter in the 2-counter machine, because this time $c$ is not a blocking letter -- i.e. a word is right-invertible no matter how many $\bar{c}$s it has. As a result, in the $\texttt{zero}(1)$ gadget, the $\bar{c}$ self-loop in~\cref{fig:gadgets-ii} can be taken by the existential player forever to secure an infinite right-invertible play. Therefore, as we will continute to use $b$ to mimic the first counter as in the previous proof, this time we will use $bc$ to mimic the second counter, instead of $c$. For every transition $v_i \xrightarrow{\texttt{inc}(i)} v_j$ in $\mathcal{M}$, $\gr$ will have a transition $q_i \xrightarrow{\mu} q_j$ where $\mu = b$ if $i = 1$, and $\mu = bc$ if $i = 2$. The second difference is, this time we need to check non-emptiness, as well as the emptiness of a counter. This is because, $\bar{b}$ and $\bar{c}$ (and consequently $\bar{b}\bar{c}$) are not blocking as they were in the zero-test gadgets depicted in~\cref{thm:undec-ii}. We explained why $\bar{c}$ is not blocking, the reason $\bar{b}$ is not blocking, is due to how we mimic the counters. As the second counter is mimiced using both letters, a configuration $(0, 2)$ in $\mathcal{M}$ would correspond to a stack content $[babbcc]$ in $\gr$, which would allow $b$ to be popped in the reduced game, even though the first counter is empty in $\mathcal{M}$. We can see that mimicing the decrements using $\bar{b}$ and $\bar{b}\bar{c}$ transitions as in the proof of~\cref{thm:undec-ii} this time gives us incorrect behaviour.
Therefore, we need gadgets (\cref{fig:gadgets-i}) to mimic $\texttt{dec}(i)$ as well as $\texttt{zero}(i)$.

Intuitively, for every decrement (resp., zero-test) on $q_i$, the existential player moves from $q_i$ to $q_{i, j}$, by implementing the correct decrement via $\bar{b}$ for $\texttt{dec}(1)$ and $\bar{b}\bar{c}$ for $\texttt{dec}(2)$.
At $q_{i, j}$, the universal player either accepts that the decrement is legitimate (that, we decremented counter is non-empty) by moving to $q_j$, or challanges its legitimacy by moving to $q^{\exists}_{i,j}$. In both $\texttt{dec}(i)$ gadgets, existential player can decrement as many 
$b$'s and $bc$'s via the self-loops on $q^\exists_{i,j}$ as desired. Her aim is again
to reach $(q^\exists_{i,j}, [w])$ with some $w$ that has $ba$ as a suffix.%
It is not hard to see that she can obtain such a $[w]$ only if the decremented counter was non-zero at $q_i$; that is, the decrement was legitimate. Moreover, she cannot proceed to $\Smiley$ with any other stack content --i.e. $[bac^mb^{m+n}]$ where $m$ or $n$ is non-zero--  as $[b\bar{a}\bar{b}]$ and $[c\bar{a}\bar{b}]$ are both non-right-invertible. Here, we are using the number of $c$s and the fact that $c$ and $a$ do not commute to ensure that the number of popped $bc$s will be equal to the second counter's value, and the number of popped (remaining) $b$s will be equal to the first counter's value. Finally, that existential player cannot take the self-loops on $q^\exists_{i,j}$ forever, as the stack is not allowed to have a negative number of $b$s.

The zero-test gadgets are very similar to those in~\cref{fig:gadgets-ii}, except that, of course, the pop corresponding to the second counter is represented by $\bar{b}\bar{c}$ instead of $\bar{c}$. The correctness proof follows analogously.

As in the previous proof, the initial stack content does not play a role in the proof. Consequently, the reduction works to show the undecidability of both $\FICEG(\gr)$ and $\UICEG(\gr)$.  
\end{proof}

}

\label{afterbibliography}
\newoutputstream{pagestotal}
\openoutputfile{main.pagestotal.ctr}{pagestotal}
\addtostream{pagestotal}{\getpagerefnumber{afterbibliography}}
\closeoutputstream{pagestotal}

\newoutputstream{todos}
\openoutputfile{main.todos.ctr}{todos}
\addtostream{todos}{\arabic{@todonotes@numberoftodonotes}}
\closeoutputstream{todos}
\end{document}